\colorlet{MyBlue}{DodgerBlue!75!Black}
\colorlet{MyGreen}{DarkGreen!95!Black}
\numberwithin{equation}{section}  
\crefname{example}{Ex.}{Exs.}
\newcommand{\dd}{\:d}
\newcommand{\dif}{\dd}
\DeclareMathOperator*{\argmin}{argmin}
\DeclareMathOperator{\dom}{dom}
\renewcommand{\emptyset}{\varnothing}
\newcommand{\wlim}{\rightharpoonup}
\newcommand{\scrE}{\mathcal{E}}
\newcommand{\scrH}{\mathcal{H}}
\newcommand{\scrP}{\mathcal{P}}
\newcommand{\scrV}{\mathcal{V}}
\newcommand{\scrW}{\mathcal{W}}
\newcommand{\scrX}{\mathcal{X}}
\newcommand{\setH}{\mathsf{H}}
\newcommand{\Leb}{{\mathsf{Leb}}}
\newcommand{\R}{\mathbb{R}}
\newcommand{\N}{\mathbb{N}}
\DeclareMathOperator{\VI}{VI}
\DeclareMathOperator{\Sol}{SOL}
\DeclareMathOperator*{\essinf}{ess\,inf}
\theoremstyle{plain}
\newtheorem{theorem}{Theorem}
\newtheorem*{corollary*}{Corollary}
\newtheorem{lemma}[theorem]{Lemma}
\theoremstyle{definition}
\newtheorem{definition}[theorem]{Definition}
\newtheorem*{definition*}{Definition}
\newtheorem{assumption}{Assumption}
\theoremstyle{remark}
\newtheorem{remark}{Remark}
\newtheorem*{remark*}{Remark}
\newtheorem*{notation*}{Notational remark}
\numberwithin{theorem}{section}
\numberwithin{remark}{section}
\numberwithin{example}{section}
\DeclarePairedDelimiter{\abs}{\lvert}{\rvert}
\DeclarePairedDelimiter{\inner}{\langle}{\rangle}
\DeclarePairedDelimiter{\norm}{\lVert}{\rVert}
\newacro{HBA}[HBA]{Hessian-barrier algorithm}
\newacro{AHBA}[AHBA]{Adaptive-Hessian-barrier algorithm}
\newacro{VI}{variational inequality}
\newacro{DNL}{Dynamic Network Loading}
\newacro{DTA}{Dynamic Traffic Assignment}
\newacro{DUE}{Dynamic User Equilibrium}
\newacro{DSO}{Dynamic System Optimal}
\begin{document}

\title{Computing Dynamic User Equilibrium on Large-Scale Networks Without Knowing Global Parameters
}
\date{\today}

\author[1]{\small Duong Viet Thong}
\author[2,3]{\small Aviv Gibali}
\author[4]{\small Mathias Staudigl }
\author[5]{\small Phan Tu Vuong}

\affil[1]{\footnotesize Division of Applied Mathematics, Thu Dau Mot University, Binh
	Duong Province, Vietnam\\
	(\href{mailto:duongvietthong@tdmu.edu.vn}{duongvietthong@tdmu.edu.vn})}
\affil[2]{\footnotesize Department of Mathematics, ORT Braude College, P.O. Box 78, Karmiel 2161002, Israel}
\affil[3]{\footnotesize The Center for Mathematics and Scientific Computation, U. Haifa, Mt. Carmel, Haifa, Israel\\
	(\href{mailto:avivg@braude.ac.il}{avivg@braude.ac.il})}
\affil[4]{\footnotesize Department of Data Science and Knowledge Engineering, Maastricht University, P.O. Box 616, NL\textendash 6200 MD Maastricht, The Netherlands\\
	(\href{mailto:m.staudigl@maastrichtuniversity.nl}{m.staudigl@maastrichtuniversity.nl})}
\affil[5]{\footnotesize Mathematical Sciences, University of Southampton, Highfield Southampton SO17 1BJ, United Kingdom,
	(\href{mailto:T.V.Phan@soton.ac.uk}{T.V.Phan@soton.ac.uk})}

\maketitle

\begin{abstract}
Dynamic user equilibrium (DUE) is a Nash-like solution concept describing an equilibrium in dynamic traffic systems over a fixed planning period. DUE is a challenging class of equilibrium problems, connecting network loading models and notions of system equilibrium in one concise mathematical framework. Recently, Friesz and Han introduced an integrated framework for DUE computation on large-scale networks, featuring a basic fixed-point algorithm for the effective computation of DUE. In the same work, they present an open-source MATLAB toolbox which allows researchers to test and validate new numerical solvers. This paper builds on this seminal contribution, and extends it in several important ways. At a conceptual level, we provide new strongly convergent algorithms designed to compute a DUE directly in the infinite-dimensional space of path flows. An important feature of our algorithms is that they give provable convergence guarantees without knowledge of global parameters. In fact, the algorithms we propose are adaptive, in the sense that they do not need a priori knowledge of global parameters of the delay operator, and which are provable convergent even for delay operators which are non-monotone. We implement our numerical schemes on standard test instances, and compare them with the numerical solution strategy employed by Friesz and Han.

\end{abstract}
{\bf Keywords} {Dynamic Traffic Assignment;  Fixed Point Iteration; Strong Convergence}

\section{Introduction}
\label{intro}

This paper is concerned with a class of models known as \ac{DUE}. \ac{DUE} problems have been studied within the broader context of \ac{DTA}, which is concerned with modeling time-varying traffic flows consistent with established traffic flow theory. \ac{DTA} models are greatly influenced by Wardrop's equilibrium principle \citep{War52}, which is seen as a Nash-like equilibrium condition in an aggregative game: 
\begin{itemize}
\item[(a)] Wardrop's first principle, also known as the user optimality principle, states that road segments used in an equilibrium should display the same travel costs (i.e. delay);
\item[(b)] Wardrop's second principle, known as the system's optimality principle, assumes that drivers behave cooperatively, in making travel decisions so that the over system costs (aggregate delays) are minimized. 
\end{itemize}
Logically, the behavioral maxims (a) and (b) are disconnected, and a substantive literature in transportation research is concerned with the design of computational architectures aligning these potentially conflicting principles. Since the seminal work of \cite{MerNem78a,MerNem78b}, dynamic extensions of Wardrop's principles have paved the way to the introduction of notions like \ac{DUE} and \ac{DSO} models. For comprehensive reviews of \ac{DTA} models, we refer to \cite{PeeZil01,Jei07,Wan18}. 

In the last two decades there have been many efforts to develop a theoretically and sound formulation of \ac{DUE}, acceptable to modelers and practitioners alike. Analytical \ac{DUE} models tend to be of two varieties: (1) Route Choice (RC) DUE \citep{FriLuqTobWie89,MerNem78a,MerNem78b,ZhuMar00}, and (2) Simultaneous Route and Departure Choice (SRDC) DUE \citep{Frie93,BerFrieSuoTob01,FriKimKwoRig11,RanHalBoy96}. Both types of DUE rest on two pillars: 
\begin{enumerate}
\item A mathematical notion of equilibrium;
\item A model of network performance, based on some physical laws describing traffic flows. 
\end{enumerate}
The second pillar is known in the literature as \ac{DNL}. Equilibrium is usually expressed in terms of Wardrop's first principle. Mathematical approaches to describe equilibrium contain variational inequalities (VI) \citep{Frie93,ZhuMar00}, nonlinear complementarity problems \citep{PanHanRamUkk12,HanUkkDoa11}, differential variational inequalities \citep{PanSte08,FrieMoo06} and fixed point problems \citep{FriKimKwoRig11}. In this paper we choose the VI formulation of \ac{DUE}, and our aim is to advance computational techniques for the practical solution of \ac{DUE}. Our research builds on, and extends, recent advances in computational approaches to \ac{DUE} reported in \cite{HanEveFri19}. As is well known computing user equilibrium is a challenging task; Its main complication arises since it constitutes an interconnected computational procedure, coupling equilibrium computation with \ac{DNL}. The \ac{DNL}, which could be understood as the first layer of the problem, aims at describing the spatial and temporal evolution of traffic flows on a network that is consistent with established route and departure choices of travelers. This is done by formulating appropriate dynamics to flow propagation, flow conservation, link delay, and path delay on a network level. In general, \ac{DNL} models have the following components:
\begin{enumerate}
\item Some form of link and/or path dynamics;
\item An computationally-friendly relationship between flow/speed/density and link traversal time;
\item Flow propagation constraints;
\item A model of junction dynamics (Riemann Solvers) and delays;
\item A model of path traversal times, and 
\item Appropriate initial conditions.
\end{enumerate}

\ac{DNL} generates the \emph{path delay operator}, which is the key input when computing an equilibrium given the delays on user routes (travel costs). This is the second layer of the problem, and of main interest in this paper. At this layer one has to use some equilibrium solver, whose performance depends significantly on the information we have about the structural properties of the delay operator. However, since the delay operator is itself the result of a computational procedure, it is not available in closed form, and thus one is confronted essentially with a black-box upon which we can assume whatever we find useful, but the empirical validation of these assumptions is very hard. It is thus of utmost importance to have at our disposal efficiently implementable algorithms which are:
\begin{itemize}
\item[(i)] Adaptive to arrival of new information about unknown global parameters;
\item[(ii)] Provably convergent under mild monotonicity assumptions. 
\end{itemize}
We argue that, up to now, none of the perceived DUE solvers meet both of these criteria. To support this claim, we present Table \ref{tab:Algorithms}, where the current state-of-the-art in \ac{DUE} computation is summarized.\footnote{In this table we focus on algorithms acting directly on the infinite-dimensional Hilbert space formulation of \ac{DUE}. A much larger literature on this topic exists which is concerned with finite-dimensional approximations. In the parlance of numerical mathematics, the latter would correspond to a \emph{first discretize, then optimize} strategy. As the two approaches are quite different, it would not provide fair comparisons.}

 \begin{table}[t!]
\centering
\begin{tabular}{|c|c|c|c|c|}
 \hline
 \thead{Algorithm} & \thead{DUE Model}  & \thead{Assumptions} & \thead{Convergence} & \thead{References}\\
 \hline
 
Projected Gradient  & SRDT  & \makecell{Lipschitz cont. \\  strongly monotone} & strong  &  \cite{FriKimKwoRig11} \\
\hline

descent algorithm & SRDT & Co-coercive & weak  & \cite{SzeLo04} \\
\hline
Route-swapping &  RC DUE & monotone   &  weak & \cite{SzeLo06}\\
\hline
Route-swapping & SRDT DUE &\makecell{Continuous\\ monotone} & weak & \cite{HuaLam02}\\
\hline
Route-swapping & SRDT DUE & \makecell{Continuous \\ monotone} & weak  &  \cite{TiaHuaGao12}\\
\hline
Extragradient &  RC DUE & \makecell{ Lipschitz cont.\\ pseudo monotone} & weak & \cite{Long2013} \\
\hline
 Self-adaptive & SRDT DUE & D-property & weak  & \cite{HanSzeFrie15}\\
\hline
Proximal point  & SRDT DUE &  Dual solvable & weak  & \cite{FriHanLiuSze15}\\
\hline
FBF & SRDT DUE & \makecell{ Lipschitz cont.\\ pseudo monotone} & strong & \cite{DuvMeiStaVu20}\\
\hline 
Inertial-FBF & SRDT DUE & \makecell{ Lipschitz cont.\\ pseudo monotone} & strong & This paper\\
\hline
\end{tabular}
\caption{Computational algorithms for DUE (adapted from \cite{HanEveFri19}). The algorithms are arranged in an increasing order of generality of the monotonicity.}
\label{tab:Algorithms}
\end{table}
We infer from Table \ref{tab:Algorithms} that known algorithmic strategies for solving the \ac{DUE} problem require knowledge about the global Lipschitz constant and some sort of monotonicity of the path delay operator. Since the delay operator is not given to us in closed form, both assumptions are practically not verifiable. Algorithmic strategies which are provably convergent without explicit knowledge of these global properties, are thus to be seen as a very valuable contribution. 
 
\subsection{Our Contributions}
This paper makes a significant step-ahead relative to the perceived computational literature on \ac{DUE}, by describing two numerical algorithms acting directly in infinite-dimensional Hilbert spaces. Our algorithms share the following features: 
\begin{enumerate}
\item[(i)] \emph{Strong convergence} to a single user equilibrium;
\item[(ii)] \emph{Adaptive step-size} choices without the need to know global Lipschitz parameters of the delay operator;
\item[(iii)] Provably convergent under a plain \emph{pseudo-monotonicity} assumption on the path delay operator. 
\item[(iv)] Include \emph{inertial and relaxation effects} to potentially speed up the convergence.
\end{enumerate}
While items (ii) and (iii) don't need much motivation, our emphasis on strongly convergent methods seems to be somewhat pedantic at first sight, so it deserves some words of explanation. 

In infinite-dimensional settings strongly convergent iterative schemes are much more desirable than weakly convergent ones since strong convergence translates the physically tangible property that the energy $\norm{h_{n}-h^{\ast}}^{2}$ of the error between the iterate $h_{n}$ and a solution $h^{\ast}$ eventually becomes arbitrarily small. Of course, any numerical solution technique designed for solving a problem in infinite dimensions must be applied to a finite-dimensional approximation of the problem. Exactly in such situations strongly convergent methods are extremely powerful, because they guarantee stability with respect to numerical discretization. In fact, \cite{Guel91} demonstrated that strongly convergent schemes might even exhibit faster convergence rates as compared to their weakly convergent counterparts. It seems therefore fair to say that strong convergence is an extremely desirable property of solution schemes, with clearly observable physical consequences on the performance and stability of algorithms. As a matter of fact, \cite{FriKimKwoRig11} employs a projected gradient iteration of Halpern type \citep{halpern1967,BauCom16}, which forces trajectories to converge strongly to some \ac{DUE}. 

Adaptivity in the step-size policy frees us from any unavailable information about the global Lipschitz constant of the delay operator. It allows us to tune the step size on-the-fly and guarantees convergence for general pseudo-monotone operators with good performance properties.  

Operator splitting methods with inertia and relaxation have received quite some attention in recent years, see e.g. \cite{LorPoc15,IutHen19,AttCab19}. These schemes are motivated by Nesterov's accelerated method \citep{NesConvex}, and therefore the main motivation for inertial methods is to speed up the convergence rate. To the best of our knowledge this is the first time that inertial and relaxation effects are investigated in the context of \ac{DUE} computation and under weak pseudo-monotonicity assumptions.\\ 
\begin{remark}
In previous work \citep{DuvMeiStaVu20} investigated the \ac{DUE} with a strongly convergent FBF variant. This paper replaces and significantly extends our previous work by the explicit consideration of inertial effects. 
\end{remark}
\subsection{Organization of the paper}
Sections \ref{sec:DUE} and \ref{sec:DNL} describe user equilibrium and the DNL procedure we use in our numerical experiments. In setting up these two layers we follow closely \cite{HanEveFri19}. Section \ref{sec:algo} describes the algorithms we construct and investigate in this paper. Building on the MATLAB toolbox publicly available at \url{https://github.com/DrKeHan/DTA} and documented in \cite{HanEveFri19}. We report the outcomes of our experiments in Section \ref{sec:numerics}. Technical facts and proofs are organized in Sections \ref{sec:prelims} and \ref{sec:ProofMain}.

\section{Dynamic User Equilibrium}
\label{sec:DUE}
We introduce a few notations and terminologies for the ease of presentation below. 

\begin{itemize}
\item $\scrP$: set of paths in the network.
\item $\scrW$: set of origin-destination (O-D) pairs in the network.
\item $Q_{w}$: fixed O-D demand between $w\in\scrW$.
\item $\scrP_{w}$: subset of paths that connect O-D pair $w$.
\item $t$: continuous time parameter in the fixed time horizon $[t_{0},t_{1}]$.
\item $h_{p}(t)$: departure rate along path $p$ at time $t$.
\item $h(t)$: complete profile of departure rates $h(t)=\{h_{p}(t);p\in\scrP\}$.
\item $A_{p}(t,h)$: effective travel cost along path $p$ with departure time $t$ under the path profile $h$.
\item $\nu_{w}(h)$: minimum travel cost between O-D pair $w\in\scrW$ for all paths and all departure times.
\end{itemize}

\subsection{Formulation of DUE as a Variational inequality}
Let $[t_{0},t_{1}]$ be a fixed planning horizon. We are given a connected directed graph $G=(\scrV,\scrE)$ with finite set of vertices $\scrV$, representing traffic intersections (junctions) and arc set $\scrE$, representing road segments. A path $p$ in the graph $G$ is identified with a non-repeating finite sequence of links it traverses, i.e. $p=\{I_{1},I_{2},\ldots,I_{m(p)}\},$ where $m(p)$ is the number of links in this path. We denote the set of all paths by $\scrP$, and set $\setH:= \R^{\abs{\scrP}}$. We are interested in paths which connect a set of distinguished vertices acting as the \emph{origin-destination} (O-D) pairs in our graph. We are given $N$ distinct O-D pairs denoted as $w_{1},\ldots,w_{N}$, where each $w_{i}=(o_{i},d_{i})\in\scrV$. Call $\scrW:=\{w_{1},\ldots,w_{N}\}$ the collection of all O-D pairs, and let us denote the set of paths connecting the O-D pair $w$ by $\scrP_{w}\subseteq\scrP.$ For each O-D pair $w\in\scrW$ we are given an exogenous \emph{demand} $Q_{w}>0$; This represents the number of drivers who have to travel from the origin to the destination described by $w$. The list $Q=(Q_{w})_{w\in\scrW}$ is often called the \emph{trip table}. In DUE modeling, the single most crucial ingredient is the path delay operator, which maps a given vector of departure rates (path flows) $h$ to a vector of path travel times. We stipulate that path flows are square integrable functions over the planning horizon, so that $h_{p}\in L^{2}([t_{0},t_{1}];\R_{+})$ and $h=(h_{p};p\in\scrP)\in \scrH:= L^{2}([t_{0},t_{1}];\setH)$. To measure the delay of drivers on paths, we introduce the operator $D:\scrH\to\scrH,h\mapsto D(h)$, with the interpretation that $D_{p}(t,h)$ is the path travel time of a driver departing at time $t$ from the origin of path $p$, and following this path throughout. This operator is the result of some \ac{DNL} procedure, which is an integrated subroutine in the dynamic traffic assignment problem. See Section \ref{sec:DNL} for a description of the \ac{DNL} used in our computational experiments.\\

On top of path delays, we consider penalty terms of the form $\phi(t+D_{p}(t,h)-\tau),$ penalizing all arrival times different from the target time $\tau>0$ (i.e. the usual time of a trip on the O-D pair $w$). The function $\phi:[-\infty,\infty)\to[0,\infty]$ should be monotonically increasing with $\phi(a)>0$ for $a>0$ and $\phi(a)=0$ for $a\leq 0$. Define the \emph{effective delay operator} as 
\begin{equation}\label{eq:A}
A_{p}(t,h):=D_{p}(t,h)+\phi(t+D_{p}(t,h)-\tau).
\end{equation}
We thus obtain an operator $A:\scrH\to\scrH$, mapping each profile of path departure rates $h$ to effective delays $A(h)=\{A_{p}(t,h);t\in[t_{0},t_{1}]\}\in\scrH$.

We follow the perceived \ac{DUE} literature, and stipulate that \emph{Wardrop's first principle} holds: Users of the network aim to minimize their own travel time, given the departure rates in the system. Thus, a user equilibrium is envisaged, where the delays (interpreted as costs) of all travelers in the same O-D pair are equal, and no traveler can lower his/her costs by unilaterally switching to a different route. To put this behavioral axiom into a mathematical framework, we first formulate the meaning of "minimal costs" in the present Hilbert space setting. Recall the essential infimum of a measurable function $g:[t_{0},t_{1}]\to\R$ as 
$\essinf\{g(t): t\in[t_{0},t_{1}]\}=\sup\left\{x\in\R: \Leb(\{s\in[t_{0},t_{1}]:g(s)<x\})=0\right\},$ where $\Leb(\cdot)$ denoted the Lebesgue measure on the real line. Given a profile $h\in\scrH$, define 
\begin{align}
\nu_{p}(h)&:=\essinf\{A_{p}(t,h): t\in[t_{0},t_{1}]\}\qquad\forall p\in\scrP, \text{ and }\\
\ nu_{w}(h)&:= \min_{p\in\scrP_{w}}\nu_{p}(h)\qquad\forall w\in\scrW.
\end{align}
On top of minimal costs, we have to restrict the set of departure rates to functions satisfying a basic flow conservation property. Specifically, insisting that all trips are realized, we naturally define the set of feasible flows as
\begin{equation}
\scrX:=\left\{f\in\scrH: \sum_{p\in\scrP_{w}}\int_{t_{0}}^{t_{1}}f_{p}(t)\dif t=Q_{w}\quad\forall w\in\scrW\right\}. 
\end{equation}
The set of feasible flows $\scrX$ is sequentially closed and convex, but not sequentially compact (i.e. path departure rates are note a-priori assumed to be bounded as the above definition involves Lebesgue-integrable functions). We are now ready to give our first definition of user equilibrium. 
\begin{definition}
\label{def:DUE-V1}
A profile of departure rates $h^{\ast}\in\scrH$ is a \ac{DUE} if 
\begin{itemize}
\item[(a)] $h^{\ast}\in\scrX$, and 
\item[(b)] $h^{\ast}_{p}(t)>0,p\in\scrP_{w}\Rightarrow A_{p}(t,h^{\ast})=\nu_{w}(h^{\ast}).$
\end{itemize}
We denote by $\Omega\subset\scrX$ the (possibly empty) set of \ac{DUE}. 
\end{definition}
In \cite{Frie93} it is observed that the definition of \ac{DUE} can be formulated equivalently as a variational inequality $\VI(A,\scrX)$: A flow $h^{\ast}\in\scrX$ is a \ac{DUE} if 
\begin{equation}\label{eq:DUE}
\inner{A(h^{\ast}),h-h^{\ast}}\geq 0\qquad\forall h\in\scrX
\end{equation}
This notion of equilibrium is very useful, since it allows us to apply a large variety of algorithms to solve $\VI(A,\scrX)$, and in fact it can be seen as the basis of most of the computational approaches to \ac{DUE}. We now spell out sufficient conditions guaranteeing existence of \ac{DUE}. 
\begin{assumption}\label{ass:1}
\begin{itemize}
\item The penalty function $\phi:[t_{0},t_{1}]\to\R_{+}$ is continuous and there exists $\Delta>-1$ such that 
\begin{equation}
\phi(a)-\phi(b)\geq\Delta(a-b)\text{ for all }t_{0}\leq a<b\leq t_{1}.
\end{equation}
\item The \ac{DNL} satisfies the FIFO principle and each link has finite capacity.
\item The effective delay operator is weak-to-weak continuous on bounded subsets of $\scrX$. 
\end{itemize}
\end{assumption}

\begin{theorem}
Under Assumption \ref{ass:1} the \ac{DUE} problem \eqref{eq:DUE} has a solution, i.e. $\Omega\neq\emptyset$. 
\end{theorem}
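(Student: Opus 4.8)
The plan is to establish existence of a solution to the variational inequality $\VI(A,\scrX)$ in \eqref{eq:DUE} by verifying the hypotheses of a standard existence theorem for variational inequalities on closed convex subsets of a Hilbert space (e.g. the Brezis–Ky Fan–Browder type result, or Stampacchia's theorem for the monotone case, but here we will only need a pseudo-monotone/continuity version). The key structural points I would exploit are: (i) $\scrX$ is a nonempty, closed, convex subset of the Hilbert space $\scrH = L^{2}([t_{0},t_{1}];\setH)$; (ii) $\scrX$ is bounded — this is the crucial coercivity-replacement step — because each path flow is nonnegative and the total flow on each O-D pair is fixed to $Q_{w}$, so $\norm{h}_{\scrH}$ is controlled by $\sum_{w} Q_{w}$ together with the finite-horizon length $t_{1}-t_{0}$ (strictly speaking one needs the link-capacity assumption to pin down an $L^{2}$ bound rather than merely an $L^{1}$ bound; this is where the second bullet of Assumption \ref{ass:1} enters); and (iii) the map $A$ is weak-to-weak continuous on bounded subsets of $\scrX$, which is exactly the third bullet of Assumption \ref{ass:1}, obtained by combining the continuity of $\phi$ (first bullet) with the continuity properties of the DNL-generated delay operator $D$.

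First I would recall/state the abstract existence result: if $C$ is a nonempty, bounded, closed, convex subset of a real Hilbert space and $F\colon C\to\scrH$ is weak-to-weak(-star) continuous (equivalently, sequentially continuous from $(C,\text{weak})$ to $(\scrH,\text{weak})$), then $\VI(F,C)$ has a solution. Then I would apply this with $C=\scrX$ and $F=A$. The verification splits into three lemmas-in-passing: $\scrX$ nonempty (take, for each $w$, any nonnegative $L^{2}$ function on $[t_{0},t_{1}]$ with integral $Q_{w}$, e.g. the constant $Q_{w}/(t_{1}-t_{0})$, supported on the paths of $w$); $\scrX$ closed and convex (the defining constraints are affine and the nonnegativity/linear-integral constraints are preserved under weak limits, since $f\mapsto \int f_p$ is a bounded linear functional and the nonnegativity cone is weakly closed); and $\scrX$ bounded (from $h_p\ge 0$ and $\sum_{p\in\scrP_w}\int h_p = Q_w$ one bounds the $L^1$ norm; upgrading to an $L^2$ bound uses that feasible departure rates respect finite link capacities, so they are essentially bounded by a horizon- and network-dependent constant, whence the $L^{2}$ norm is bounded). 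Finally, weak-to-weak continuity of $A$ is assumed outright.

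The main obstacle — and the only place requiring genuine care rather than bookkeeping — is the boundedness of $\scrX$ in the $L^{2}$ norm: the flow-conservation constraints alone only give an $L^{1}$ bound, and $L^{1}$ bounds do not imply $L^{2}$ bounds, nor weak compactness in $L^{2}$. This is precisely why Assumption \ref{ass:1} includes the FIFO principle and finite link capacity: physically admissible departure-rate profiles produced by (or compatible with) the DNL cannot concentrate arbitrarily large mass on small time intervals, so they are uniformly bounded in $L^{\infty}$, hence in $L^{2}$ on the finite horizon. I would make this reduction explicit, possibly restricting the VI to the (still closed, convex, now also bounded-and-hence weakly compact) effective feasible set, so that the abstract existence theorem applies on a weakly compact convex domain. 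Once boundedness is in hand, the argument is the textbook fixed-point/Ky-Fan argument: for each $h$ the best-response set (the set of minimizers of $\inner{A(h),\cdot}$ over $\scrX$) is nonempty, convex and weakly compact, the correspondence $h\mapsto$ best response is weakly-closed thanks to the weak-to-weak continuity of $A$, and Kakutani–Fan–Glicksberg (in its weakly-compact Hilbert-space form) yields a fixed point, which is exactly a solution of \eqref{eq:DUE}. I would close by remarking that no monotonicity of $A$ is needed for existence — monotonicity (or pseudo-monotonicity) is only invoked later for the convergence analysis of the proposed algorithms.
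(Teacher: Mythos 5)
The paper itself gives no self-contained proof: it simply cites \cite{FriHanTau13}, where existence is established via a truncation argument precisely because the feasible set is \emph{not} weakly compact. Your plan, by contrast, rests on showing that $\scrX$ is bounded in $\scrH=L^{2}$, and that step is the genuine gap. The set $\scrX$ is defined only by the affine integral constraints $\sum_{p\in\scrP_{w}}\int_{t_{0}}^{t_{1}}f_{p}(t)\,dt=Q_{w}$ (plus nonnegativity), and the paper explicitly remarks that $\scrX$ is ``closed and convex, but not sequentially compact,'' i.e.\ path departure rates are not a priori bounded. Your proposed fix --- that FIFO and finite link capacity force feasible departure rates to be essentially bounded --- does not work: those assumptions constrain the \emph{link flows} produced by the network loading, not the departure controls $h_{p}$ themselves. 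Indeed, the origin point-queue model in Section \ref{sec:DNL} is designed exactly so that departure rates may exceed link capacity, with the excess absorbed into the queue $q_{o}(t)$. So a feasible $h$ can concentrate all of $Q_{w}$ on an arbitrarily short time interval, and $\norm{h}_{L^{2}}$ is unbounded over $\scrX$. Consequently the Kakutani--Fan--Glicksberg step has no weakly compact domain to act on, and the argument as written does not close.

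The correct route (the one in \cite{FriHanTau13}) is to solve the variational inequality on truncated sets $\scrX_{M}=\scrX\cap\{h: 0\le h_{p}(t)\le M\ \text{a.e.}\}$, which \emph{are} bounded, closed, convex, and hence weakly compact, apply a Browder/Ky Fan existence theorem there using the weak-to-weak continuity of $A$, and then use the FIFO principle, finite link capacities, and the growth condition on the penalty $\phi$ (the first two bullets of Assumption \ref{ass:1}) to show that for $M$ large enough a solution of the truncated problem cannot saturate the bound $M$ and is therefore a solution of the original problem \eqref{eq:DUE}. In other words, the physical assumptions enter not to bound $\scrX$, but to produce an a posteriori bound on \emph{equilibrium} departure rates. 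If you reorganize your argument around that truncation step, the rest of your outline (nonemptiness, closedness, convexity of the feasible sets, and the fixed-point machinery) is sound.
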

\begin{proof}
See \cite{FriHanTau13}.

\end{proof}

The construction of the delay operator requires a specification of a \ac{DNL} (i.e. traffic flow generation). We focus in this work on a macroscopic model of network loading based on fluid dynamic approximations of traffic flow on networks, known as the Lighthill-Whitham-Richards (LWR) model \citep{LW55,Ric56}. The LWR model is able to describe the physics of kinematic waves (e.g. shock waves, rarefaction waves), and allows network extension that capture the formation and propagation of vehicle queues as well as vehicle spill-back. We will formulate the LWR-based \ac{DNL} as a system of partial differential algebraic equations (PDAE), which uses vehicle density and queues as the unknown variables, and computes link dynamics, flow propagation, and path delay for any given vector of path departure rates. 

\subsection{The differential variational inequality formulation}
It has been observed in \cite{BerFrieSuoTob01} that \ac{DUE} can be equivalently formulated as a differential variational inequality \citep{PanSte08}. From an algorithmic point-of-view this relation is interesting as it allows us to use time-stepping methods to compute approximate user equilibria \cite{FriKimKwoRig11,FrieMoo06}. Independent of algorithmic considerations, we regard this identification as an important conceptual insight, and thus deserves some remarks here. The precise connection between DVI and DUE goes as follows:\\ 

Define the vector-valued function $x:[t_{0},t_{1}]\to\R^{\abs{\scrW}},t\mapsto x(t)=\{x_{w}(t);w\in\scrW\}$ as the state trajectory of a controlled dynamical system with the interpretation that $x_{w}(t)$ is the cumulative traffic up to time $t$ on paths connecting the origin-destination pair $w\in\scrW$. The definition of this state-variable requires that its dynamic evolution is described by the linear differential equation 
\begin{equation}\label{eq:ODE}
\frac{\dif}{\dif t}x_{w}(t)=\sum_{p\in\scrP_{w}}h_{p}(t)\qquad\text{a.e.  }t \in[t_{0},t_{1}].
\end{equation}
Additionally, it must satisfy the natural initial and boundary-value conditions
\begin{equation}\label{eq:IBVP}
(x_{w}(t_{0}),x_{w}(t_{1}))=(0,d_{w})\qquad\forall w\in\scrW.
\end{equation}
The differential variational inequality describing DUE reads then as follows: Find $h\in\scrH$ such that \eqref{eq:ODE}, \eqref{eq:IBVP} and the instantaneous optimality condition 
\begin{equation}
h(t)\in\Sol(\R^{\abs{\scrP}}_{+},A(t,\cdot))\quad \text{a.e. }t\in[t_{0},t_{1}]
\end{equation}
holds. Note that this defines a time-dependent complementarity system 
\[
0\leq h(t)\bot A(t,h(t))\geq 0\quad \text{a.e. }t\in[t_{0},t_{1}],
\]
which has been used in a DUE model with a simplified bottleneck structure in \cite{PanHanRamUkk12}. See \cite{FriKimKwoRig11} for a formal proof on the correctness of this interpretation.

\section{Dynamic Network Loading}
\label{sec:DNL}
The purpose of this section is to explain the dynamic network loading model used in our numerical investigation. We are considering the LWR model on networks, adopting the description in terms of a system of Differential Algebraic Equations (DAE). This formulation of the DNL procedure has the advantage over its mathematically equivalent description in terms of a system of partial differential algebraic equations that it avoids the use of partial differential operators, and thus is much more amenable to numerical discretization strategies.

\subsection{The Lighthill-Whitham-Richards link model}
\label{sec:LWR}
Network loading acts on the same oriented graph $G=(\scrV,\scrE)$ as in Section \ref{sec:DUE}, where links $I_{i}\in\scrE$ have a certain length measured by the interval $[a_{i},b_{i}]$. The within-link dynamics are captured by the scalar conservation law 
\begin{equation}\label{eq:LWR}
\partial_{t}\rho_{i}(t,x)+\partial_{x}\left[\rho_{i}(t,x)v_{i}(\rho_{i}(t,x))\right]=0\quad (t,x)\in[t_{0},t_{1}]\times[a_{i},b_{i}].
\end{equation}
The fundamental diagram $f_{i}(\rho)=\rho\cdot v_{i}(\rho)$ is assumed to be continuous, concave and vanishes at $\rho\in\{0,\rho^{jam}_{i}\}$, where $\rho^{jam}_{i}$ is the jam density on link $I_{i}$.  Moreover, there exists a unique global maximum of $f_{i}$ at the value $\rho_{i}^{c}$. We focus on the triangular fundamental diagram 
\begin{equation}
f_{i}(\rho)=\left\{\begin{array}{ll} v_{i}\rho & \text{if }\rho\in[0,\rho^{c}_{i}],\\
-w_{i}(\rho-\rho^{jam}) & \text{if }\rho\in(\rho^{c}_{i},\rho^{jam}_{i}]
\end{array}\right.
\end{equation}
where $v_{i},w_{i}>0$ denote the forward and backward kinematic wave speeds, respectively.

At junctions we need to make sure that relevant boundary conditions are satisfied to respect basic physical principles. Consider a junction with $m$ incoming and $n$ outgoing links. At each such junction, the following conservation property must hold: 
\begin{equation}\label{eq:Kirchhoff}
\sum_{i=1}^{m}f_{i}(\rho_{i}(t,b_{i}))=\sum_{j=1}^{n}f_{j}(\rho_{j}(t,a_{j}))\qquad\forall t\in[t_{0},t_{1}].
\end{equation}
This condition simply means that inflow into the junction equals outflow. However, this condition alone does not guarantee a unique flow profile at these $m+n$ links. Additional conditions, usually formulated in terms of Riemann solvers and demand/supply conditions must be imposed. We refer to \cite{BreEMS14,GarHePic16} for reviews. 

\subsection{The variational representation of link dynamics}
\label{sec:LaxHopf}
While \eqref{eq:LWR} captures within-link dynamics, the inter-link propagation of congestion requires a careful treatment of junction dynamics. The overall system of PDEs leads to a complex system of junction dynamics and conservation laws which is very hard to handle computationally. We follow a different approach here, which is more amenable to numerical computations. We briefly introduce a variational representation of the link dynamics, based on the generalized Lax-Hopf formula, originally developed in \cite{AubBaySP08,ClaBay10PartI,ClaBay10PartII}, which leads to a DNL procedure in terms of a system of differential algebraic equations (DAE). Compared to the flow-based approach described in Section \ref{sec:LWR}, the DAE based formulation has the following main advantages: (1) the primary variable is flow instead of density); (2) no partial differential operators are involved; (3) it introduces simplified boundary conditions. We only give a high-level description of this approach, detailed enough so that the reader is able to understand the mechanics of the numerical solver. A rigorous description can be found in \cite{GarHePic16}.\\

Consider the Moskowitz function $N_{i}(t,x)$ which measures the cumulative number of vehicles that have passed location $x$ along link $I_{i}$ by time $t$. The following identities hold: 
\begin{equation}
\partial_{t}N_{i}(t,x)=f_{i}(\rho_{i}(t,x)),\quad \partial_{x}N_{i}(t,x)=-\rho_{i}(t,x).
\end{equation}
It follows immediately that $N_{i}(t,x)$ satisfies the Hamilton-Jacobi equation 
\begin{equation}
\partial_{t}N_{i}(t,x)-f_{i}(-\partial_{x}N_{i}(t,x))=0\qquad x\in[a_{i},b_{i}],t\in[t_{0},t_{1}].
\end{equation}
Denote by $f^{in}_{i}(t)$ and $f^{out}_{i}(t)$ the link $I_{i}$ inflow and outflow. The cumulative link entering and exiting vehicle counts are defined as  
$$
\frac{\dif}{\dif t}N_{i}^{up}(t)=f_{i}^{in}(t),\quad \frac{\dif}{\dif t}N^{down}_{i}(t)=f_{i}^{out}(t),
$$
where "up" and "down" represent the upstream and downstream boundaries of the link, respectively. \cite{HanPicSze16} derive explicit formulae for the
link demand and supply based on a variational formulation known as the Lax-Hopf formula \cite{AubBaySP08,ClaBay10PartI,ClaBay10PartII}, as follows:
$$
D_{i}(t)=\left\{\begin{array}{cc} -
f^{in}_{i}(t-L_{i}/v_{i}) & \text{if } N^{up}_{i}(t-L_{i}/v_{i})=N^{dn}_{i}(t)\\
C_{i} & \text{if } N_{i}^{up}(t-L_{i}/v)>N^{dn}_{i}(t)
\end{array}\right.
$$
and 
$$
S_{i}(t)=\left\{\begin{array}{cc} 
f^{out}_{i}(t-L_{i}/w_{i}) & \text{if } N^{up}_{i}(t)=N^{dn}_{i}(t-L_{i}/w_{i})+\rho^{jam}_{i}L_{i}\\
C_{i} & \text{if } N^{up}_{i}(t)< N^{dn}_{i}(t-L_{i}/w_{i})+\rho^{jam}_{i}L_{i}.
\end{array}\right.
$$

where $L_{i}=b_{i}-a_{i}$ is the length of the link $I_{i}i$, $v_{i}=f'_{i}(0+)$ and $w_{i}=f'_{i}(\rho^{jam}_{i}-)$. These two relations express the link demand and supply, which are inputs of the junction model, in terms of $N^{up}$ and $N^{down}$. This means that one no longer has to compute the dynamics within the link, but focus instead on the cumulative counts at the two boundaries of the link. Note that, when discretizing the DNL in time, we immediately obtain the link transmission model \cite{Ype05}. In general, the approach just described gives rise to the link-based formulation of DNL \cite{HanPicSze16}. 

\paragraph{Junction Dynamics} 
In a path-based DNL procedure one must incorporate established routing information into the junction model. Such information is usually formulated by some behavioral assumption on drivers' preferences. In the numerical scheme we consider, such information is provided in terms of an endogenously given flow distribution matrix $W(t)=[w_{ij}(t)]$, where $w_{ij}(t)$ is the proportion of flow incoming into link $i$ and continuing by following link $j$ at a given junction. Abstractly, if $\Theta$ represents some junction model, we have the functional relationship
$$
\left(f^{out}(t),f^{in}(t)\right)=\Theta(D(t),S(t),W(t)),
$$
where $f^{out}(t)=(f_{i}^{out}(t))_{i=1,\ldots,m}$ and $f^{in}(t)=(f_{j}^{in}(t))_{j=1,\ldots,n}$, are the computed incoming and outgoing flows.

\paragraph{Dynamics at the origin nodes}
At the origin nodes, we employ a simple point-queue model, in the spirit of \citet{Vic69}. Let $o$ be a given origin node, and denote by $q_{o}(t)$ the volume of the point queue. Let link $j$ be connected to the origin. We assume that 
$$
\frac{\dif }{\dif t}q_{o}(t)=\sum_{p\in\scrP_{o}}h_{p}(t)-\min\{D_{0}(t),S_{j}(t)\},
$$
where $\scrP_{o}$ denotes the set of paths originating from $o$. The first term on the right represents the inflow into the queue, while the second term represents flow leaving the queue, modeling the demand at the origin as 
$$
D_{o}(t)=\left\{\begin{array}{cc} 
M & \text{ if }q_{o}(t)>0,\\
\sum_{p\in\scrP_{o}}h_{p}(t) & \text{ else}
\end{array}\right.
$$
taking $M$ to be a sufficiently large number, bigger than the flow capacity at link $j$.

\paragraph{Calculating path travel times}
The DNL procedure calculates the path travel times with given path departure rates. The path travel time is defined as link travel time plus possible queuing at the origin. We define the link exit time function $\lambda(t)$ implicitly as 
\begin{equation}
N^{up}(t)=N^{down}(\lambda(t)).
\end{equation}
For a path enumerated as $p=\{1,2,\ldots,K\}$, the path travel time $D_{p}(t,h)$ is calculated as 
$$
D_{p}(t,h)=\lambda_{s}\circ\lambda_{1}\circ \ldots \circ \lambda_{m}(t).
$$
 where $(f\circ g)(t)=f(g(t))$ denotes the composition of two functions. $\lambda_{o}(t)$ is the exit time function for the potential queuing at the origin $o$. 

\section{Strongly convergent fixed-point algorithms}
\label{sec:algo}

\subsection{Fixed Point formulation of DUE}
Once a \ac{DNL} procedure has been fixed, the effective delay operator $A(h)$ can be evaluated. The definition of \ac{DUE} allows us to construct a suitable fixed-point problem which is the basis for the design of iterative numerical schemes for computing \ac{DUE}. In fact, it is easy to see that $h^{\ast}\in\scrH$ is a path-departure rate profile corresponding to a \ac{DUE} if and only if the \emph{residual} 
$$
r_{\tau}(h)=h-P_{\scrX}(h-\tau A(h))
$$
is zero, i.e. $r_{\tau}(h^{\ast})=0,\tau >0$. Here, we call $P_{\scrX}(x)$ the orthogonal projection in $L^{2}$ onto the set $\scrX\subset\scrH$. A classical iterative scheme to find the roots of a nonlinear function is the Picard fixed-point iteration to localize a fixed point of the map $h\mapsto P_{\scrX}(h-\tau A(h))$. Under strong a-priori continuity and monotonicity assumptions on the effective delay operator $A$, the \emph{projected gradient} (a.k.a. \emph{forward-backward}) method, Algorithm \ref{alg:FB}, generates a sequence $\{h_n\}_{n\in\N}$ which will weakly converge to some DUE. 
\begin{algorithm}[t]
	\caption{FB for $\VI(A,\scrX)$.}
	\label{alg:FB}
 \begin{algorithmic}
	\STATE {\bfseries Input:} Effective delay operator $A:\scrH\to\scrH$, step size $\{\tau_{n}\}_{n\in\N}$, Initial point $h_{0}\in\scrX$, $N\geq 1$ stopping time.	
	\FOR{$n=0,1,\ldots,N$}
	\STATE obtain $h_{n}$ by running a DNL procedure; 
	\IF{Stopping condition not satisfied}
	\STATE Update 
\begin{equation}\label{eq:FB}
h_{n+1}=P_{\scrX}(h_{n}-\tau_{n} A(h_{n})).
\end{equation}
\ENDIF
\ENDFOR
\end{algorithmic}
\end{algorithm}
This iterative solver is used in the software package developed in \cite{HanEveFri19}, and has also been employed in many studies before. Weak convergence (see Definition \ref{def:convergence} in Section \ref{sec:prelims}) of the thus constructed sequence $\{h_{n}\}_{n\in\N}$ is known when the operator $A$ is inverse strongly monotone (\emph{co-coercive}) with modulus $\mu>0$
$$
\inner{A(x) - A(y),x-y} \ge \mu \|A(x)-A(y)\|^2 \quad \forall x,y \in \scrH, 
$$
 provided that the step sizes $\tau\in(0,2\mu)$. Note that co-coercivity is equivalent to Lipschitz continuity with Lipschitz constant $\frac{1}{\mu}$. Thus, for making method \eqref{eq:FB} a provably convergent algorithm, we need to know the Lipschitz constant to pin down an upper bound on the step sizes. \emph{Strong convergence} of $\{h_{n}\}_{n\in\N}$ requires even stronger uniform monotonicity assumption of the operator $A$ over the set $\scrX$ (Theorem 25.8 \cite{BauCom16}),\footnote{An operator $A:\scrH\to\scrH$ is called \emph{uniformly monotone} if there exists an increasing function $\omega:(0,\infty)\to[0,\infty)$, vanishing at zero, such that 
$$
\inner{A(h)-A(h'),h-h'}\geq\omega(\norm{h-h'})\qquad \forall h,h'\in\dom A.
$$
}
 or other modifications of the basic template \eqref{eq:FB} are needed. \cite{FriKimKwoRig11} present a strongly convergent variant of \eqref{eq:FB} using a Halpern-type modification of the basic scheme above. Both assumptions, Lipschitz continuity and uniform monotonicity, are very restrictive in the context of computing \ac{DUE}. While continuity of the effective delay operator has been established in the context of the LWR network loading procedure \cite{HanPicFri16}, monotonicity estimates are hardly available for realistic DNL procedures and not very likely to hold in practice. Therefore, strongly convergent algorithm which are provably convergent to a solution under mild monotonicity assumptions are highly desirable for modeling, optimization and simulation of traffic networks. 

\subsection{Computing \ac{DUE} under weak assumptions}
Our aim is to design and study alternative numerical schemes for computing \ac{DUE}, which require significantly less stringent a-priori assumptions on the delay operator, but still come with rigorous convergence guarantees. We summarize our working assumptions below, while Section \ref{sec:prelims} gathers precise mathematical definitions for the readers' convenience. 
\begin{assumption}
\label{ass:Lipschitz}
The delay operator $A:\scrH\to\scrH$ is sequentially weakly continuous and $L$-Lipschitz continuous on $\scrX$. However, we do not need to know $L$. 
\end{assumption}
To cope with the unavailable information about the Lipschitz constant, we construct \emph{adaptive} algorithms, and thus do not need information of this hardly available global parameter. Instead a simple and efficient update procedure of the step size parameters is proposed which depends on pointwise variations of the delay operator rather than global variations. This is a major advantage of the methods we propose here, both from a conceptual and practical point of view, as it allows us to decide step-sizes "online". 

The next assumption is concerned with the structural properties with impose on the delay operator.
\begin{assumption}
\label{ass:monotone}
The delay operator $A$ is pseudo-monotone on $\scrH$: For all $h_{1},h_{2}\in\scrH$, we have
\begin{equation}
\inner{A(h_{1}),h_{2}-h_{1}}\geq 0\Rightarrow \inner{A(h_{2}),h_{2}-h_{1}}\geq 0
\end{equation}
\end{assumption}
Pseudo-monotonicity is a significant weakening of the (strict) monotonicity required when applying the fixed point iteration scheme \eqref{eq:FB}. Some intuition for this concept can be given by considering the simpler case when the operator is integrable. Any smooth real-valued function $f:\scrH\to\R$ induces an operator $A:\scrH\to\scrH$ via its gradient $A(h)=\nabla f(h)$ (unique thanks to the Riesz representation theorem). Note that $f$ is (strictly) convex if and only if the gradient map is a (strictly) monotone operator. If $f$ is merely \emph{quasi-convex}, the gradient operator is pseudo-monotone and vice versa. Assumptions \ref{ass:1}-\ref{ass:monotone} are the standing hypothesis for the rest of this paper. Building on them, we now describe the numerical schemes we analyze. 

Our basic algorithmic design principle follows the \emph{forward-backward-forward} (FBF) splitting scheme, originally due to \cite{Tse00}. In its original form, it ensures that path flows will weakly converge to a \ac{DUE}, provided that the delay operator is monotone and Lipschitz continuous in the $L^{2}$ norm. In the special case of variational inequalities, it has been shown that pseudo-monotonicity suffices for weak convergence \cite{BotMerStaVuo19}. Actually, one can easily see that weak convergence holds for a large class of non-monotone VIs satisfying an angle property at the solution set \cite{DanLan15}, and this is the main reason why FBF is an attractive numerical solution scheme for \ac{DUE}. FBF updates a current path departure rate profile $h$ by first applying \eqref{eq:FB}, in order to produce the extrapolated search point $y=P_{\scrX}(h-\tau A(h))$ (first forward-backward step). It then performs another forward step in path space, by calling the DNL procedure at the just constructed extrapolation point $y$, and shifts density into directions where the difference in the ``travel costs'' between the current path flow $h$ and the new search point $y$ is large. Algebraically, this leads to the correction step $h^{+}=y+\tau(A(h)-A(y))$. We would like to emphasize that this correction step does not involve an additional projection onto the feasible set $\scrX$. This reduces the computational complexity of FBF relative to its close cousin the extragradient algorithm due to \cite{Kor76}, and speeds up the computations in practice whenever projections are expensive to implement (see \cite{BotMerStaVuo19} for extensive numerical evidence supporting this claim).  

In order to force strong convergence of the sequence of path departure rates $\{h_{n}\}_{n\in\N}$, we augment the scheme by an Halpern-type relaxation procedure. The pseudo-code of the resulting DUE solver is displayed in Algorithm \ref{alg:FBF}.

\begin{algorithm}[t]
	\caption{FBF for $\VI(A,\scrX)$.}
	\label{alg:FBF}
	 \begin{algorithmic}
	\STATE{\bfseries Input:} Effective Delay operator $A:\scrH\to\scrH$, step size $\{\tau_{n}\}_{\in\N}$, Sequences $\{\alpha_{n}\}_{n\in\N},\{\beta_{n}\}_{n\in\N}$, Initial point $h_{0}\in\scrX$.
\FOR{$n=0,1,\ldots,N$}
\STATE Obtain $h_{n}$ by running a DNL procedure; 
\IF{Stopping condition not satisfied}
\STATE		Compute 
\begin{eqnarray*}
&y_{n}&=P_{\scrX}(h_{n}-\tau_{n} A(h_{n})),\\
&z_{n+1}&=y_{n}+\tau_{n}(A(h_{n})-A(y_{n})),\\
&h_{n+1}&=(1-\alpha_{n}-\beta_{n})h_{n}+\beta_{n}z_{n+1}.
\end{eqnarray*}
\STATE Update the step size sequence 
			\begin{equation}\label{eq:adaptiveFBF}
			\tau_{n+1}=\left\{\begin{array}{cc} 
			\min\left\{\tau_{n},\frac{\mu\norm{y_{n}-h_{n}}}{\norm{A(y_{n})-A(h_{n})}}\right\} & \text{ if }A(y_{n})-A(h_{n})\neq 0,\\
			\tau_{n} & \text{ else} 
			\end{array}\right.
			\end{equation}
\ENDIF
\ENDFOR
\end{algorithmic}
\end{algorithm}
Algorithm \ref{alg:FBF} has been analyzed in \cite{DuvMeiStaVu20} in detail. In particular, we demonstrated strong convergence of the numerical scheme by proving Theorem \ref{th:FBF} below.

\begin{theorem}[Theorem 2, \cite{DuvMeiStaVu20}]
\label{th:FBF}
Suppose that Assumptions \ref{ass:1}-\ref{ass:monotone} are satisfied. Let $\{\alpha_n\}_{n\in\N}$ and $\{\beta_{n}\}_{n\in\N}$ be two real sequences in $(0,1)$, satisfying conditions 
\begin{equation}\label{eq:beta}
\{\beta_{n}\}_{n\in\N}\subseteq (b,1-\alpha_{n})\text{ for some }b>0,
\end{equation}
and 
\begin{equation}\label{eq:alpha}
\lim_{n\to\infty}\alpha_{n}=0\text{ and }\sum_{n=1}^{\infty}\alpha_{n}=\infty.
\end{equation}
Then the sequence $\{h_{n}\}$ generated by Algorithm \ref{alg:FBF} converges strongly to $h^{\ast}=\arg\min\{\norm{z}: z\in\Omega\}$.
\end{theorem}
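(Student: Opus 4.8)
\textbf{Proof plan for Theorem \ref{th:FBF}.}

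The plan is to combine the classical analysis of Tseng's forward-backward-forward splitting with the Halpern-type viscosity machinery that forces strong convergence. First I would establish that the adaptive step-size rule \eqref{eq:adaptiveFBF} produces a sequence $\{\tau_n\}$ that is nonincreasing and bounded below by a positive constant $\tau_{\min} \geq \min\{\tau_0, \mu/L\}$; this uses Assumption \ref{ass:Lipschitz} to bound the denominator $\norm{A(y_n)-A(h_n)} \leq L\norm{y_n-h_n}$, and hence $\tau_n \to \tau$ for some $\tau > 0$. With the step sizes controlled, the key quasi-Fej\'er estimate is the engine of the whole argument: for any $h^\ast \in \Omega$ one shows
\begin{equation*}
\norm{z_{n+1} - h^\ast}^2 \leq \norm{h_n - h^\ast}^2 - \left(1 - \mu^2\frac{\tau_n^2}{\tau_{n+1}^2}\right)\norm{y_n - h_n}^2.
\end{equation*}
This comes from expanding $\norm{z_{n+1}-h^\ast}^2$ with $z_{n+1} = y_n + \tau_n(A(h_n)-A(y_n))$, using the projection inequality characterizing $y_n = P_{\scrX}(h_n - \tau_n A(h_n))$, invoking pseudo-monotonicity of $A$ (Assumption \ref{ass:monotone}) at the solution $h^\ast$ to kill the cross term $\inner{A(y_n), y_n - h^\ast} \geq 0$, and finally absorbing $\norm{A(h_n)-A(y_n)}^2 \leq \mu^2 \tau_n^{-2}\tau_{n+1}^{-2}\cdots$ — more precisely $\tau_{n+1}\norm{A(h_n)-A(y_n)} \leq \mu\norm{y_n-h_n}$ by the definition of $\tau_{n+1}$. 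Since $\tau_n/\tau_{n+1} \to 1$, the coefficient $1 - \mu^2\tau_n^2/\tau_{n+1}^2$ is eventually positive (bounded below by some $\delta > 0$ once $\mu < 1$ is assumed, which it implicitly is for the recursion to be meaningful).

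Next I would feed this into the Halpern recursion $h_{n+1} = (1-\alpha_n-\beta_n)h_n + \beta_n z_{n+1}$, which I rewrite as a convex-type combination anchored implicitly at $0$ (the anchor showing up through the $\alpha_n h_n$ that is \emph{not} replaced — note the missing term is effectively $\alpha_n \cdot 0$). Using the convexity of $\norm{\cdot}^2$ and the estimate above, I obtain a recursion of the form
\begin{equation*}
\norm{h_{n+1} - h^\ast}^2 \leq (1-\alpha_n)\norm{h_n - h^\ast}^2 + \alpha_n \varepsilon_n - \beta_n\left(1 - \mu^2\tfrac{\tau_n^2}{\tau_{n+1}^2}\right)\norm{y_n-h_n}^2 + (\text{small terms}),
\end{equation*}
where $\varepsilon_n$ collects inner-product terms involving $\inner{-h^\ast, h_{n+1}-h^\ast}$ or similar. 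The standard two-case argument of Maing\'e / Xu then applies: either $\{\norm{h_n-h^\ast}^2\}$ is eventually monotone, in which case it converges and the negative term forces $\norm{y_n - h_n} \to 0$, whence every weak cluster point of $\{h_n\}$ lies in $\Omega$ (using sequential weak continuity of $A$ from Assumption \ref{ass:Lipschitz} together with the VI characterization and a standard demiclosedness-type argument); or one passes to a subsequence via the Maing\'e index-selection lemma. Combining $\norm{y_n-h_n}\to 0$, weak subsequential convergence to $\Omega$, and $\alpha_n\to 0$, $\sum\alpha_n = \infty$, the Xu/Halpern lemma yields $\limsup_n \inner{-h^\ast, h_n - h^\ast} \leq 0$ with $h^\ast = P_\Omega(0) = \arg\min\{\norm{z}:z\in\Omega\}$, and then $\norm{h_n - h^\ast}^2 \to 0$.

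The main obstacle, and the step deserving the most care, is handling the interplay between the \emph{adaptive, non-constant} step sizes $\tau_n$ and the Halpern relaxation simultaneously: the Fej\'er-type coefficient $1 - \mu^2\tau_n^2/\tau_{n+1}^2$ can be momentarily negative or zero before the step sizes stabilize, so one cannot claim strict decrease at every iteration. This is circumvented by noting $\tau_n \to \tau > 0$ so the bad coefficient occurs only finitely often (or is controlled by a summable perturbation), allowing the asymptotic arguments to proceed from a sufficiently large index. A second delicate point is verifying that weak cluster points of $\{h_n\}$ solve $\VI(A,\scrX)$ without monotonicity — here one leans on pseudo-monotonicity plus sequential weak continuity of $A$, exactly as in \cite{BotMerStaVuo19}, rather than on any Minty-type reformulation. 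The remaining ingredients — the projection inequality, convexity of the squared norm, and the Maing\'e/Xu real-sequence lemmas — are routine and I would cite them from \cite{BauCom16}.
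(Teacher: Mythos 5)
Your plan is essentially the paper's own argument: the paper does not reprove Theorem \ref{th:FBF} (it cites \cite{DuvMeiStaVu20}), but the detailed proof it gives for the inertial variant (Theorem \ref{th:main}, via Lemmas \ref{lem:lambda}, \ref{003}, \ref{Claim1}--\ref{Claim3} and the two-case Saejung--Yotkaew/Xu argument) follows exactly the route you describe, and your quasi-Fej\'er estimate for $z_{n+1}$ is the correct non-inertial specialization of \eqref{xx11}. The only point to correct is your closing remark that the cluster-point step avoids ``any Minty-type reformulation'': for a merely pseudo-monotone $A$ one cannot pass to the limit in $\inner{A(y_{n_k}),x-y_{n_k}}$ directly (weak convergence of both factors gives nothing), so the paper's Lemma \ref{003} first converts the asymptotic inequality into $\inner{A(x),x-z}\geq 0$ for all $x\in\scrX$ via the $\epsilon_k v_{N_k}$ construction and then invokes precisely the Minty-type characterization of Lemma \ref{002} (Cottle--Yao) to conclude $z\in\Omega$; the reference \cite{BotMerStaVuo19} you lean on does the same, so your demiclosedness step should be routed through that lemma rather than around it.
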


Departing from here, our aim in this paper is to significantly extend our previous work by designing a new FBF-based inertial algorithm, which meets all the desiderata spelled out in the introduction: i) Adaptive step-sizes, ii) Weak monotonicity, and iii) strong convergence. 

To achieve a possible convergence acceleration and to meet conditions i)-iii), we include relaxation and inertial effects into our algorithm. To the best of our knowledge, this is the first available, provably convergent, relaxed-inertial splitting algorithm for computing \ac{DUE}.

The basic idea behind inertial algorithms is to use information accumulated from past iterations in order to introduce momentum. This is achieved by computing the extrapolated point $z=h+\alpha(h-h')$ in the first step of each iteration. The introduction of momentum is classical, and can be traced back to the heavy-ball method of Polyak \cite{Pol64}. We adapt momentum by injecting relaxations steps in a disciplined way to force the trajectory to converge strongly to a \ac{DUE}. The so-constructed new strongly convergent method, to be called the \emph{inertial forward-backward-forward} (IFBF) algorithm, is displayed in Algorithm \ref{alg:IFBF}.

\begin{algorithm}[t]
	\caption{IFBF for $\VI(A,\scrX)$.}
	\label{alg:IFBF}
 \begin{algorithmic}
\STATE{\bfseries Input:} Effective delay operator $A:\scrH\to\scrH$, step size $\tau_0>0$ and constants $\lambda, \mu \in (0,1)$. Sequences $\{\epsilon_{n}\}_{n\in\N},\{\beta_{n}\}_{n\in\N}$.
Initial point $h_{-1},h_{0}\in\scrX$.
\FOR{$n=0,1,\ldots,N$}
\STATE obtain $h_{n}$ via a DNL procedure
\IF{Stopping condition not satisfied}
\STATE Compute 
\begin{eqnarray*}
&w_{n}&=(1-\beta_{n})[h_{n}+\alpha_{n}(h_{n}-h_{n-1})]\\
&y_{n}&=P_{\scrX}(w_{n}-\tau_{n} A(w_{n}))\\ 
&h_{n+1}&=(1-\lambda)w_{n}+\lambda(y_{n}+\tau_{n}(A(w_{n})-A(y_{n})))
\end{eqnarray*}
\STATE Update the step size 
			\begin{equation}\label{stepsize}
			\tau_{n+1}=\left\{\begin{array}{cc} 
			\tau_{n} & \text{ if }A(w_{n})=A(y_{n})\\
			\min\{\tau_{n},\frac{\mu\norm{w_{n}-y_{n}}}{\norm{A(w_{n})-A(y_{n})}}\} & \text{else}
			\end{array}\right.
			\end{equation}
\STATE Update the inertia parameter
			\begin{equation}
			0\le \alpha_{n+1}\le 
			\left\{\begin{array}{cc} 
			\min\bigg\{\alpha, \frac{\epsilon_{n+1}}{\|h_{n+1}-h_{n}\|}\bigg\} & \text{if $h_{n+1}\ne h_{n}$}, \label{vvxvx1} \\
			\alpha &\text{ otherwise.}\\
			\end{array}\right.
			\end{equation}
\ENDIF
\ENDFOR
\end{algorithmic}
\end{algorithm}
In the convergence analysis of Algorithm \ref{alg:IFBF} it turns out that any positive sequences  $\{\epsilon_{n}\}_{n\in\N},\{\beta_{n}\}_{n\in\N}\subset (0,1)$, satisfying 
\begin{equation}\label{para}
\lim_{n\to\infty}\beta_n=0, \quad\sum_{n=1}^\infty \beta_n=\infty \text{  and  } \lim_{n\to\infty}\dfrac{\epsilon_{n}}{\beta_n}=0
\end{equation}
are admissible for strong convergence of the sequence of path flows $\{h_{n}\}_{n\in\N}$ generated by this Algorithm. The sequence $\{\tau_{n}\}_{n\in\N}$ has the same role as the step size $\lambda$ in the basic fixed point iteration \eqref{eq:FB}. Hence, we have to choose it small enough to ensure convergence (theoretically smaller than the reciprocal of the Lipschitz constant of the delay operator). Nevertheless we can realize IFBF without any a-priori knowledge of the Lipschitz constant by implementing the adaptive step-size policy \eqref{stepsize}. As we will see in Lemma \ref{lem:lambda}, the step size sequence $\{\tau_{n}\}_{n\in\N}$ has a limit and
\[
\lim_{n\to\infty}\tau_{n} = \tau \geq \bar{\tau} :=\min\left\{\frac{\mu}{L},\tau_{0}\right\}. 
\]
The parameter $\alpha$ can be any constant in $(0,1)$. The main theoretical result of this paper reads as follows. 

\begin{theorem}\label{th:main}
Let Assumptions \ref{ass:1}-\ref{ass:monotone} be in place. Then the sequence $\{h_{n}\}_{n\geq 0}$ generated by Algorithm \ref{alg:IFBF} converges strongly to the
minimum norm solution $h^{\ast}=\arg\min\{\norm{z}:z\in\Omega\}$.
\end{theorem}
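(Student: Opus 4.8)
The plan is to follow the standard Halpern-type strong-convergence template, adapted to the relaxed-inertial FBF iteration. Throughout, fix $h^{\ast}=\arg\min\{\norm{z}:z\in\Omega\}$, which exists and is unique since $\Omega$ is nonempty (by the existence theorem), closed and convex. The core of the argument rests on three pillars: (1) a quasi-Fej\'er / energy inequality relating $\norm{h_{n+1}-h^{\ast}}$ to $\norm{h_{n}-h^{\ast}}$ and $\norm{w_{n}-h_{n}}$; (2) control of the inertial perturbation via the choice \eqref{vvxvx1} and the summability built into \eqref{para}; and (3) a case analysis (the ``Mainge'' or ``Xu'' technique) that upgrades the resulting recursion to strong convergence even when the sequence is not monotone. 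I would first record the behavior of the adaptive step size: as asserted in the excerpt (Lemma~\ref{lem:lambda}), $\{\tau_{n}\}_{n\in\N}$ is nonincreasing, bounded below by $\bar\tau=\min\{\mu/L,\tau_{0}\}>0$, hence convergent to some $\tau>0$. The defining rule \eqref{stepsize} then yields the key norm estimate $\tau_{n}\norm{A(w_{n})-A(y_{n})}\le\mu\norm{w_{n}-y_{n}}$ for all $n$, which is what makes the FBF correction step contractive.

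**Key steps, in order.** First I would establish the one-step inequality for the FBF core: writing $t_{n}=y_{n}+\tau_{n}(A(w_{n})-A(y_{n}))$, for any $h^{\ast}\in\Omega$ one gets, using pseudo-monotonicity of $A$ at $h^{\ast}$ (so $\inner{A(y_{n}),y_{n}-h^{\ast}}\ge0$), the projection characterization of $y_{n}$, and the step-size estimate above,
\[
\norm{t_{n}-h^{\ast}}^{2}\le\norm{w_{n}-h^{\ast}}^{2}-(1-\mu^{2})\norm{w_{n}-y_{n}}^{2}.
\]
Combining with $h_{n+1}=(1-\lambda)w_{n}+\lambda t_{n}$ and the identity for convex combinations gives
\[
\norm{h_{n+1}-h^{\ast}}^{2}\le\norm{w_{n}-h^{\ast}}^{2}-\lambda(1-\mu^{2})\norm{w_{n}-y_{n}}^{2}.
\]
Next I would unravel $w_{n}=(1-\beta_{n})[h_{n}+\alpha_{n}(h_{n}-h_{n-1})]$. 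Because $\alpha_{n}\norm{h_{n}-h_{n-1}}\le\epsilon_{n}$ by \eqref{vvxvx1}, the inertial term is controlled, and because $\lim\beta_{n}=0$ the Halpern anchor (here toward $0$, which is why the limit is the minimum-norm solution) produces the standard decomposition
\[
\norm{h_{n+1}-h^{\ast}}^{2}\le(1-\beta_{n})\norm{h_{n}-h^{\ast}}^{2}+\beta_{n}\delta_{n}-\lambda(1-\mu^{2})(1-\beta_{n})^{2}\norm{w_{n}-y_{n}}^{2}+\gamma_{n},
\]
where $\gamma_{n}$ collects terms that are $o(\beta_{n})$ or summable (using $\epsilon_{n}/\beta_{n}\to0$ and boundedness of $\{h_{n}\}$, which is obtained first from the recursion), and $\delta_{n}$ is a bounded sequence built from inner products with $-h^{\ast}$. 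Then I would invoke the two-case argument: either $\norm{h_{n}-h^{\ast}}$ is eventually monotone, in which case $\norm{w_{n}-y_{n}}\to0$, whence $y_{n}-h_{n}\to0$ and $w_{n}-h_{n}\to0$, and a weak-sequential-limit argument using Assumption~\ref{ass:Lipschitz} (weak continuity of $A$) together with the demiclosedness-type characterization of $\VI(A,\scrX)$ shows every weak cluster point lies in $\Omega$; combined with $\limsup\inner{-h^{\ast},w_{n}-h^{\ast}}\le0$ and Lemma-type result ``$a_{n+1}\le(1-\beta_{n})a_{n}+\beta_{n}c_{n}$ with $\limsup c_{n}\le0$ implies $a_{n}\to0$'' (Xu's lemma), this gives $h_{n}\to h^{\ast}$. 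Or the sequence is not eventually monotone, in which case one passes to the Mainge subsequence $\{n_{k}\}$ along which $\norm{h_{n_{k}}-h^{\ast}}\le\norm{h_{n_{k}+1}-h^{\ast}}$; on this subsequence the negative term still forces $\norm{w_{n_{k}}-y_{n_{k}}}\to0$, one runs the same weak-cluster-point analysis, and the Mainge inequalities propagate the conclusion to the full sequence.

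**Main obstacle.** The routine parts are the algebraic one-step estimate and the bookkeeping of the error terms $\gamma_{n}$. The genuinely delicate point is verifying that weak cluster points of $\{w_{n}\}$ (equivalently $\{h_{n}\}$, since $w_{n}-h_{n}\to0$) actually solve $\VI(A,\scrX)$: this needs the pseudo-monotonicity Assumption~\ref{ass:monotone} and the sequential weak continuity of $A$ on $\scrX$ from Assumption~\ref{ass:Lipschitz}, invoked through the characterization $h^{\ast}\in\Omega\iff\inner{A(h),h-h^{\ast}}\ge0\ \forall h\in\scrX$ (the Minty-type reformulation valid under pseudo-monotonicity), and one must be careful that $y_{n}-w_{n}\to 0$ is used to transfer the variational inequality from $y_{n}$ to the limit. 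A secondary subtlety is the interaction of the \emph{two} relaxation mechanisms — the fixed $\lambda$ in the FBF step and the vanishing $\beta_{n}$ in the inertial/anchor step — which must be disentangled so that the coefficient $\lambda(1-\mu^{2})(1-\beta_{n})^{2}$ of the ``good'' negative term stays bounded away from $0$; since $\lambda,\mu\in(0,1)$ are fixed and $\beta_{n}\to0$, this holds for $n$ large, but it needs to be stated explicitly. The full details of both cases would be carried out in Section~\ref{sec:ProofMain}.
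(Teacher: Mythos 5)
Your proposal is correct and follows essentially the same route as the paper's proof: the same FBF one-step energy inequality (the paper's coefficient is $\lambda\bigl(1-\mu\tfrac{\tau_n}{\tau_{n+1}}\bigr)\bigl(2-\lambda+\lambda\mu\tfrac{\tau_n}{\tau_{n+1}}\bigr)$, which plays the role of your $\lambda(1-\mu^{2})$ and is likewise only guaranteed positive for large $n$ since $\tau_n/\tau_{n+1}\to 1$), the same control of the inertial perturbation via $\alpha_n\norm{h_n-h_{n-1}}\le\epsilon_n$ together with $\epsilon_n/\beta_n\to 0$, the same Minty-type characterization plus sequential weak continuity to show weak cluster points lie in $\Omega$, and the same Xu/Maing\'e two-case closing argument (the paper packages the non-monotone case as the Saejung--Yotkaew lemma). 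No substantive gap.
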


\section{Numerical Experiments}
\label{sec:numerics}
We present preliminary computational examples of the simultaneous route-and-departure-time dynamic user equilibrium on the Nguyen network \citep{Ngu84} and the Sioux falls network. Detailed network parameters, including coordinates of nodes and link attributes, are sourced and adapted from \cite{HanEveFri19}. Given that our \ac{DUE} and DNL formulations are path-based, enumeration of paths was applied to generate the path set using the Frank-Wolfe algorithm. 

We apply Algorithm \ref{alg:FBF} and Algorithm \ref{alg:IFBF} with the embedded DNL procedure based on a time-stepping scheme discretizing the PDAE formulation described in Section \ref{sec:DNL}. We compare our method with the projected gradient algorithm \eqref{eq:FB}, as implemented in the MATLAB toolbox documented in \cite{HanEveFri19}.\footnote{The Matlab code is retrieved from \url{https://github.com/DrKeHan/DTA}.}
\begin{table}[h!]
\begin{center}
\begin{tabular}{|c|c|c|}
\hline
 &  Nguyen Network  & Sioux Falls   \\ \hline
No. of links & 19  & 76   \\ \hline
No. of nodes & 13  & 24  \\ \hline
No. of O-D pairs & 4  & 528   \\ \hline
No. of paths & 24 & 6,180 \\ 
\hline 
\end{tabular}
\caption{Key attributes of the test networks}
\label{tab:networks}
\end{center}
\end{table}
As remarked in \cite{HanEveFri19}, projected gradient requires strong monotonicity to ensure norm convergence, whereas all our methods are provably strongly convergent by means of Theorem \ref{th:FBF} and Theorem \ref{th:main}. All computations reported in this section were performed using MATLAB (R2018a) on a Lenovo x64 Laptop with Intel Core i5 processor with 1.6 GHz and 8GB of RAM. 

\begin{figure}
    \centering
       \includegraphics[width=1\textwidth]{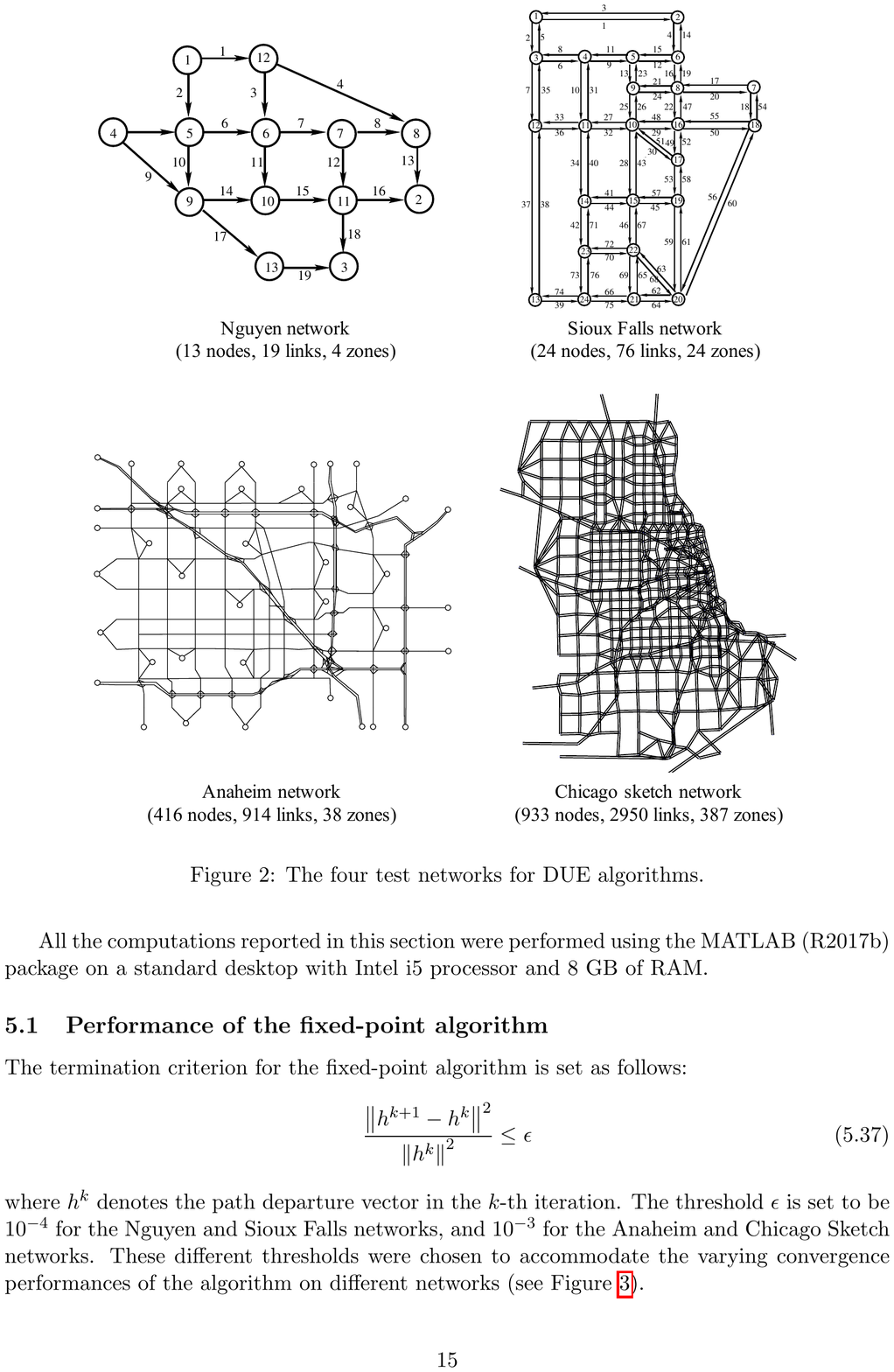}
	\caption{The Nguyen and Sioux Falls network.}
	\label{fig:networks}
	\end{figure}
\subsection{Performance of the Algorithms}
We run all three methods for fixed number of iterations and report the last iterate of the algorithm ($h^{Final}$), the corresponding effective delay operator ($A(h^{Final})$), a numerical merit function (GAP), as well as a measure for the speed of convergence. The construction of our numerical merit function follows \cite{HanEveFri19}. It is designed to measures the distance to equilibrium via the following version of a gap function 
\begin{align}\label{eq:gap}
\text{GAP}_{w}=&\max\{A_{p}(t,h^{Final}):t\in[t_{0},t_{1}],p\in\scrP_{w}\text{ s.t. }h^{Final}_{p}(t)>0\}\\
&-\min\{A_{p}(t,h^{Final}):t\in[t_{0},t_{1}],p\in\scrP_{w}\text{ s.t. }h^{Final}_{p}(t)>0\},
\nonumber
\end{align}
for all $w\in\scrW$. Hence, $\text{GAP}_{w}$ represents the range of travel costs experienced by all drivers in the given O-D pair $w\in\scrW$. In fact, it is clear that $\text{GAP}_{w}\geq 0$, and in an exact \ac{DUE}, the gap should be zero for all O-D pairs, justifying the interpretation of GAP as a numerical merit function.\\
\begin{figure}[h!]
\begin{center}$
\begin{array}{lll}
\includegraphics[width=40mm]{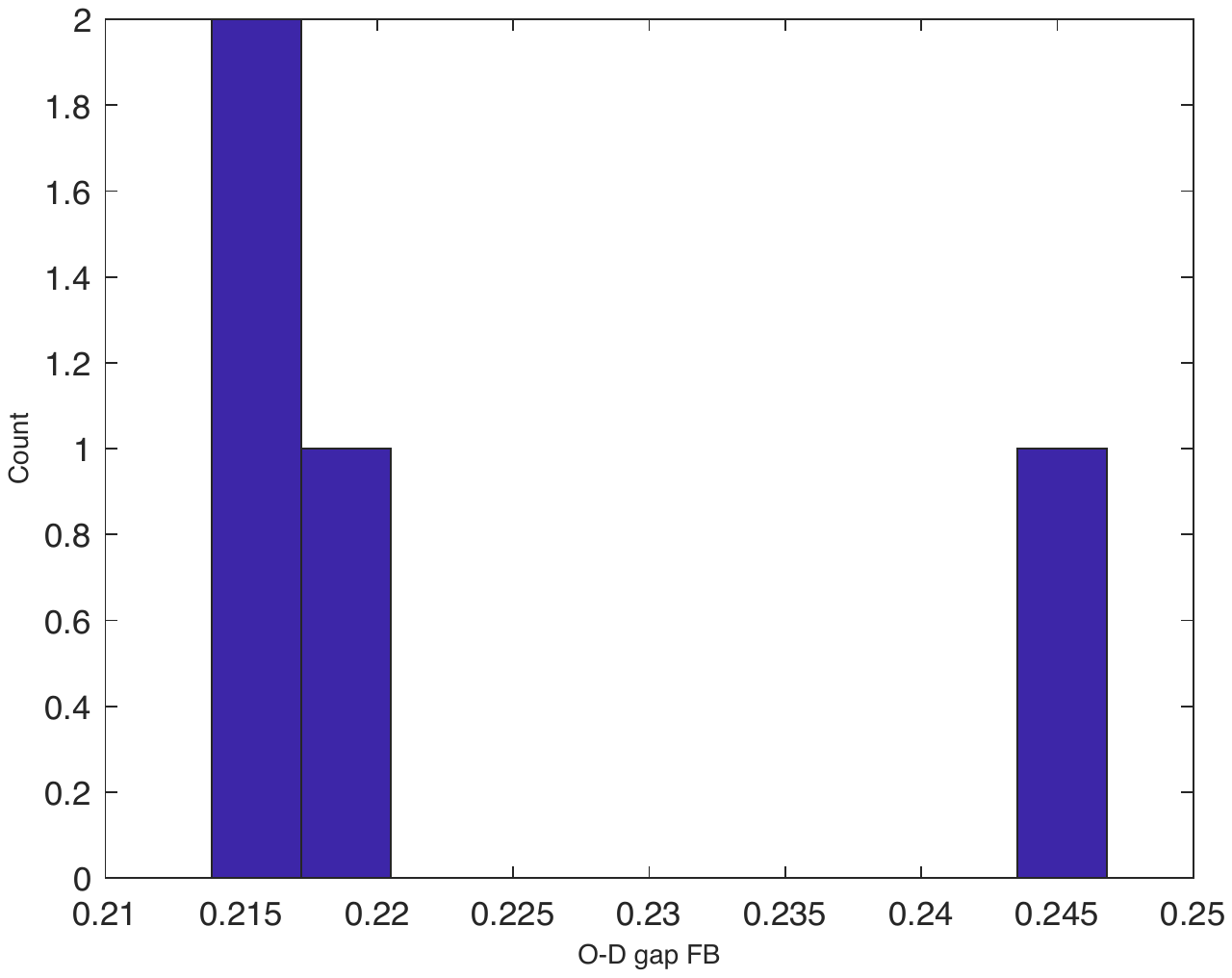}&
\includegraphics[width=40mm]{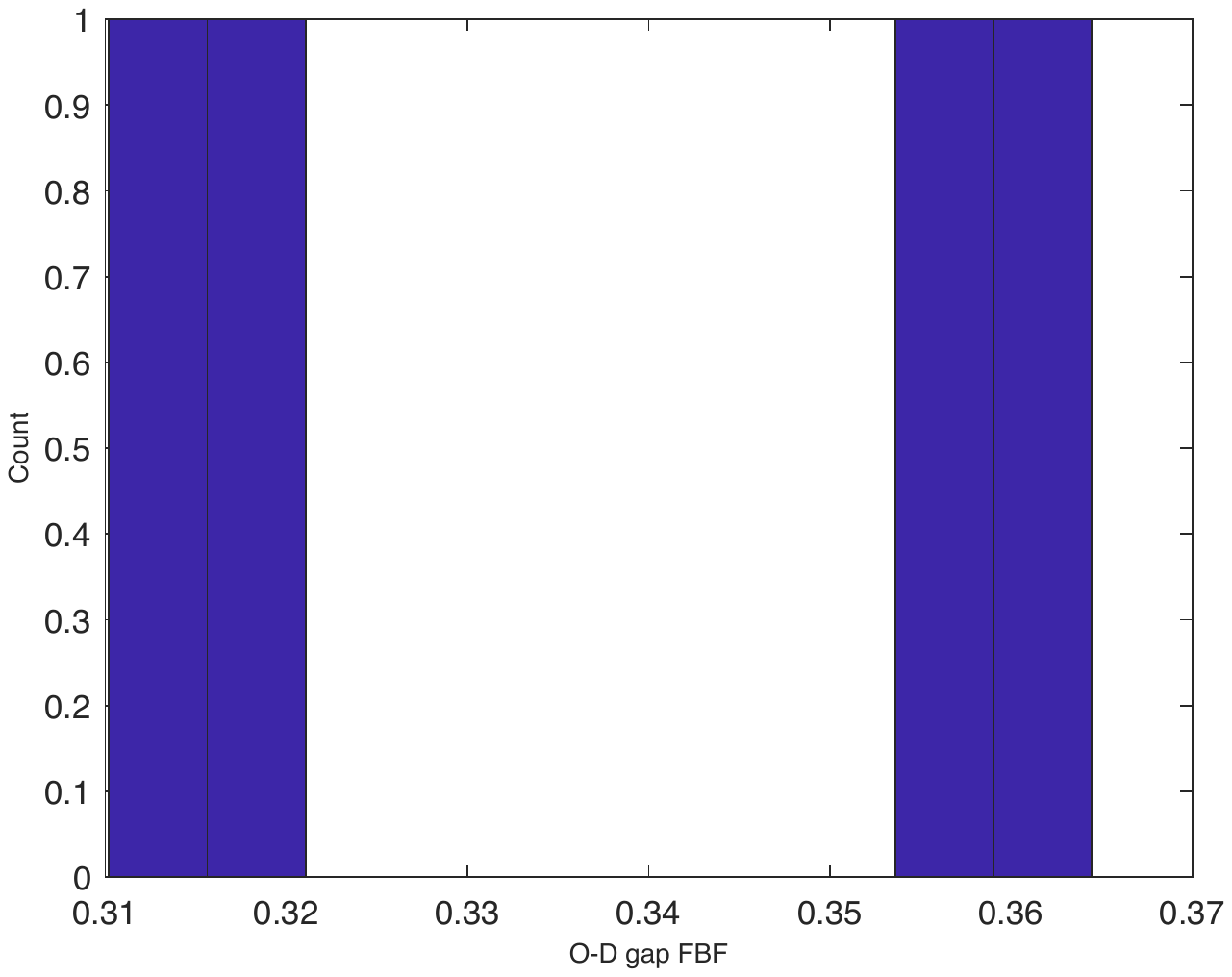}&
\includegraphics[width=40mm]{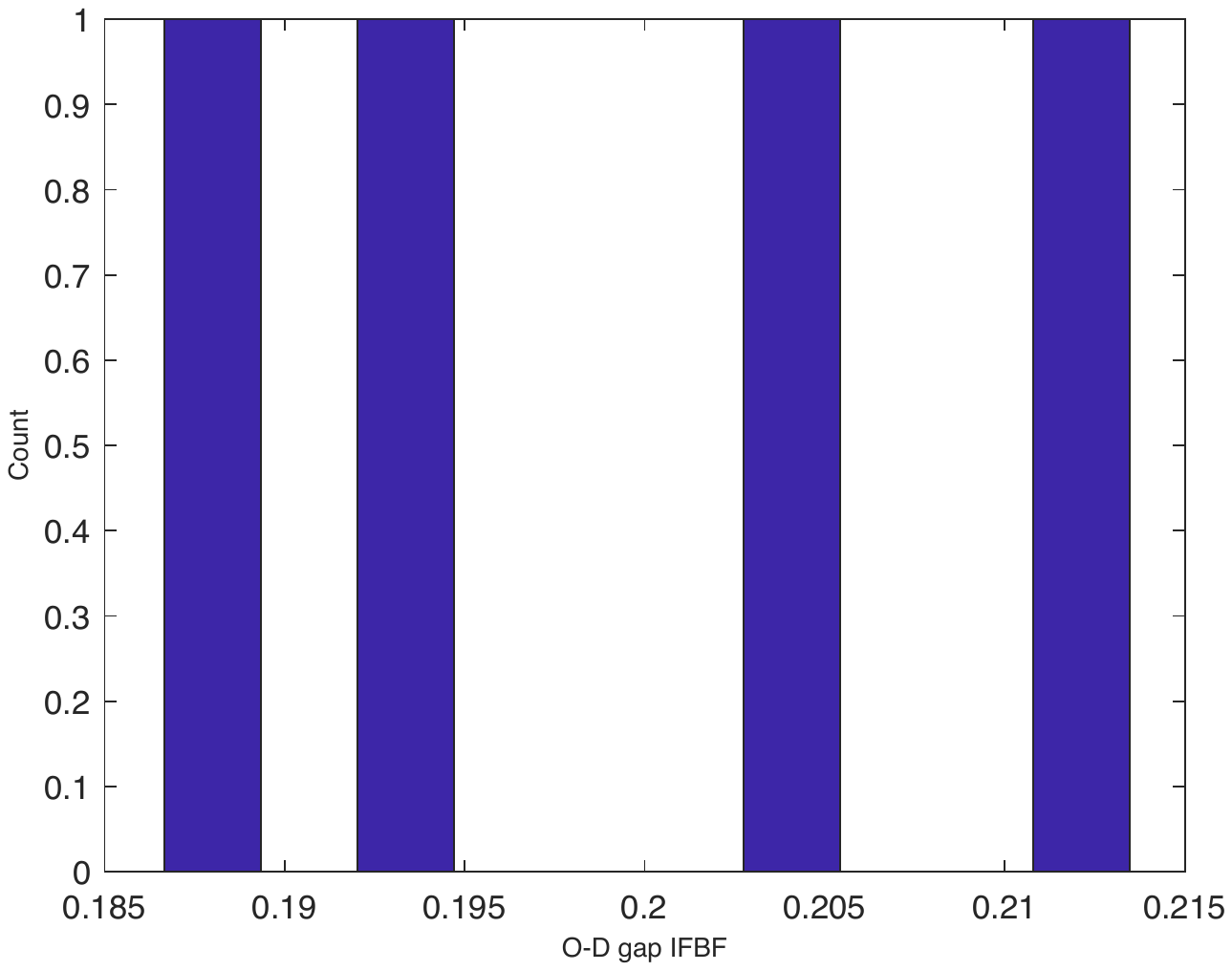}
\end{array}$
\end{center}

\begin{center}$
\begin{array}{lll}
\includegraphics[width=40mm]{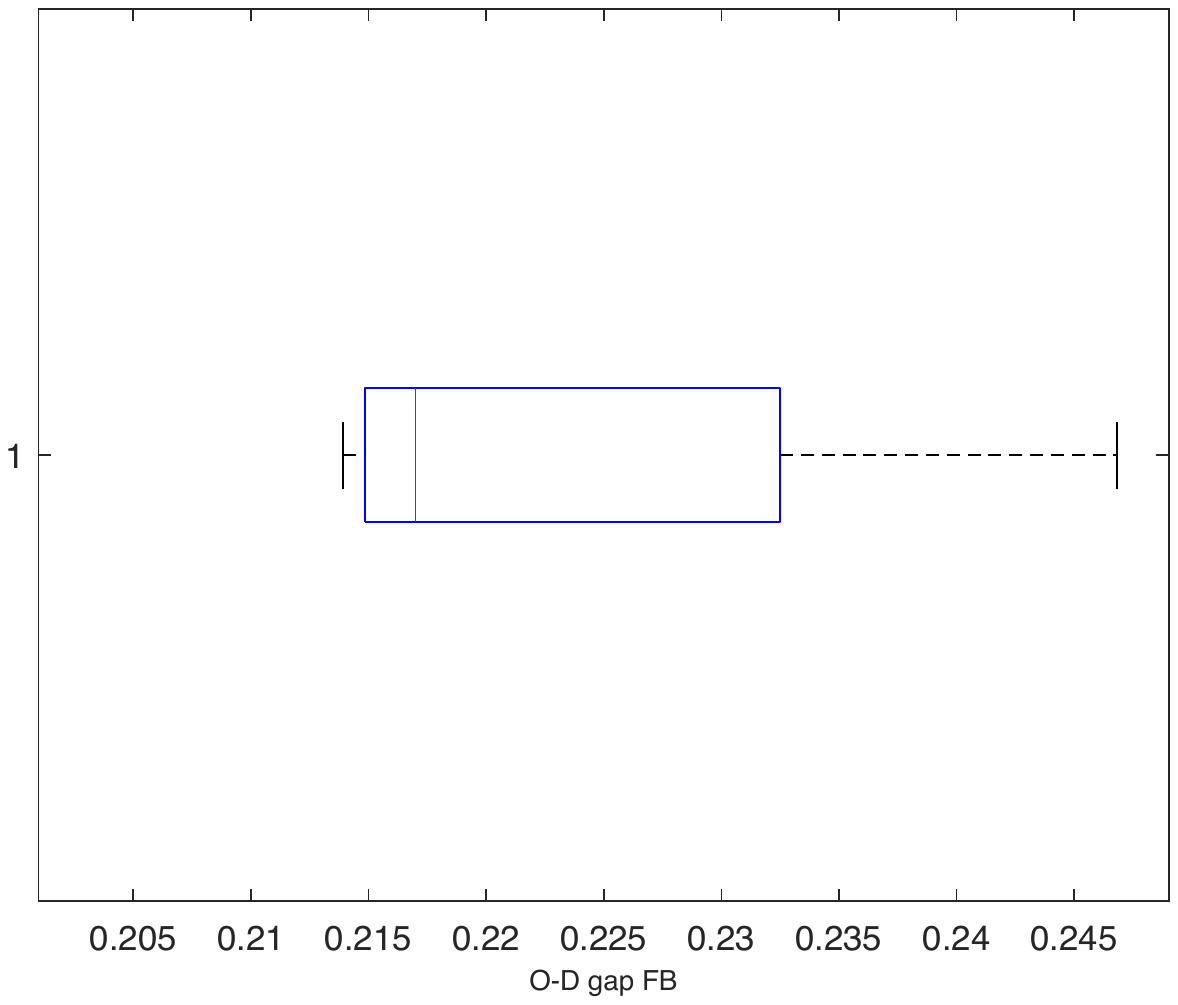}&
\includegraphics[width=40mm]{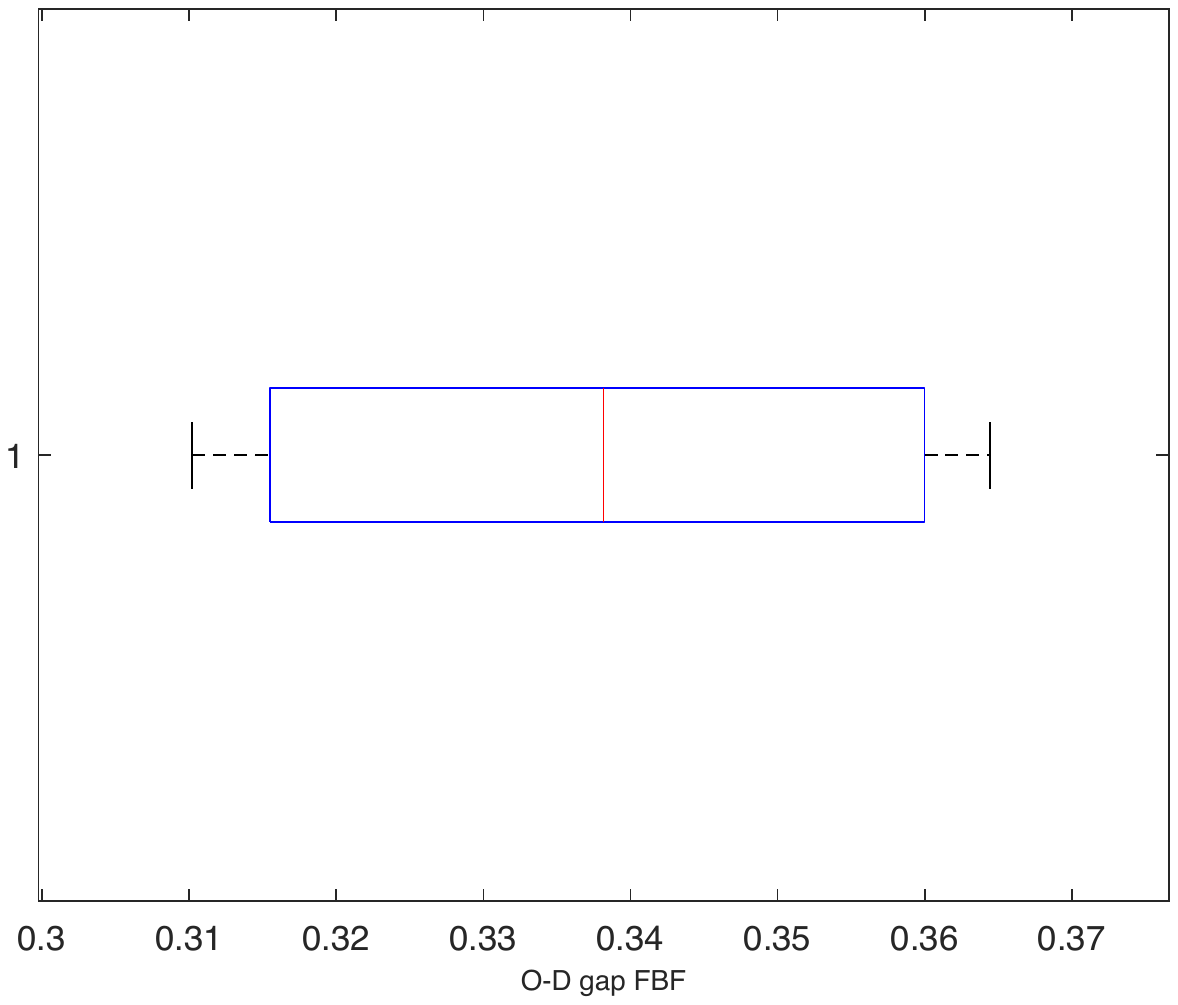}&
\includegraphics[width=40mm]{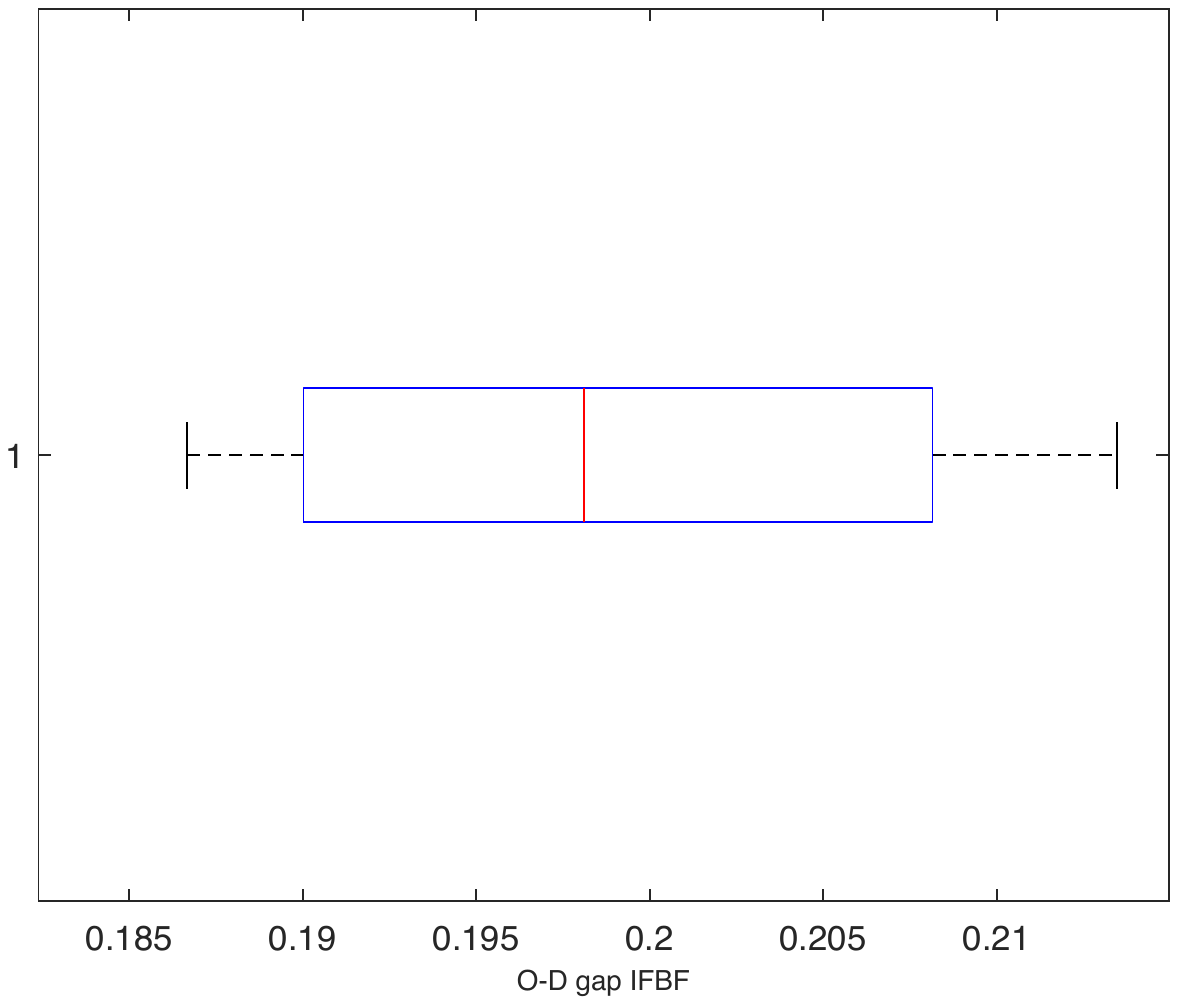}
\end{array}$
\end{center}
\caption{Distributions of O-D gaps to the \ac{DUE} solutions in the Nguyen network, calculated according to \eqref{eq:gap}.}
\label{fig:OD_Nguyen}
\end{figure}
\begin{table}[h!]
\centering
\begin{tabular}{lll}
                        & Nguyen & Sioux Fall   \\
                        \hline
dt                      & 70     & 100    \\
max. Iterations    & 200    & 100    \\
$\alpha$ & 0.7 & 0.7 \\
$\mu$                  & 0.5    & 0.5      \\
$\lambda$              & 0.5    & 0.2    \\
\hline  
\end{tabular}
\caption{Global Parameters for Algorithms \ref{alg:FB}, \ref{alg:FBF} and \ref{alg:IFBF}.}
\label{tab:globalpara}
\end{table}
Table \ref{tab:globalpara} contains a list of the global parameters employed in our numerical experiments. Here $dt$ is regulating the mesh-size of the time grid in the numerical solution of the \ac{DNL}. max. Iterations is the total number of iterations we let all three algorithms run on each test instance, and $\lambda$ is the relaxation parameter in Algorithm \ref{alg:IFBF}. The construction of local parameters has been done in a simple way, without involving extensive search over the parameter space which would very likely improve the reported results. The FB Algorithm \ref{alg:FB} has been implemented as in \cite{HanEveFri19}, using the constant step size $\tau=10$ on the Nguyen, and $\tau=2$ on the Sioux falls test network. FBF and IFBF (Algorithms \ref{alg:FBF} and \ref{alg:IFBF}) are implemented with the adaptive step-size \eqref{eq:adaptiveFBF} and \eqref{stepsize}, respectively. The relaxation and inertial sequences employed in FBF and IFBF are reported in Table \ref{tab:Specific}.
\begin{table}[h!]
\centering
\begin{tabular}{|l|l|l|l|l|}
\hline
\multirow{2}{*}{} & \multicolumn{2}{c|}{FBF}               & \multicolumn{2}{c|}{IFBF}                             \\ \cline{2-5} 
                  & Nguyen                & Siuox Falls    & Nguyen                           & Sioux Falls        \\ \hline
$\alpha_{n}$      & $(1+n)^{-0.9}$          & $(1+n)^{-0.9}$ & N.N.                             & N.N.               \\ \hline
$\beta_{n}$       & $0.7-0.7(1+n)^{-0.7}$ & $0.5 - 0.5*(1+n)^{-0.4}$  & $(10+n)^{-2}$                    & $\frac{10}{10n+1}$ \\ \hline
$\epsilon_{n}$    & N.N.                  & N.N.           & $\left(\frac{2}{1+n}\right)^{5}$ & $(0.1+n)^{-1.1}$   \\ \hline
\end{tabular}
\caption{List of method-specific parameters. N.N. stands for ``not needed''. }
\label{tab:Specific}
\end{table}
Figure \ref{fig:OD_Nguyen} shows the distribution of the values of our merit function \eqref{eq:gap} on the Nguyen network, and Figure \ref{fig:OD_Sioux} displays the same statistic for the Sioux falls network.

We see that in all our experiments the distribution of the O-D gaps is concentrated around 0.2 across all networks for FB and IFBF. This suggests that these algorithms preform similarly in terms of producing approximate equilibrium solutions. The decisive advantage of IFBF is however that it is guaranteed to converge strongly to the minimum norm solution, without requiring strict monotonicity of the delay operator. 
\begin{figure}[h!]
\begin{center}$
\begin{array}{lll}
\includegraphics[width=40mm]{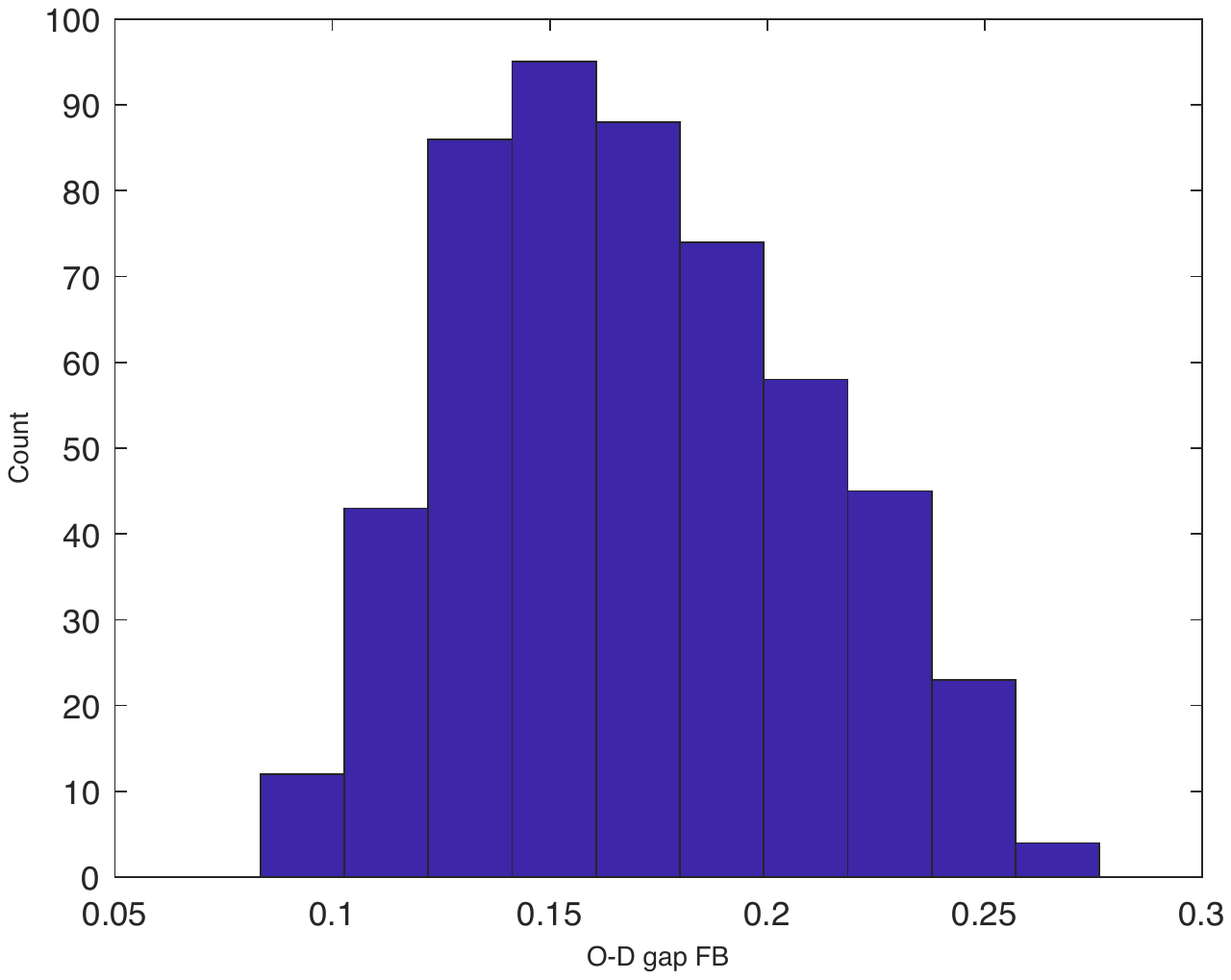}&
\includegraphics[width=40mm]{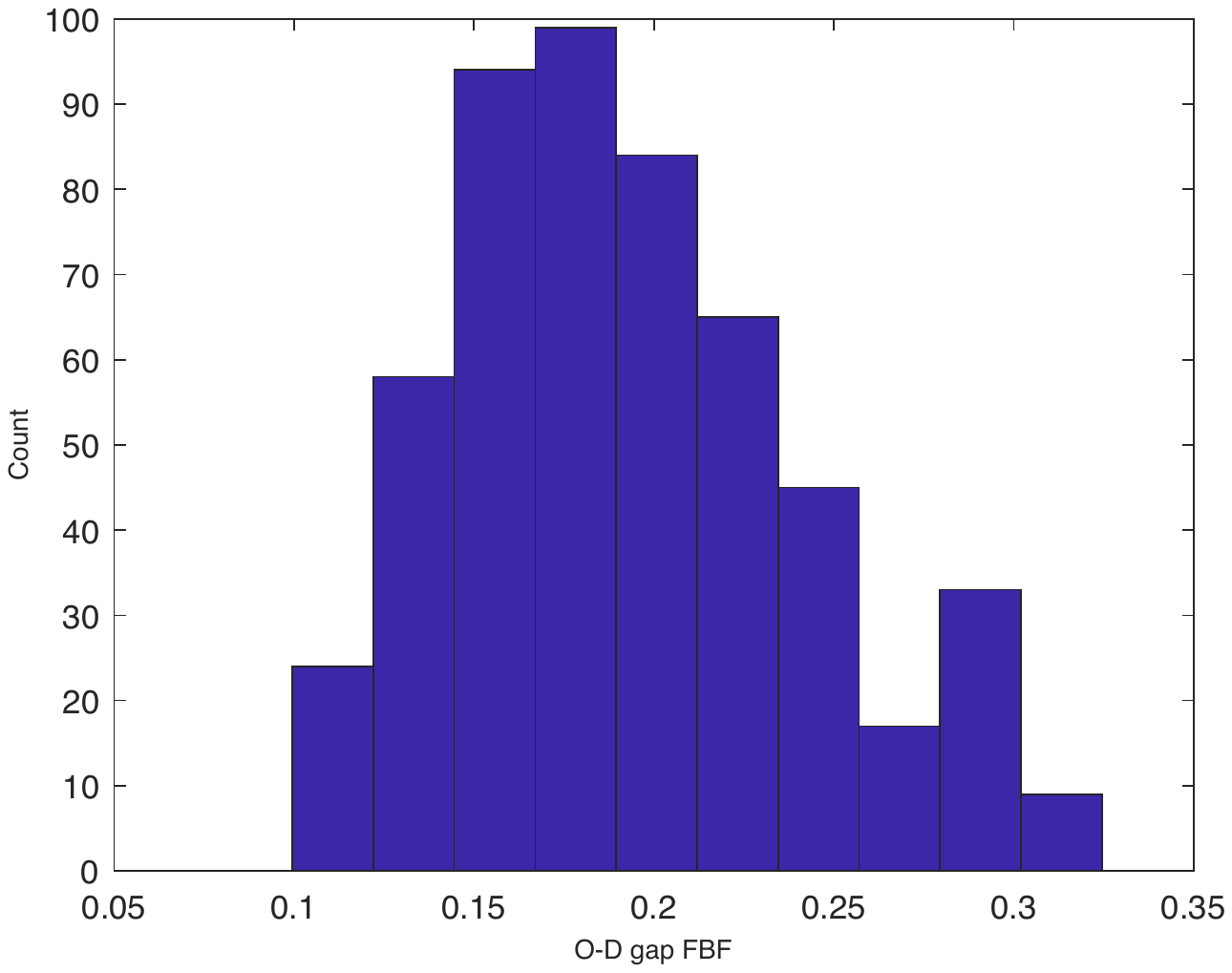}&
\includegraphics[width=40mm]{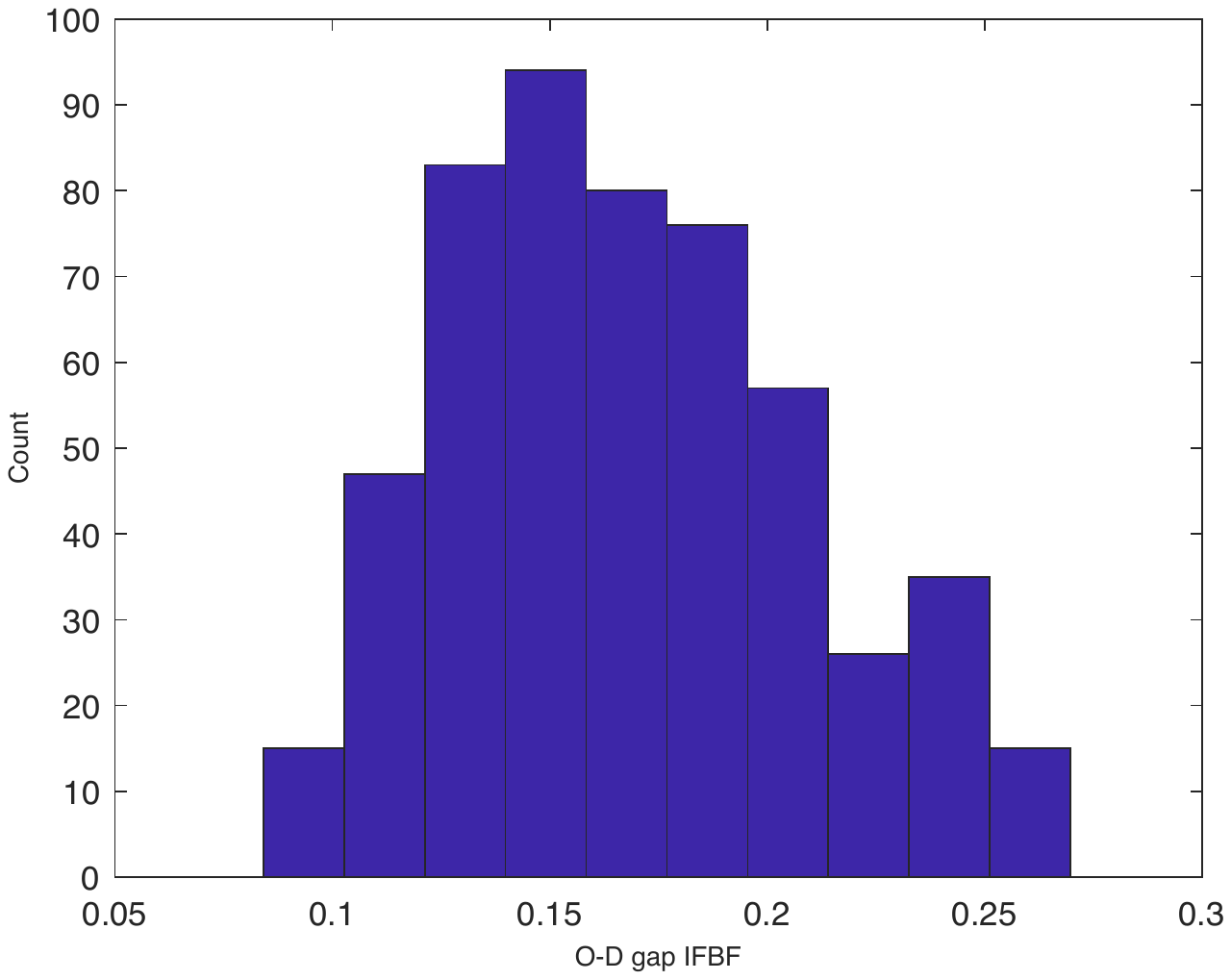}
\end{array}$
\end{center}

\begin{center}$
\begin{array}{lll}
\includegraphics[width=40mm]{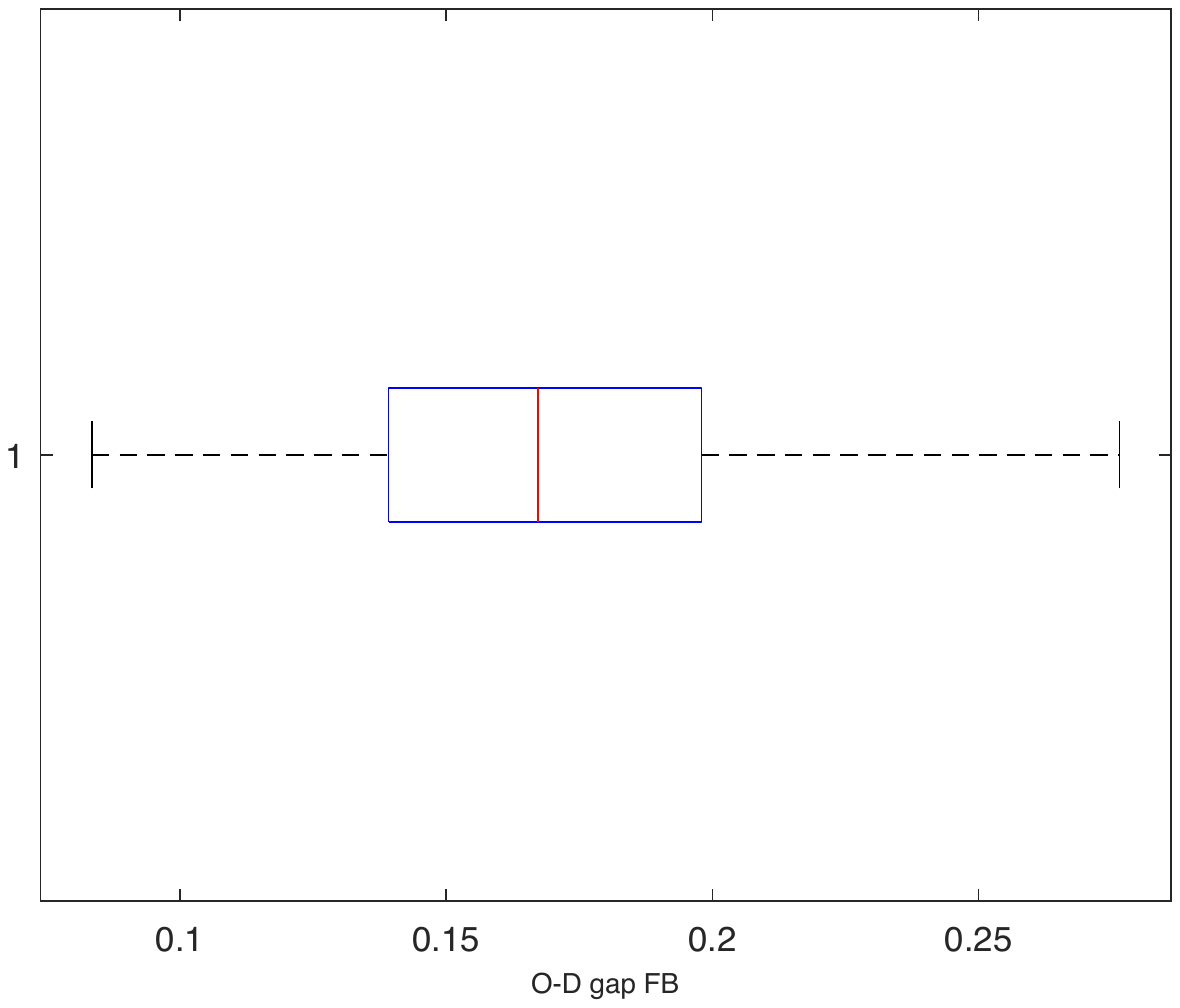}&
\includegraphics[width=40mm]{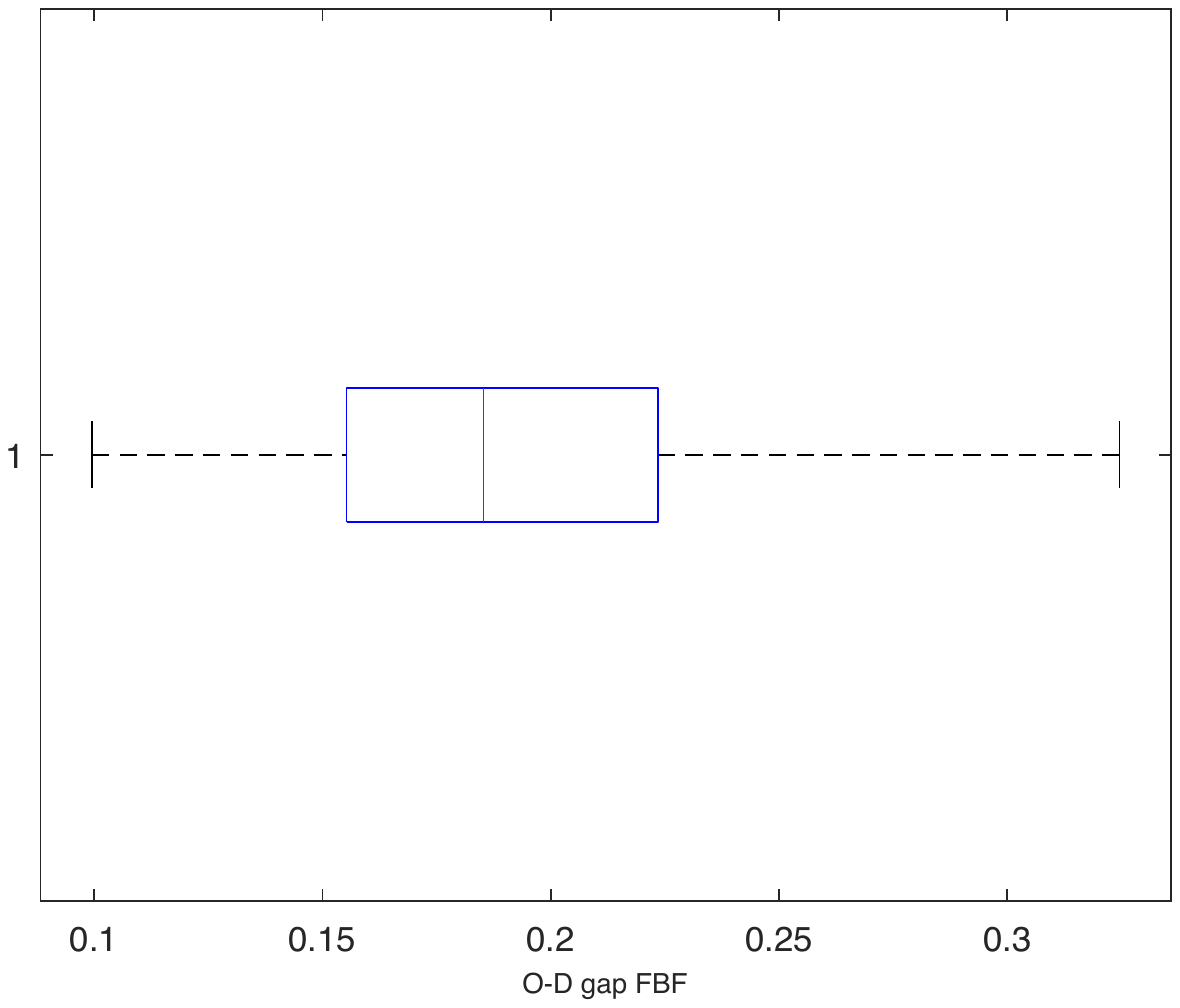}&
\includegraphics[width=40mm]{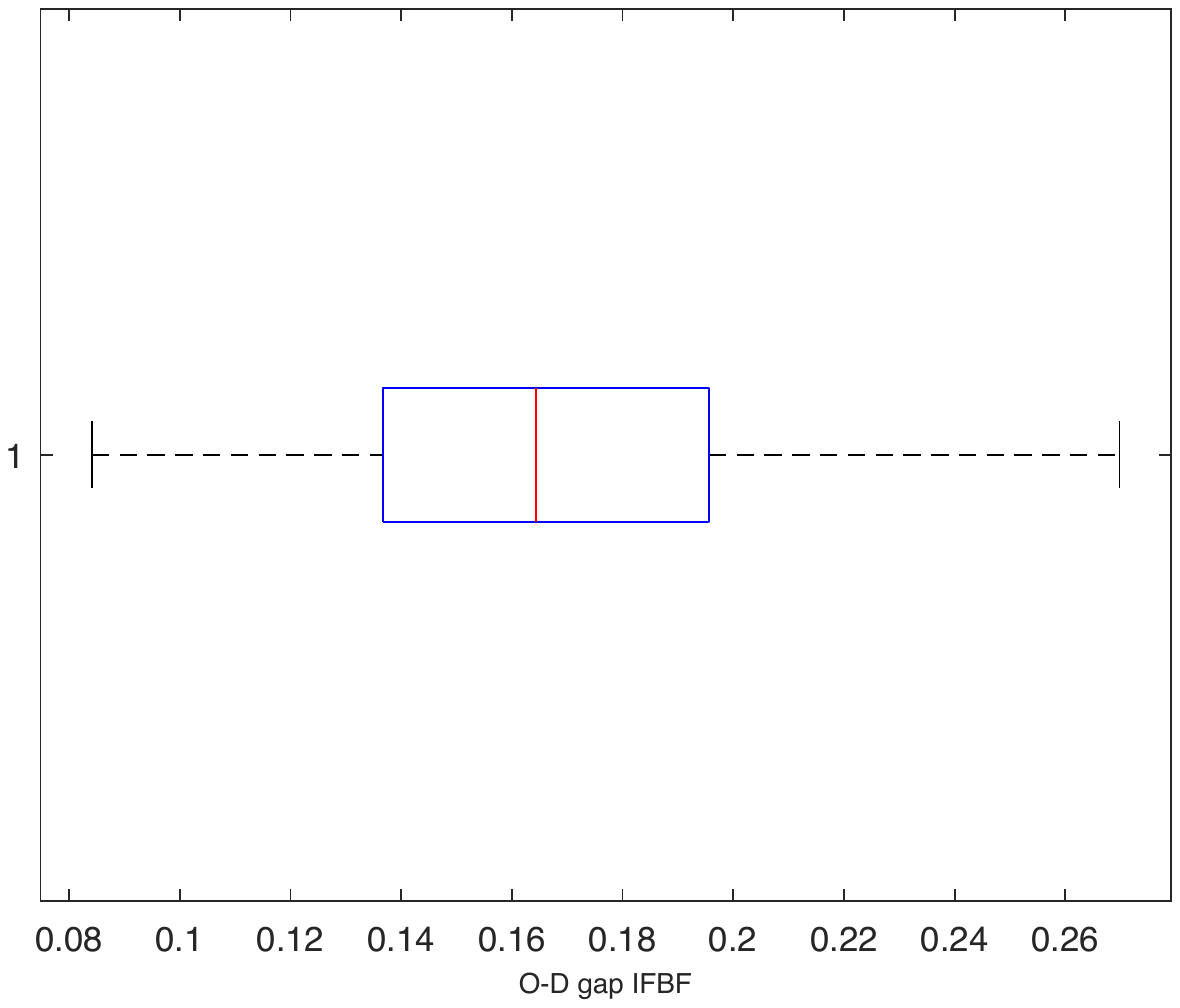}
\end{array}$
\end{center}\caption{Distributions of O-D gaps to the \ac{DUE} solutions in the Sioux falls network, calculated according to \eqref{eq:gap}.}
\label{fig:OD_Sioux}
\end{figure}
Figure \ref{fig:PP} displays the path departure rates as well as the corresponding effective path delays on randomly selected paths in the considered test networks. We see from these plots that the path departure rates peak out around the minima of the effective delay, which reflects equilibrium behavior on the routes. 
\begin{figure}[h!]
    \centering
    \begin{subfigure}[b]{0.50\textwidth}
\includegraphics[width=65mm]{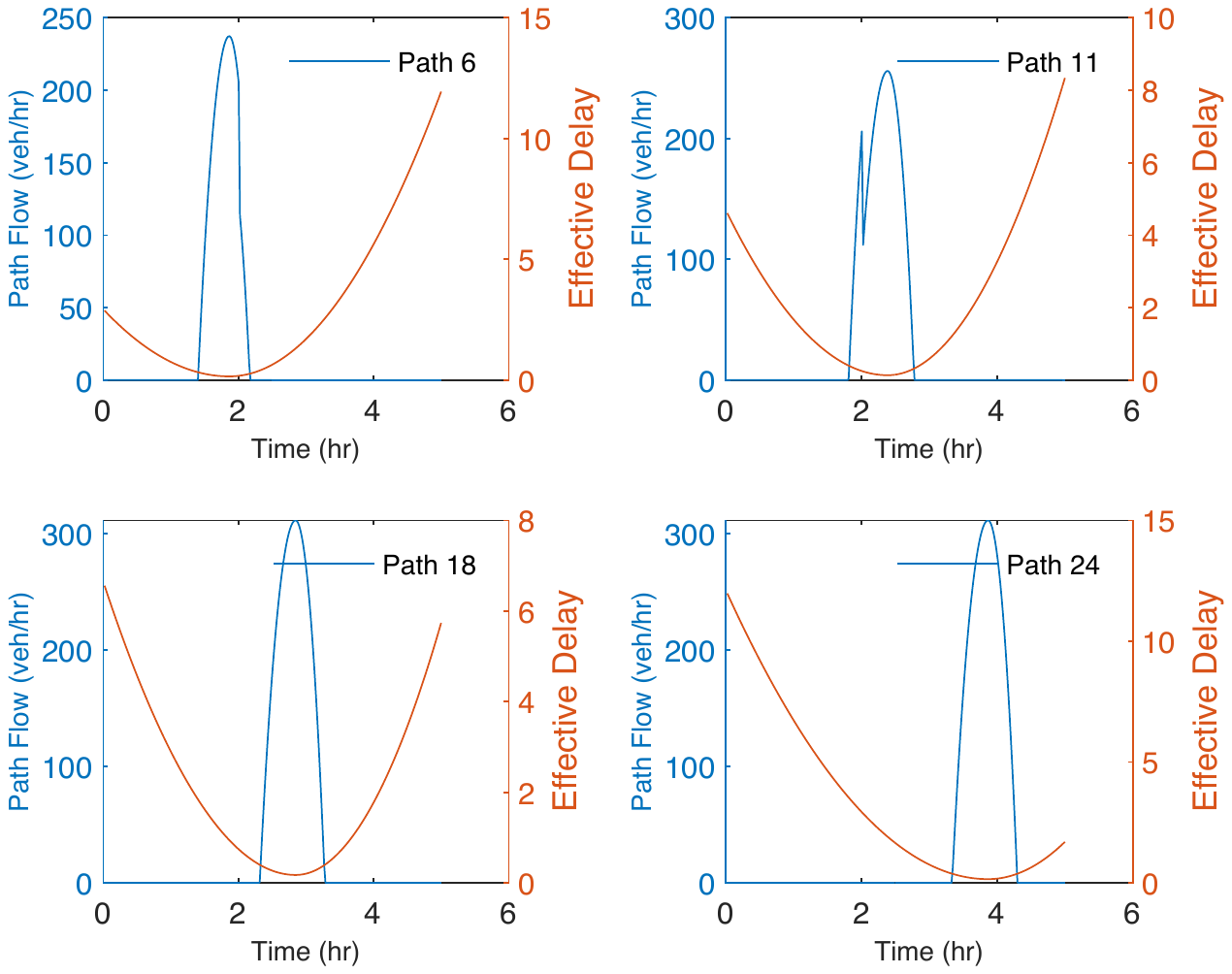}
\caption{FB Nguyen}
\end{subfigure}\hfill
\begin{subfigure}[b]{0.5\textwidth}
\includegraphics[width=67mm]{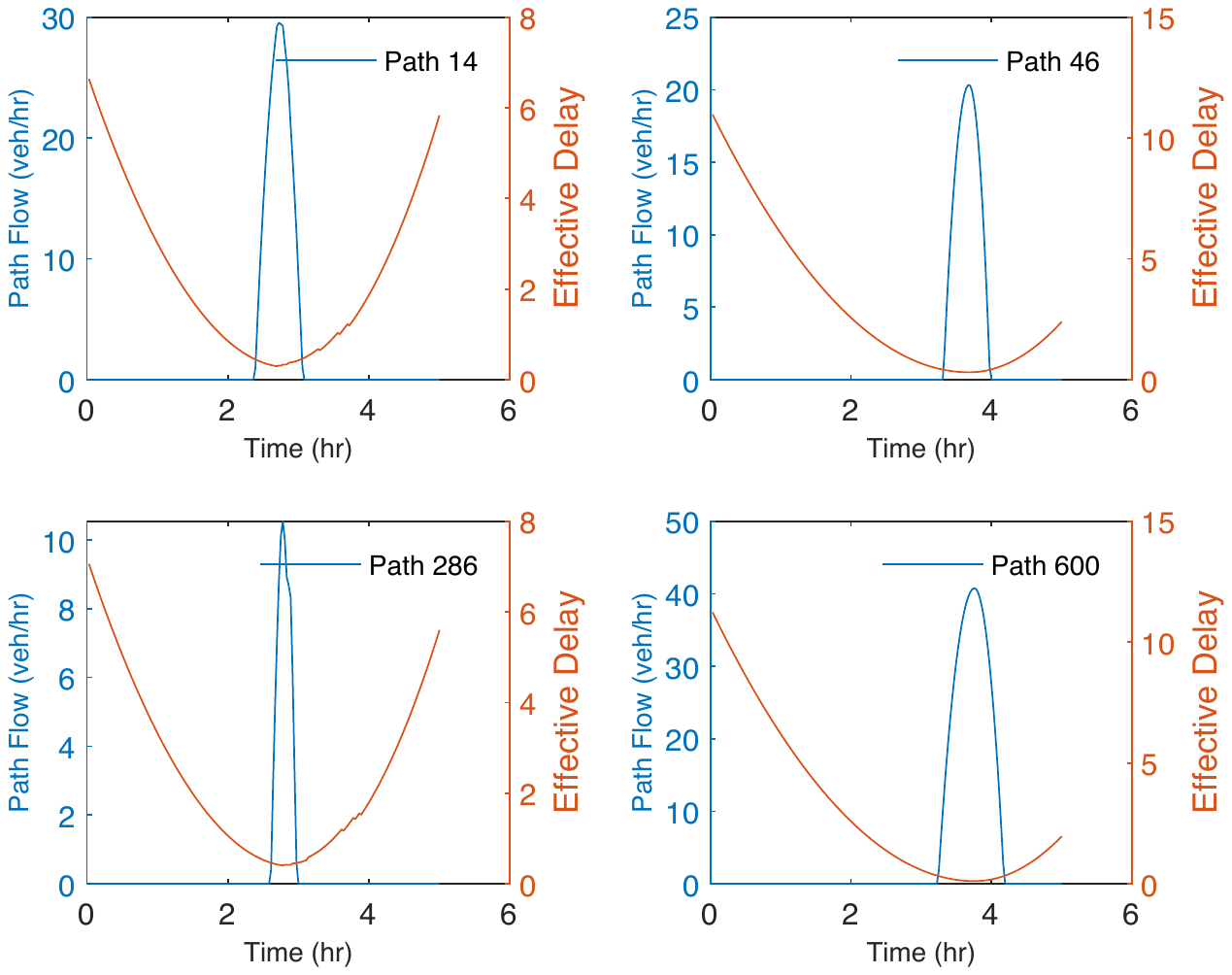}
\caption{FB Sioux}
\end{subfigure}\hfill
\begin{subfigure}[b]{0.5\textwidth}
\includegraphics[width=65mm]{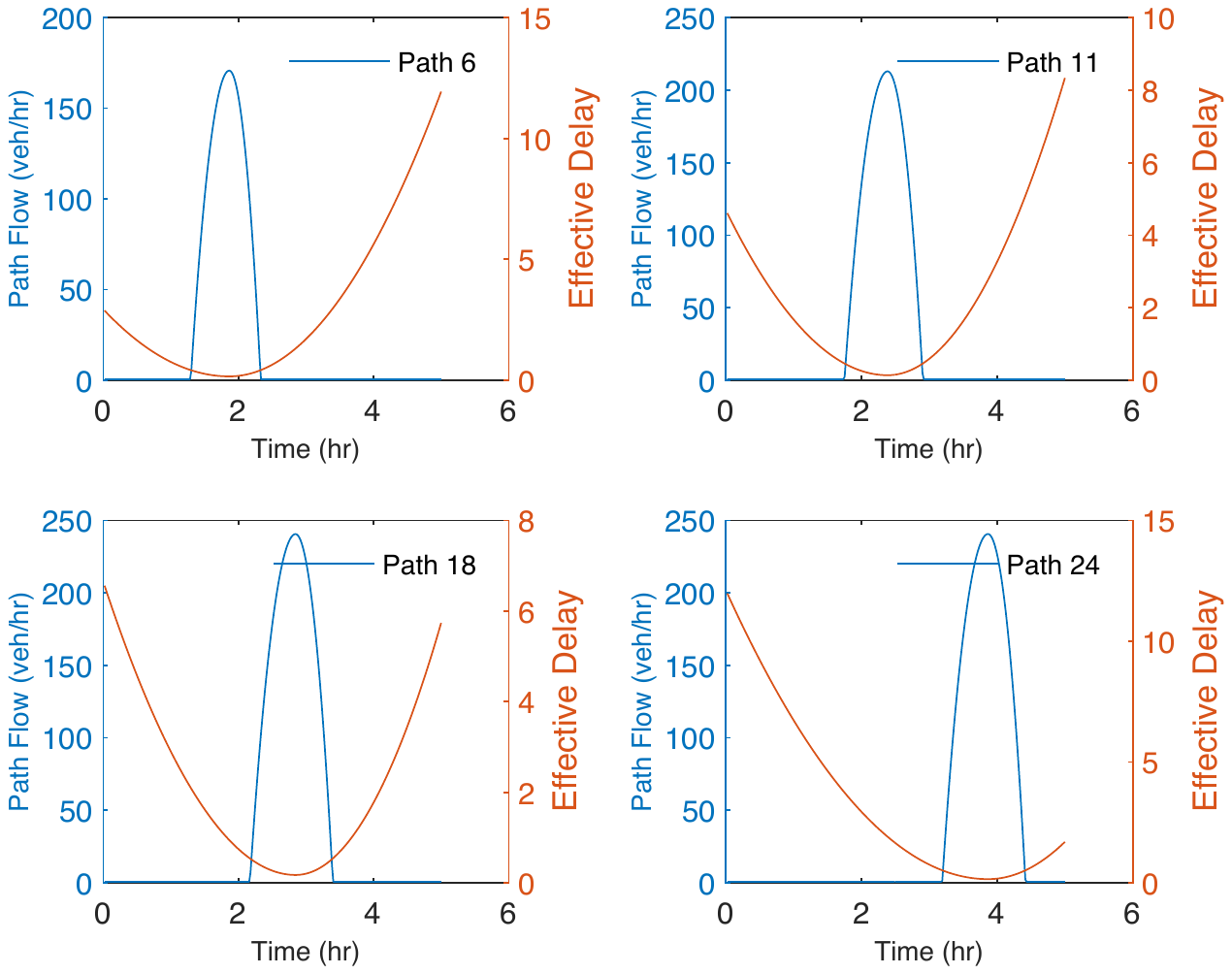}
\caption{FBF Nguyen}
\end{subfigure}\hfill
\begin{subfigure}[b]{0.5\textwidth}
\includegraphics[width=63mm]{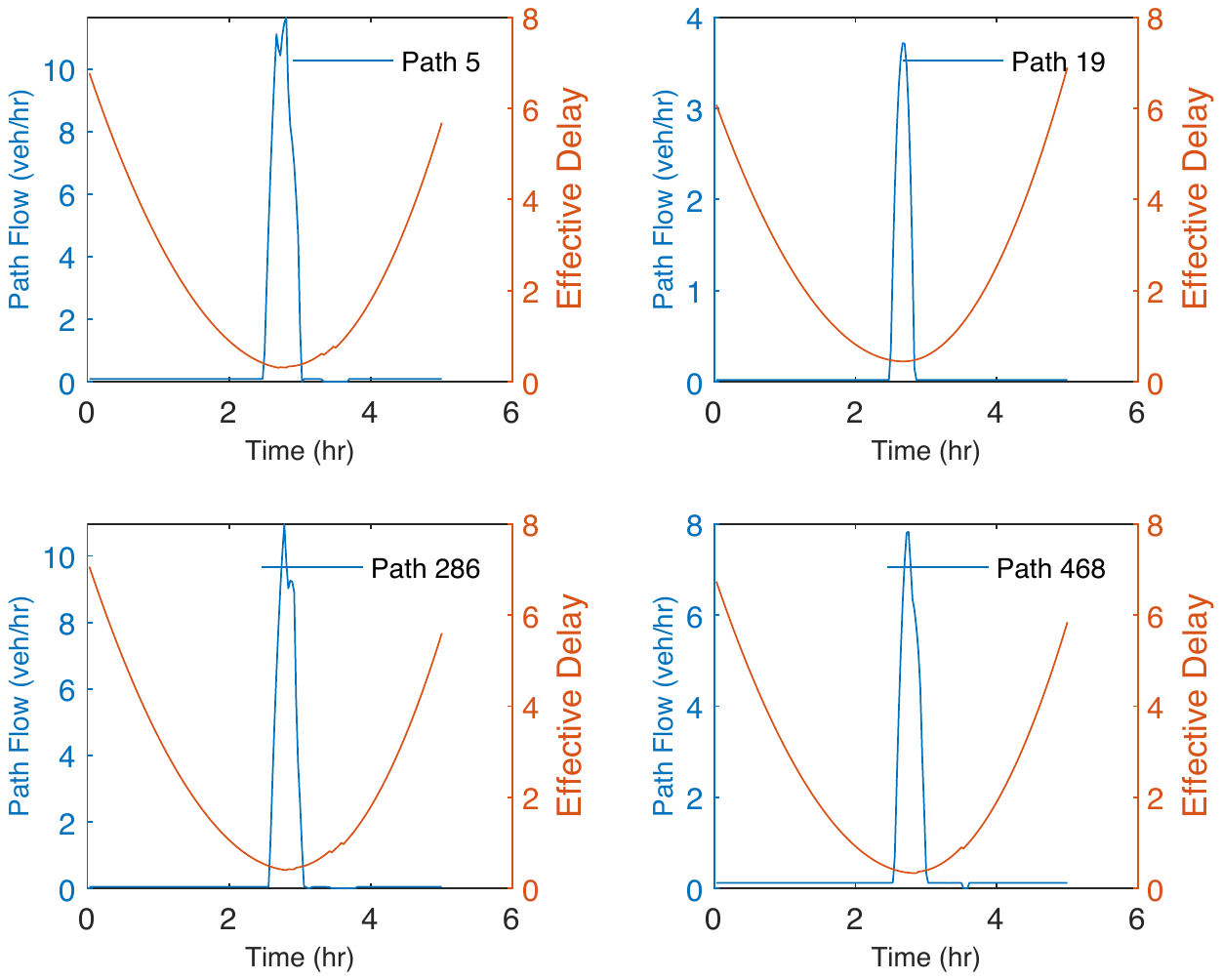}
\caption{FBF Sioux}
\end{subfigure}\hfill
\begin{subfigure}[b]{0.5\textwidth}
\includegraphics[width=65mm]{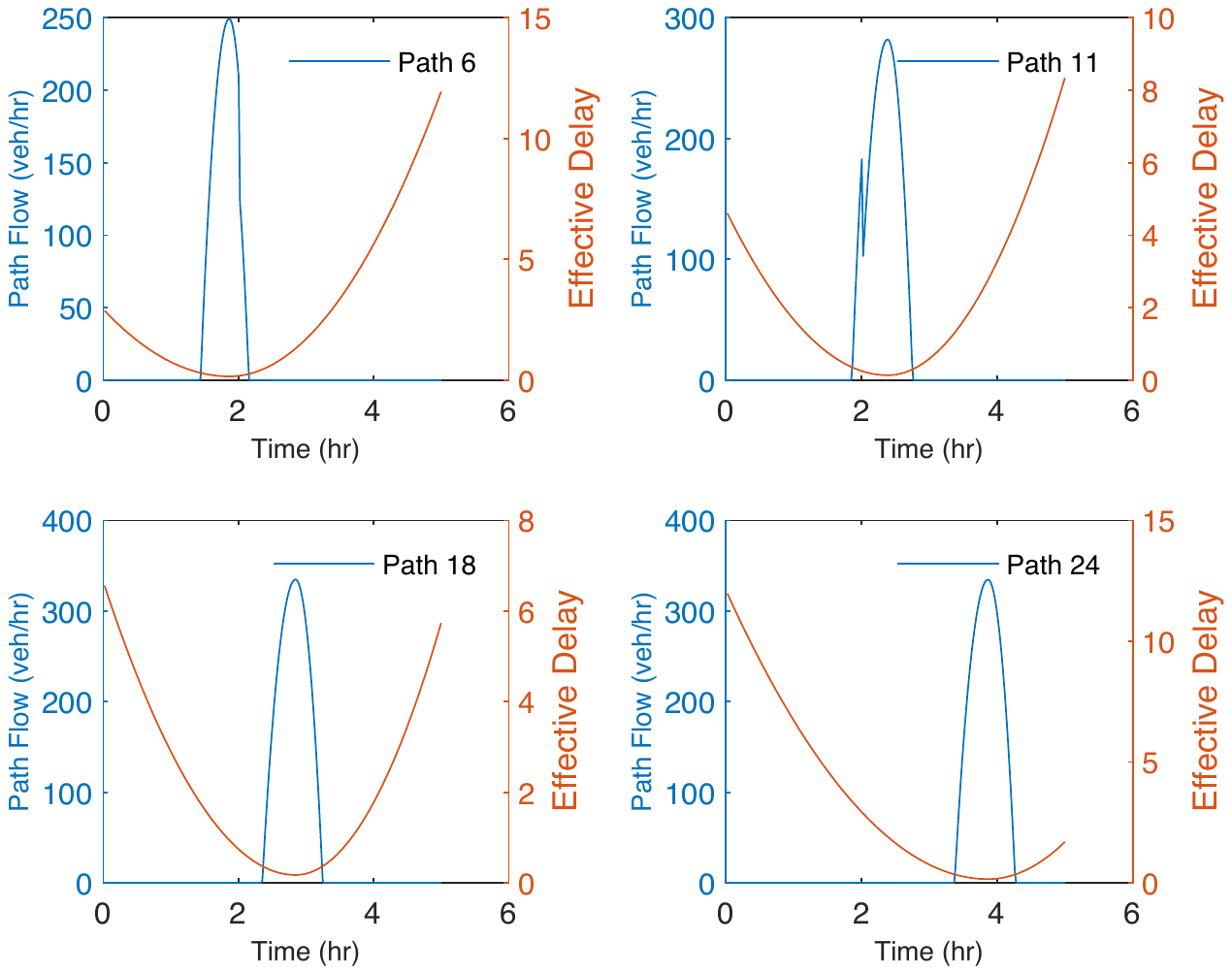}
\caption{IFBF Nguyen}
\end{subfigure}\hfill
\begin{subfigure}[b]{0.5\textwidth}
\includegraphics[width=63mm]{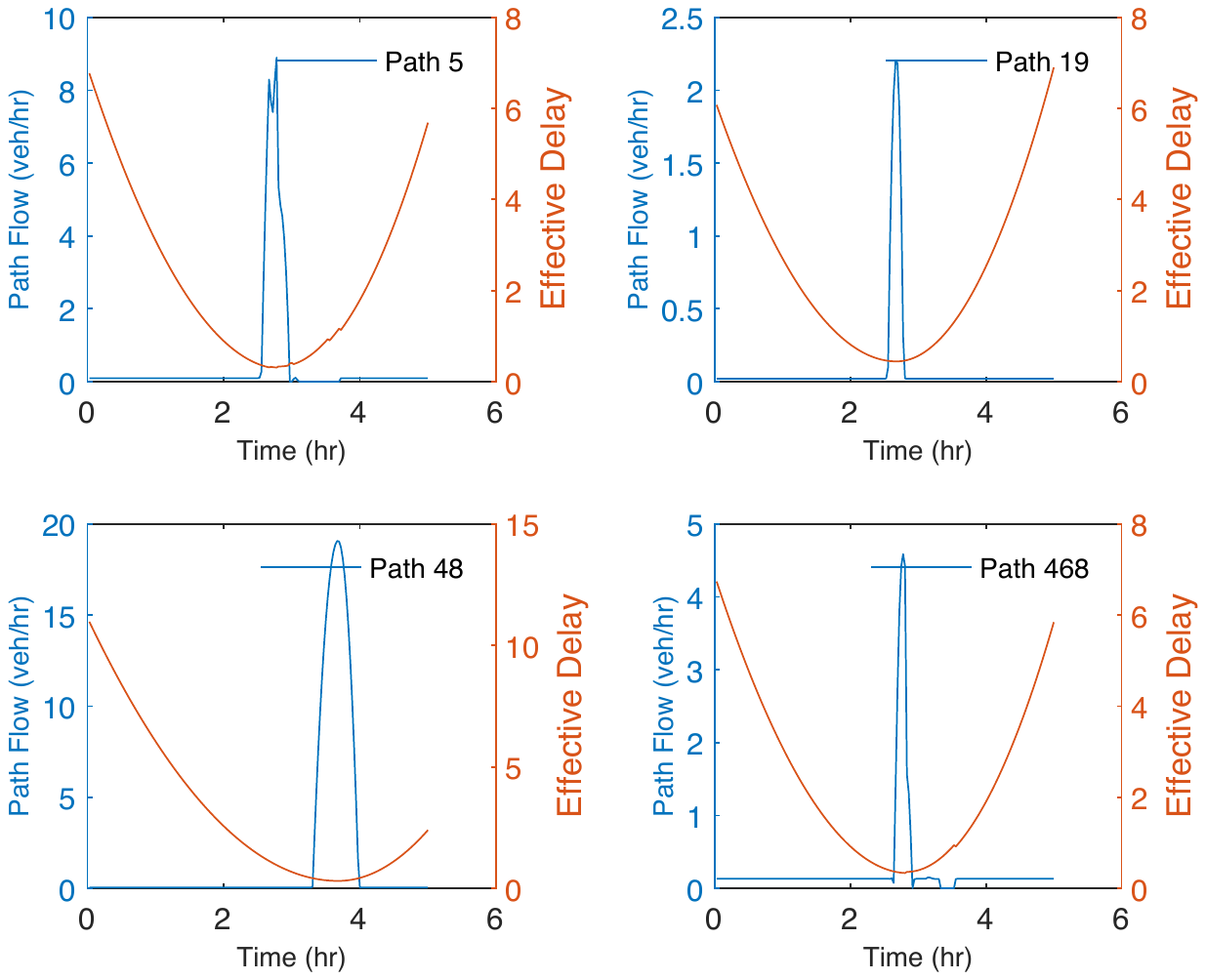}
\caption{IFBF Sioux}
\end{subfigure}
\caption{Path departure rates and corresponding effective path delays for selected paths in the \ac{DUE} solutions on the Nguyen network.}
\label{fig:PP}
\end{figure}

We finally display a figure which gives some indication on the relative speed of convergence of each of the tested algorithms. We compute for each method the ``relative energy'' sequence 
\begin{equation}\label{eq:energy}
e_{n} = \frac{\norm{h_{n+1}-h_{n}}}{\norm{h_{n}}}
\end{equation}
which measures the decay of energy of the path departure rates generated by the algorithms. \cite{HanEveFri19} call this the relative gap, and we follow this terminology in the labeling of the figures. For all our methods this sequence must converge to 0, and we can consider one method faster than the other if the rate of convergence of the energy sequence dominates the other. Figure \ref{fig:epsilon} shows the evolution of the relative energy sequences for each method. 
\begin{figure}[h!]
    \centering
    \begin{subfigure}[b]{0.5\textwidth}
	   \includegraphics[width=58mm]{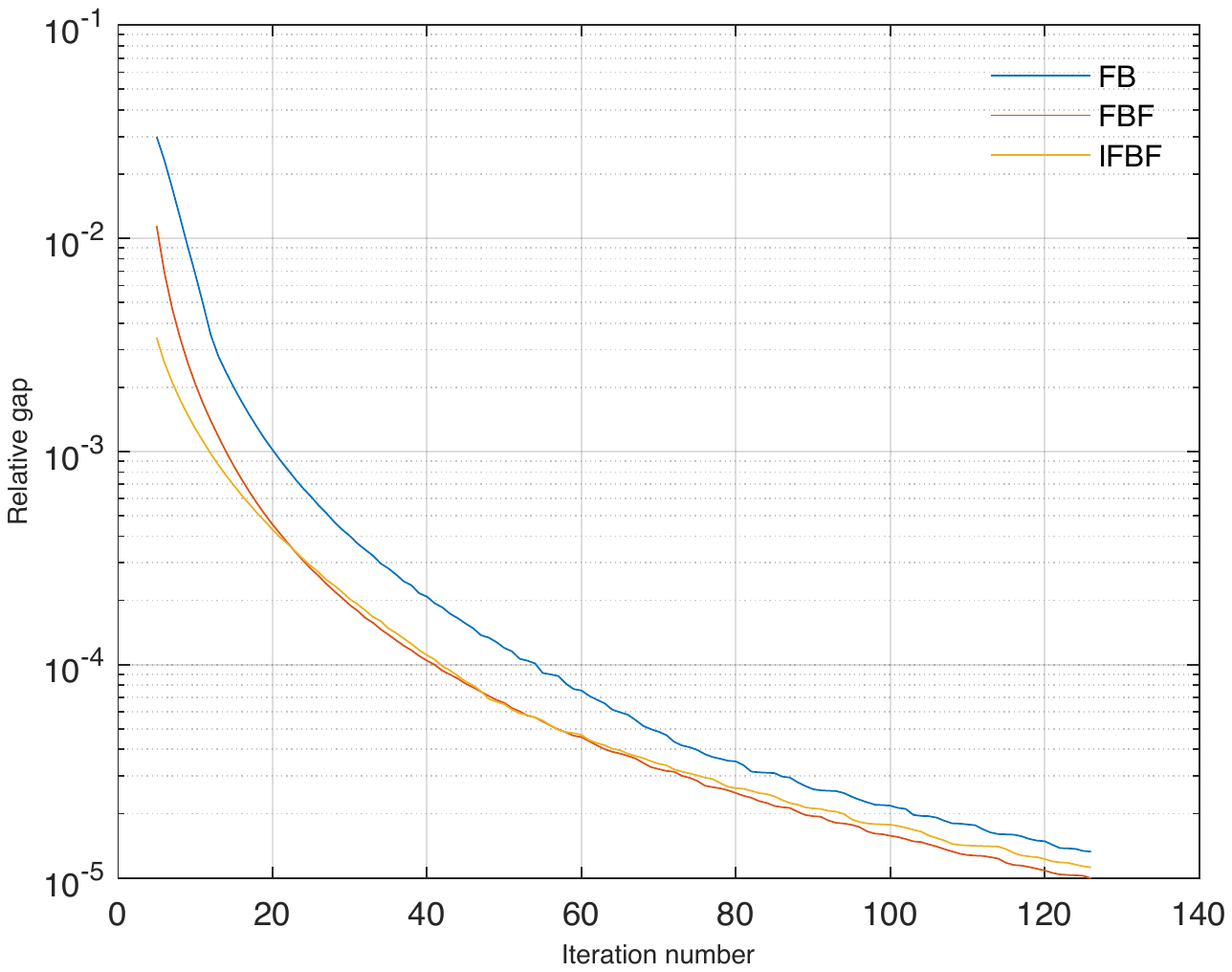}
\caption{Nguyen network}
\end{subfigure}\hfill
    \begin{subfigure}[b]{0.5\textwidth}
	   \includegraphics[width=60mm]{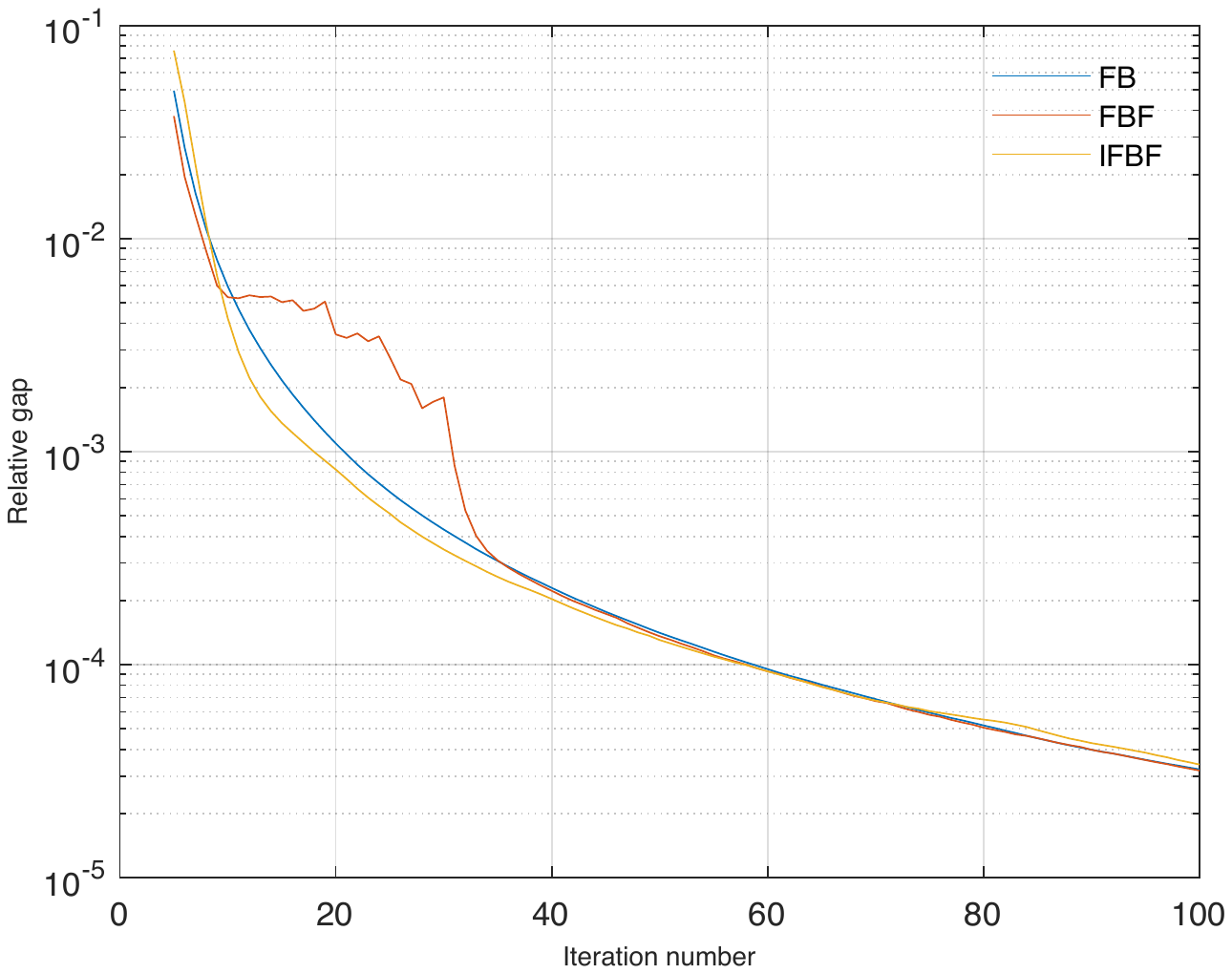}
\caption{Sioux Falls network}
\end{subfigure}
\caption{The relative energy \eqref{eq:energy} on semi-log scale for both test networks.}
\label{fig:epsilon}
\end{figure}
We see that already non-optimized step size parameters lead to some acceleration in the IFBF scheme when compared to other solvers. 

\section{Convergence Analysis}

\subsection{Preliminaries}
\label{sec:prelims}
The purpose of this section is to collect some standard concepts from real Hilbert spaces. Throughout this section we let $\scrH$ be a real Hilbert space with scalar product $\inner{\cdot,\cdot}$ and associated norm $\norm{\cdot}$. 

\begin{definition}[Convergence in Hilbert spaces]
\label{def:convergence}
A sequence of points $\{x_{n}\}_{n\in\N}$ in a Hilbert space $\scrH$ converges weakly to a point $x\in\scrH$, denoted by $x_{n}\wlim x$, if 
\[
\lim_{n\to\infty}\inner{x_{n},y}=\inner{x,y}
\]
for all test vectors $y\in\scrH$. The sequence $\{x_{n}\}$ converges strongly to $x$ if 
\[
\lim_{n\to\infty}\norm{x_{n}-x}=0.
\]
\end{definition}

In order to prove our main convergence results, we need the following standard facts. 
For all $x,y\in \scrH$ and $\alpha\in \mathbb{R}$, we have
\begin{equation}\label{xx22}
\|x+y\|^2\le \|x\|^2+2\langle y,x+y\rangle.
\end{equation}

\begin{lemma}\label{lem:projection}
Let $C$ be a nonempty closed convex set in $\scrH$ and $x\in\scrH$ arbitrary. Then 
\begin{itemize}
\item[(i)] $\inner{P_{C}(x)-x,y-P_{C}(x)}\geq 0$ for all  $y\in C;$
\item[(ii)] $\norm{P_{C}(x)-y}^{2}\leq\norm{x-y}^{2}-\norm{x-P_{C}(x)}^{2}$ for all $y\in C$. 
\end{itemize}
\end{lemma}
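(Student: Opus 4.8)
The plan is to deduce (ii) from (i), so the crux lies entirely in establishing the variational characterization (i), after which (ii) is a purely algebraic consequence. I would first recall why the metric projection $P_{C}(x)$ is well defined: since $C$ is a nonempty closed convex subset of the Hilbert space $\scrH$, the map $z\mapsto\norm{z-x}^{2}$ is strictly convex and bounded below on $C$. A standard argument using the parallelogram identity shows that any minimizing sequence is Cauchy, hence convergent by completeness, with limit in $C$ by closedness; strict convexity then gives uniqueness. Thus $P_{C}(x)$ is the unique nearest point of $C$ to $x$, and all that remains is to read off its first-order optimality condition.

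For part (i), I would exploit minimality along line segments. Fixing $y\in C$ and setting $z_{t}:=(1-t)P_{C}(x)+ty\in C$ for $t\in(0,1]$ (admissible by convexity), minimality of $P_{C}(x)$ gives $\norm{z_{t}-x}^{2}\geq\norm{P_{C}(x)-x}^{2}$. Expanding the left-hand side as $\norm{P_{C}(x)-x}^{2}+2t\inner{P_{C}(x)-x,y-P_{C}(x)}+t^{2}\norm{y-P_{C}(x)}^{2}$, cancelling the common term, dividing by $t>0$, and letting $t\downarrow 0$ yields $\inner{P_{C}(x)-x,y-P_{C}(x)}\geq 0$, which is exactly (i).

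For part (ii), I would expand the squared norm by inserting $P_{C}(x)$. Writing $x-y=(x-P_{C}(x))+(P_{C}(x)-y)$ and using $\norm{a+b}^{2}=\norm{a}^{2}+2\inner{a,b}+\norm{b}^{2}$ gives $\norm{x-y}^{2}=\norm{x-P_{C}(x)}^{2}+2\inner{x-P_{C}(x),P_{C}(x)-y}+\norm{P_{C}(x)-y}^{2}$. Since $\inner{x-P_{C}(x),P_{C}(x)-y}=\inner{P_{C}(x)-x,y-P_{C}(x)}\geq 0$ by (i), dropping this nonnegative cross term and rearranging produces $\norm{P_{C}(x)-y}^{2}\leq\norm{x-y}^{2}-\norm{x-P_{C}(x)}^{2}$.

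The only genuinely non-elementary ingredient is the existence and uniqueness of the projection in an infinite-dimensional space, which is precisely where completeness of $\scrH$ together with the parallelogram law becomes indispensable; in a general normed space neither need hold. Once $P_{C}(x)$ is known to be well defined, both claims are short, with (i) a one-parameter perturbation argument and (ii) a direct expansion, so I expect no further difficulty.
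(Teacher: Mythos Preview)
Your argument is correct and is the standard textbook proof of these projection properties. The paper itself does not prove this lemma at all; it is stated in the preliminaries section as a known fact and used later without justification, so there is nothing to compare against beyond noting that your derivation is exactly the classical one (see, e.g., Bauschke--Combettes, which the paper cites).
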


\begin{definition}
Let $A:\scrH\to\scrH$ be an operator. The operator $A$ is called 
\begin{enumerate}
\item $L$-Lipschitz continuous with $L>0$ on $\scrX\subseteq\scrH$ if 
\[
\norm{A(x)-A(y)}\leq L\norm{x-y}\qquad\forall x,y\in\scrX.
\]
\item pseudo-monotone on $\scrX\subseteq\scrH$ if
\begin{equation}\label{eq:PM}
\inner{A(x),y-x}\geq 0\Rightarrow \inner{A(y),y -x}\geq 0\qquad\forall x,y\in\scrX.
\end{equation}

\item sequentially weakly continuous if $x_{n}\wlim x$ then $A(x_{n}) \wlim A(x)$. 
\end{enumerate}
\end{definition}
The next classical fact shows that solutions of $\VI(\scrX,A)$ defined in terms of pseudo-monotone operators can be determined via ``weak formulation''. 

\begin{lemma}\label{002}
\citep{CotYao92}
	If $A: \scrH  \to \scrH $ is pseudo-monotone
	and continuous, then $x^*$ is a solution of $VI(\scrX, A)$ if and only if
	\[
	\inner{A(x), x - x^{\ast}}\geq 0 \  \ \forall x \in  \scrX.
\]
\end{lemma}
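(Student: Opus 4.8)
The statement is the classical equivalence between the Stampacchia and Minty formulations of a variational inequality for a pseudo-monotone operator, so the plan is simply to prove the two implications separately, one of them using pseudo-monotonicity and the other using continuity.

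\emph{The implication ``$\Rightarrow$''.} Assume $x^{\ast}$ solves $\VI(\scrX,A)$, i.e.\ $x^{\ast}\in\scrX$ and $\inner{A(x^{\ast}),x-x^{\ast}}\geq 0$ for every $x\in\scrX$. Fix an arbitrary $x\in\scrX$. Since $\inner{A(x^{\ast}),x-x^{\ast}}\geq 0$, the pseudo-monotonicity property \eqref{eq:PM} immediately yields $\inner{A(x),x-x^{\ast}}\geq 0$. As $x\in\scrX$ was arbitrary, the Minty-type inequality holds. Note that this direction uses only pseudo-monotonicity (and no continuity whatsoever).

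\emph{The implication ``$\Leftarrow$''.} Assume now $x^{\ast}\in\scrX$ and $\inner{A(x),x-x^{\ast}}\geq 0$ for all $x\in\scrX$; I must produce $\inner{A(x^{\ast}),x-x^{\ast}}\geq 0$ for all $x\in\scrX$. Here I would use the standard ``Minty trick'' / linearization along segments: fix $x\in\scrX$, and for $t\in(0,1]$ set $x_{t}:=(1-t)x^{\ast}+tx$, which belongs to $\scrX$ by convexity. Applying the hypothesis at the feasible point $x_{t}$ and using $x_{t}-x^{\ast}=t(x-x^{\ast})$,
\[
0\leq\inner{A(x_{t}),x_{t}-x^{\ast}}=t\,\inner{A(x_{t}),x-x^{\ast}},
\]
so $\inner{A(x_{t}),x-x^{\ast}}\geq 0$ for every $t\in(0,1]$. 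Letting $t\downarrow 0$ gives $x_{t}\to x^{\ast}$, and continuity of $A$ (hemicontinuity along the segment suffices) gives $A(x_{t})\to A(x^{\ast})$, hence $\inner{A(x_{t}),x-x^{\ast}}\to\inner{A(x^{\ast}),x-x^{\ast}}$. Passing to the limit yields $\inner{A(x^{\ast}),x-x^{\ast}}\geq 0$, and since $x\in\scrX$ was arbitrary, $x^{\ast}$ solves $\VI(\scrX,A)$. This direction uses only continuity of $A$ restricted to segments in $\scrX$, and no monotonicity of any kind.

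There is no substantive obstacle here; the only points requiring a little care are that the auxiliary points $x_{t}$ must remain feasible, which is exactly where convexity of $\scrX$ is invoked, and that the limit $A(x_{t})\to A(x^{\ast})$ only needs the hypothesized continuity. The interesting conceptual content is the clean split of hypotheses: pseudo-monotonicity powers ``$\Rightarrow$'', continuity powers ``$\Leftarrow$'', and neither direction needs both.
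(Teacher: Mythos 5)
Your proof is correct. Note that the paper does not actually prove this lemma---it only cites \citet{CotYao92}---and your argument is exactly the classical one underlying that reference: pseudo-monotonicity applied with the pair $(x^{\ast},x)$ gives the forward implication, and the Minty linearization $x_{t}=(1-t)x^{\ast}+tx$ together with convexity of $\scrX$ and (hemi)continuity of $A$ gives the converse. Both steps are carried out without gaps, and your remark on the clean separation of hypotheses between the two directions is accurate.
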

The next technical results are basic convergence guarantees for real-valued sequences.

\begin{lemma}\label{lem:summation}\citep{Xu02}
Let $\{a_{n}\}_{n\in\N}$ be a sequence of nonnegative real numbers, and $\{\beta_{n}\}_{n\in\N}$ be a sequence in $(0,1)$ such that $\sum_{n}\beta_{n}=\infty$. Suppose that $\{b_{n}\}_{n\in\N}$ is a sequence with $\lim\sup_{n}b_{n}\leq 0$. If 
\[
a_{n+1}\leq (1-\beta_{n})a_{n}+\beta_{n}b_{n}\qquad\forall n\in\N
\]
then $\lim_{n\to\infty}a_{n}=0$. 
\end{lemma}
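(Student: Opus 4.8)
The plan is to prove the quantitative statement $\limsup_{n\to\infty} a_n \le \eps$ for an arbitrary fixed $\eps>0$; since $a_n\ge 0$, letting $\eps\downarrow 0$ then forces $\lim_n a_n=0$. First I would use the hypothesis $\limsup_n b_n\le 0$ to select an index $N=N(\eps)$ beyond which $b_n\le\eps$. Feeding this into the recursion gives, for every $n\ge N$, the affine estimate $a_{n+1}\le (1-\beta_n)a_n+\beta_n\eps$. The crucial simplification is to homogenize this by passing to the shifted sequence $c_n:=a_n-\eps$: a one-line computation (using $\beta_n\eps-\eps=-(1-\beta_n)\eps$) collapses the affine recursion into the purely multiplicative contraction $c_{n+1}\le (1-\beta_n)c_n$, valid for all $n\ge N$.

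Next I would iterate this contraction. Because each factor $1-\beta_n$ is strictly positive, multiplying the inequalities preserves their direction, so a straightforward induction yields $c_n\le \big(\prod_{k=N}^{n-1}(1-\beta_k)\big)c_N$ for all $n> N$, \emph{without} any assumption on the sign of the $c_k$ — a point worth emphasizing, since one might be tempted to split into cases according to whether $a_n$ has already dropped below $\eps$. Denoting the partial product by $P_n$, I then need the single analytic input that $P_n\to 0$. This is where the divergence $\sum_n\beta_n=\infty$ is used: from the elementary bound $\log(1-x)\le -x$ on $(0,1)$ one obtains $\log P_n=\sum_{k=N}^{n-1}\log(1-\beta_k)\le -\sum_{k=N}^{n-1}\beta_k\to-\infty$, whence $P_n\to 0$.

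Since $c_N$ is a fixed constant, $P_n c_N\to 0$ regardless of its sign, and therefore $\limsup_n c_n\le\lim_n P_n c_N=0$, i.e.\ $\limsup_n a_n\le\eps$. Letting $\eps\downarrow 0$ gives $\limsup_n a_n\le 0$, and combining with $a_n\ge 0$ yields $\lim_n a_n=0$, as claimed. The proof is essentially elementary; the only genuine content is the homogenizing substitution $c_n=a_n-\eps$ together with the product-to-zero fact, and the main thing to get right is that the multiplicative chaining is legitimate with no sign restriction on $c_N$ — once that is observed, the case analysis one might expect turns out to be unnecessary.
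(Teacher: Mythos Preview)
Your argument is correct. The homogenizing substitution $c_n=a_n-\eps$ is the standard trick, and your observation that the chained inequality $c_n\le\big(\prod_{k=N}^{n-1}(1-\beta_k)\big)c_N$ holds regardless of the sign of the intermediate $c_k$'s (because each factor $1-\beta_k$ is positive) is exactly the point that makes the proof clean. The product-to-zero step via $\log(1-x)\le -x$ is the right way to exploit $\sum_n\beta_n=\infty$.

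Note, however, that the paper does not supply its own proof of this lemma: it is quoted as a known result from \cite{Xu02} and used as a black box in the convergence analysis. So there is nothing in the paper to compare your proof against. What you have written is essentially the classical proof of Xu's lemma, and it would serve perfectly well as a self-contained replacement for the citation.
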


\begin{lemma}\label{Saejung1} 
	\citep{saejung2012approximation}
	Let $\{a_n\}_{n\in\N}$ be a sequence of nonnegative real numbers, $\{\beta_n\}_{n\in\N}$ be a sequence of real numbers in $(0, 1)$ with $\sum_{n}\beta_n=\infty$ and $\{b_n\}_{n\in\N}$ be a sequence of real numbers. Assume that
	\begin{equation*}
	a_{n+1}\le (1-\beta_n)a_n+\beta_n b_n \text{ for all } n\geq 1.
	\end{equation*}
	If $\limsup_{k\to\infty} b_{n_k} \le 0$ for every subsequence $\{a_{n_k}\}_{k=1}^{\infty}$ of $\{a_n\}_{n=1}^{\infty}$ satisfying $\liminf_{k\to\infty}(a_{n_k+1}-a_{n_k})\geq 0$ then $\lim_{n\to \infty}{a_n} = 0$.
\end{lemma}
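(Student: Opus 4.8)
The plan is to show $\limsup_{n\to\infty} a_n \le 0$; since every $a_n\ge 0$, this forces $\lim_{n\to\infty}a_n=0$. The whole difficulty is that the hypothesis only controls $b_{n_k}$ along subsequences that are \emph{asymptotically nondecreasing}, i.e.\ $\liminf_k(a_{n_k+1}-a_{n_k})\ge 0$, so the heart of the argument is to manufacture such a subsequence that still captures the limsup of $\{a_n\}$. I would split the analysis according to the cardinality of the ascent set $S:=\{n\in\N : a_{n+1}\ge a_n\}$.

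First, suppose $S$ is finite. Then for all large $n$ one has $a_{n+1}<a_n$, so $\{a_n\}$ is eventually strictly decreasing and, being bounded below by $0$, converges to some $L\ge 0$; consequently $a_{n+1}-a_n\to 0$. Taking the tail itself as the test subsequence (i.e.\ $n_k=k$ for $k$ large) gives $\liminf_k(a_{n_k+1}-a_{n_k})=0$, so the hypothesis yields $\limsup_{n}b_n\le 0$. The recursion $a_{n+1}\le(1-\beta_n)a_n+\beta_n b_n$ together with $\sum_n\beta_n=\infty$ then lets me invoke \Cref{lem:summation} to conclude $a_n\to 0$ (so in fact $L=0$).

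Second, suppose $S$ is infinite. Here I would use the ``last ascent'' construction of Maing\'e: for $n$ large enough, set $\tau(n):=\max\{k\le n : a_{k+1}\ge a_k\}$. One checks that $\tau(n)$ is well defined, nondecreasing, $\tau(n)\to\infty$, and satisfies the two comparisons $a_{\tau(n)}\le a_{\tau(n)+1}$ and $a_n\le a_{\tau(n)+1}$ for all large $n$ (the latter because, by maximality of $\tau(n)$, every step between $\tau(n)+1$ and $n$ is a strict descent). Since $a_{\tau(n)+1}-a_{\tau(n)}\ge 0$, the subsequence $\{a_{\tau(n)}\}_n$ is admissible, so the hypothesis gives $\limsup_n b_{\tau(n)}\le 0$. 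Feeding $a_{\tau(n)}\le a_{\tau(n)+1}$ into the recursion at index $\tau(n)$ produces $a_{\tau(n)}\le(1-\beta_{\tau(n)})a_{\tau(n)}+\beta_{\tau(n)}b_{\tau(n)}$, and cancelling $\beta_{\tau(n)}>0$ yields the clean pointwise bound $a_{\tau(n)}\le b_{\tau(n)}$. Hence $\limsup_n a_{\tau(n)}\le\limsup_n b_{\tau(n)}\le 0$, forcing $a_{\tau(n)}\to 0$; the same recursion then propagates this to $a_{\tau(n)+1}\to 0$ (using $\beta_{\tau(n)}\in(0,1)$ and $\limsup_n b_{\tau(n)}\le 0$), and finally $a_n\le a_{\tau(n)+1}$ delivers $\limsup_n a_n\le 0$, as required.

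The only genuinely delicate point is this second case: everything hinges on the Maing\'e index $\tau(n)$, which is precisely the device that extracts an asymptotically nondecreasing subsequence \emph{dominating} the whole sequence, thereby rendering the one-sided hypothesis on $\{b_n\}$ usable. The remaining steps---verifying the three properties of $\tau(n)$, the cancellation $a_{\tau(n)}\le b_{\tau(n)}$, and the propagation $a_{\tau(n)}\to 0\Rightarrow a_{\tau(n)+1}\to 0$---are routine $\limsup$ manipulations. I would also remark that the finite-$S$ case is essentially \Cref{lem:summation}, so the result is best read as a strengthening of Xu's lemma that trades the global requirement $\limsup_n b_n\le 0$ for the weaker subsequential one.
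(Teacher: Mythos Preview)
The paper does not prove this lemma; it is stated with a citation to \citep{saejung2012approximation} and used as a black box in the proof of Theorem~\ref{th:main}. Your sketch is correct and is precisely the standard argument in the cited source (which in turn adapts Maing\'e's last-ascent index): the case split on the finiteness of the ascent set, the index $\tau(n)=\max\{k\le n:a_{k+1}\ge a_k\}$, the cancellation $a_{\tau(n)}\le b_{\tau(n)}$, and the domination $a_n\le a_{\tau(n)+1}$ are exactly the ingredients used there. So there is nothing in the paper to compare against, but your proof matches the literature proof the authors cite.
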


\begin{lemma}\label{lem:lambda}
Let $\tau_{0}>0$ and $\mu\in(0,1)$. Let $A:\scrH\to\scrH$ be a $L$-Lipschitz continuous operator. The sequence $\{\tau_{n}\}_{n\geq 0}$ generated by eq. \eqref{stepsize} is non-increasing and satisfies 
\begin{equation}\label{eq:LimLambda}
\lim_{n\to\infty}\tau_{n} = \tau \geq \bar{\tau} :=\min\left\{\frac{\mu}{L},\tau_{0}\right\}. 
\end{equation}
Furthermore, 
\begin{equation}
\norm{A(w_{n})-A(y_{n})}\leq \frac{\mu}{\tau_{n+1}}\norm{w_{n}-y_{n}}\qquad\forall n\geq 1. 
\end{equation}
\end{lemma}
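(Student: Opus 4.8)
The plan is to argue in three stages, mirroring the three assertions, and in exactly that order (the third will use the bound established in the second).

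First I would establish monotonicity directly from the update rule \eqref{stepsize}: in the branch $A(w_{n})=A(y_{n})$ one has $\tau_{n+1}=\tau_{n}$, and in the other branch $\tau_{n+1}$ is a minimum one of whose arguments is $\tau_{n}$, so $\tau_{n+1}\le\tau_{n}$; hence $\{\tau_{n}\}_{n\ge 0}$ is non-increasing. Second, I would derive the uniform lower bound. Fix $n$ with $A(w_{n})\ne A(y_{n})$; then necessarily $w_{n}\ne y_{n}$, for otherwise $A(w_{n})=A(y_{n})$. Thus $\norm{w_{n}-y_{n}}>0$, and $L$-Lipschitz continuity of $A$ gives $\norm{A(w_{n})-A(y_{n})}\le L\norm{w_{n}-y_{n}}$, whence $\frac{\mu\norm{w_{n}-y_{n}}}{\norm{A(w_{n})-A(y_{n})}}\ge\frac{\mu}{L}$. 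Therefore $\tau_{n+1}=\min\bigl\{\tau_{n},\frac{\mu\norm{w_{n}-y_{n}}}{\norm{A(w_{n})-A(y_{n})}}\bigr\}\ge\min\{\tau_{n},\frac{\mu}{L}\}$; and in the trivial branch $\tau_{n+1}=\tau_{n}\ge\min\{\tau_{n},\frac{\mu}{L}\}$ as well. A straightforward induction starting from $\tau_{0}$ then yields $\tau_{n}\ge\bar{\tau}:=\min\{\frac{\mu}{L},\tau_{0}\}>0$ for every $n$. Since a non-increasing sequence bounded below converges, $\tau_{n}\to\tau$ for some $\tau\ge\bar{\tau}$, which is \eqref{eq:LimLambda}.

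Third, the final inequality follows from a case split on the same dichotomy. If $A(w_{n})=A(y_{n})$, the left-hand side is $0$ while the right-hand side is nonnegative because $\tau_{n+1}>0$ by the bound just proved, so the inequality holds. If $A(w_{n})\ne A(y_{n})$, then by construction $\tau_{n+1}\le\frac{\mu\norm{w_{n}-y_{n}}}{\norm{A(w_{n})-A(y_{n})}}$, and cross-multiplying — legitimate since both $\norm{A(w_{n})-A(y_{n})}>0$ and $\tau_{n+1}>0$ — gives $\norm{A(w_{n})-A(y_{n})}\le\frac{\mu}{\tau_{n+1}}\norm{w_{n}-y_{n}}$.

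I do not expect any genuine obstacle here; the argument is elementary. The only points requiring care are (a) ruling out a vanishing denominator in the step-size rule, which is precisely why one must observe that $A(w_{n})\ne A(y_{n})$ forces $w_{n}\ne y_{n}$, and (b) having the strict positivity $\tau_{n+1}\ge\bar{\tau}>0$ in hand \emph{before} it is invoked to justify the division in the third claim — hence the insistence on proving the three parts in the stated order rather than independently.
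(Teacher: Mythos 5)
Your proposal is correct and follows essentially the same route as the paper's proof: monotonicity read off the min, the lower bound $\frac{\mu\norm{w_{n}-y_{n}}}{\norm{A(w_{n})-A(y_{n})}}\geq\frac{\mu}{L}$ from Lipschitz continuity, induction to get $\tau_{n}\geq\bar{\tau}$, and convergence of a non-increasing bounded sequence. Your explicit case split for the final inequality (and the observation that $A(w_{n})\neq A(y_{n})$ forces $w_{n}\neq y_{n}$) is a small point of care the paper leaves implicit, but it is the same argument.
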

\begin{proof}
Since 
$\tau_{n+1}=\min\left\{\frac{\mu\norm{w_{n}-y_{n}}}{\norm{A(w_{n})-A(y_{n})}},\tau_{n}\right\}$, 
it is clear that $\tau_{n+1}\leq\tau_{n}$ for all $n\geq 0$. Moreover, using the $L$-Lipschitz continuity of the operator $A$ gives
\[
\frac{\mu\norm{w_{n}-y_{n}}}{\norm{A(w_{n})-A(y_{n})}}\geq\frac{\mu}{L}\quad \text{ if }A(w_{n})\neq A(y_{n}). 
\]
Hence, $\tau_{n+1}\geq \min\{\frac{\mu}{L},\tau_{n}\}$ for all $n$. By induction, it follows that $\{\tau_{n}\}_{n}$ is bounded from below by  $\min\{\frac{\mu}{L},\tau_{0}\}.$ Therefore, $\lim_{n\to\infty}\tau_{n} = \tau \geq \bar{\tau} :=\min\left\{\frac{\mu}{L},\tau_{0}\right\}. $
\end{proof}

\subsection{Proof of Theorem \ref{th:main}}
\label{sec:ProofMain}
We start with an auxiliary technical result, which guarantees that weak cluster points of the algorithm produce solutions of \ac{DUE}. It is based on techniques from \cite{Vuong18}. 

\begin{lemma}\label{003}  Let $\{w_n\}$ be a sequence generated by Algorithm \ref{alg:IFBF}. If there exists a subsequence $\{w_{n_k}\}$  convergent weakly to $z\in\scrH$ and $\lim_{k\to\infty}\|w_{n_k}-y_{n_k}\|=0$, then, having Assumption \ref{ass:1}-\ref{ass:monotone} in place, we know $z\in\Omega.$
\end{lemma}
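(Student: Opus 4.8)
The plan is to show that the weak cluster point $z$ of the subsequence $\{w_{n_k}\}$ belongs to $\Omega$ by verifying the weak formulation of the variational inequality from Lemma~\ref{002}, namely that $\inner{A(h),h-z}\geq 0$ for all $h\in\scrX$. This is the standard strategy for establishing that cluster points of forward--backward--forward type iterations solve the VI under pseudo-monotonicity (as in \cite{Vuong18,BotMerStaVuo19}).

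\textbf{Step 1: Feasibility of the limit.} First I would observe that $y_{n_k}\in\scrX$ for all $k$ by construction (it is the image of the projection $P_{\scrX}$), and since $\|w_{n_k}-y_{n_k}\|\to 0$ and $w_{n_k}\wlim z$, it follows that $y_{n_k}\wlim z$. Because $\scrX$ is (sequentially) weakly closed, we conclude $z\in\scrX$.

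\textbf{Step 2: The projection inequality.} Next I would use the variational characterization of the projection, Lemma~\ref{lem:projection}(i), applied to $y_{n_k}=P_{\scrX}(w_{n_k}-\tau_{n_k}A(w_{n_k}))$: for every $h\in\scrX$,
\begin{equation*}
\inner{w_{n_k}-\tau_{n_k}A(w_{n_k})-y_{n_k},\,h-y_{n_k}}\leq 0,
\end{equation*}
which rearranges to
\begin{equation*}
\frac{1}{\tau_{n_k}}\inner{w_{n_k}-y_{n_k},\,h-y_{n_k}}\leq \inner{A(w_{n_k}),\,h-y_{n_k}}.
\end{equation*}
Then I would rewrite $\inner{A(w_{n_k}),h-y_{n_k}}=\inner{A(w_{n_k}),h-w_{n_k}}+\inner{A(w_{n_k}),w_{n_k}-y_{n_k}}$ and take $\liminf$ (or $\limsup$) as $k\to\infty$. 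The left-hand side tends to $0$: indeed $\tau_{n_k}\to\tau\geq\bar\tau>0$ by Lemma~\ref{lem:lambda}, $\|w_{n_k}-y_{n_k}\|\to 0$, and $\{h-y_{n_k}\}$ is bounded (weakly convergent sequences are bounded). The term $\inner{A(w_{n_k}),w_{n_k}-y_{n_k}}$ also tends to $0$, since $\{A(w_{n_k})\}$ is bounded — here I use that $A$ is $L$-Lipschitz (Assumption~\ref{ass:Lipschitz}) and $\{w_{n_k}\}$ is bounded. Hence
\begin{equation*}
\liminf_{k\to\infty}\inner{A(w_{n_k}),\,h-w_{n_k}}\geq 0.
\end{equation*}

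\textbf{Step 3: Passing to the limit via weak continuity.} This is the delicate point, and the main obstacle: I cannot directly conclude $\inner{A(z),h-z}\geq 0$ because $\inner{A(w_{n_k}),h-w_{n_k}}$ involves a product of two weakly convergent sequences ($A(w_{n_k})\wlim A(z)$ by sequential weak continuity, and $h-w_{n_k}\wlim h-z$), and such products need not converge to the product of the limits. To circumvent this I would split off the sequence and argue as follows. Write $\inner{A(w_{n_k}),h-w_{n_k}}=\inner{A(w_{n_k})-A(z),h-w_{n_k}}+\inner{A(z),h-w_{n_k}}$; the second term converges to $\inner{A(z),h-z}$, but the first is still problematic. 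The cleaner route — and the one I would actually follow — is the classical "up to a subsequence, add a vanishing error" device: choose a positive sequence $\eps_j\downarrow 0$; for each $j$, the $\liminf$ inequality gives an index $N_j$ with $\inner{A(w_{n_k}),h-w_{n_k}}\geq -\eps_j$ for all $k\geq N_j$, and combining with sequential weak continuity of $A$ plus a diagonal extraction yields, along a further subsequence, $\inner{A(z),h-z}\geq 0$. Alternatively, one invokes the standard lemma (used in \cite{Vuong18}) that for a sequentially weakly continuous $A$ and $w_{n_k}\wlim z$ one has $\inner{A(z),h-z}\leq\liminf_k\inner{A(w_{n_k}),h-w_{n_k}}$ — this is precisely where sequential weak continuity of $A$ (Assumption~\ref{ass:Lipschitz}) is essential. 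Either way we obtain $\inner{A(z),h-z}\geq 0$ for all $h\in\scrX$.

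\textbf{Step 4: Conclusion.} Since $A$ is pseudo-monotone (Assumption~\ref{ass:monotone}) and continuous, and we have shown $z\in\scrX$ together with $\inner{A(h),h-z}\geq 0$ for all $h\in\scrX$ — wait, more precisely we derived $\inner{A(z),h-z}\geq 0$; applying pseudo-monotonicity in the form $\inner{A(z),h-z}\geq 0\Rightarrow\inner{A(h),h-z}\geq 0$ gives the weak formulation — Lemma~\ref{002} then implies $z$ solves $\VI(\scrX,A)$, i.e. $z\in\Omega$. I expect Step~3 to be the only genuinely non-routine part; Steps~1, 2, and 4 are bookkeeping with the projection inequality, boundedness from Lipschitz continuity, and the cited equivalence lemma.
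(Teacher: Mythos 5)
Your Steps 1 and 2 coincide with the paper's argument: the projection inequality plus boundedness of $\{w_{n_k}\}$, $\{y_{n_k}\}$, $\{A(w_{n_k})\}$ and the lower bound $\tau_{n_k}\geq\min\{\tau_0,\mu/L\}$ give $\liminf_{k\to\infty}\inner{A(w_{n_k}),x-w_{n_k}}\geq 0$ for every $x\in\scrX$, and $z\in\scrX$ by weak closedness. The genuine gap is Step 3. The ``standard lemma'' you invoke --- that sequential weak-to-weak continuity of $A$ and $w_{n_k}\wlim z$ imply $\inner{A(z),h-z}\leq\liminf_k\inner{A(w_{n_k}),h-w_{n_k}}$ --- is false: take $A=\mathrm{Id}$ (weak-to-weak continuous), $h=0$, and $w_{n_k}=e_k$ an orthonormal sequence, so $z=0$; then the right-hand side is $\liminf_k\inner{e_k,-e_k}=-1$ while the left-hand side is $0$. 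Affine variants such as $A(w)=-w+c$ with $w_{n_k}=e_k$ further show that one can have $\liminf_k\inner{A(w_{n_k}),h-w_{n_k}}\geq 0$ while $\inner{A(z),h-z}<0$, so no diagonal extraction or vanishing-error device can rescue the implication: the obstruction is exactly the weak-times-weak product you identified, and weak continuity of $A$ does not remove it. (Your Step 4 is also circular: if you really had $z\in\scrX$ and $\inner{A(z),h-z}\geq 0$ for all $h\in\scrX$, then $z\in\Omega$ by definition \eqref{eq:DUE}, and neither pseudo-monotonicity nor Lemma~\ref{002} would be needed.)

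The paper's proof never establishes $\inner{A(z),h-z}\geq 0$. It first transfers the liminf inequality from $w_{n_k}$ to $y_{n_k}$ (using $\norm{A(w_{n_k})-A(y_{n_k})}\to 0$), then, for $\eps_k\downarrow 0$, picks indices $N_k$ with $\inner{A(y_{n_j}),x-y_{n_j}}+\eps_k\geq 0$ for $j\geq N_k$ and sets $v_{N_k}=A(y_{N_k})/\norm{A(y_{N_k})}^2$, so that $\inner{A(y_{N_k}),x+\eps_k v_{N_k}-y_{N_k}}\geq 0$. Pseudo-monotonicity (Assumption~\ref{ass:monotone}) is applied \emph{before} passing to the limit, yielding $\inner{A(x+\eps_k v_{N_k}),x+\eps_k v_{N_k}-y_{N_k}}\geq 0$. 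Now the first slot of the inner product is $A$ evaluated along a \emph{strongly} convergent sequence $x+\eps_k v_{N_k}\to x$ (the fact that $\eps_k v_{N_k}\to 0$ is where sequential weak continuity of $A$ and weak lower semicontinuity of the norm are actually used, via $\liminf_k\norm{A(y_{n_k})}\geq\norm{A(z)}>0$), so continuity of $A$ makes the first factor converge in norm while the second converges only weakly --- and such strong-times-weak products do pass to the limit. This gives the Minty inequality $\inner{A(x),x-z}\geq 0$ for all $x\in\scrX$, and only then is Lemma~\ref{002} invoked to conclude $z\in\Omega$. You should restructure Step 3 along these lines.
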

\begin{proof} 
Recall that $y_{n}=P_{\scrX}(w_{n}-\tau_{n}A(w_{n}))$. By Lemma \ref{lem:projection}(i), we have
\begin{equation*}
\langle w_{n_k}-\tau_{n_k} A(w_{n_k})-y_{n_k},x-y_{n_k}\rangle \le 0,\;\; \forall x\in \scrX,
\end{equation*}
or, equivalently,
\begin{equation*}
\dfrac{1}{\tau_{n_k}}\langle w_{n_k}-y_{n_k},x-y_{n_k}\rangle \le \langle A(w_{n_k}),x-y_{n_k}\rangle,\;\; \forall x\in \scrX.
\end{equation*}
Consequently, we have
\begin{equation}\label{aa3}
\dfrac{1}{\tau_{n_k}}\langle w_{n_k}-y_{n_k},x-y_{n_k}\rangle +\langle A(w_{n_k}),y_{n_k}-w_{n_k}\rangle  \le \langle A(w_{n_k}),x-w_{n_k}\rangle,\;\; \forall x\in \scrX.
\end{equation}
Since $\{w_{n_k}\}$ is weakly convergent, it is bounded. 
Then, by the Lipschitz continuity of $A$, $\{A(w_{n_k})\}$ is bounded. 
As $\|w_{n_k}-y_{n_k}\|\to 0$, $\{y_{n_k}\}$ is also bounded and $\tau_{n_k} \geq \min\{\tau_{0},\dfrac{\mu}{L}\}$. 
Passing (\ref{aa3}) to the limit as $k\to \infty$, we get
\begin{equation}\label{13}
\liminf_{k\to\infty}\langle A(w_{n_k}),x-w_{n_k}\rangle \geq 0,\;\;\forall x\in \scrX.
\end{equation}
Moreover, we have
\begin{align*}
\langle A(y_{n_k}),x-y_{n_k}\rangle&=\langle A(y_{n_k})- A(w_{n_k}),x-w_{n_k}\rangle+\langle A(w_{n_k}),x-w_{n_k}\rangle \\
&+\langle A(y_{n_k}),w_{n_k}-y_{n_k}\rangle\\
&\geq -\norm{A(y_{n_{k}})-A(w_{n_{k}})}\cdot\norm{x-w_{n_{k}}}+\inner{A(w_{n_{k}}),x-w_{n_{k}}}\\
&-\norm{A(y_{n_{k}})}\cdot\norm{w_{n_{k}}-y_{n_{k}}}.
\end{align*}
Since $\lim_{k\to\infty}\|w_{n_k}-y_{n_k}\|=0$ and $A$ is $L$-Lipschitz continuous on $H$, we get from the above 
\begin{equation*}
\lim_{k\to\infty}\|A(w_{n_k})-A(y_{n_k})\|=0.
\end{equation*} 
Together with (\ref{13}), we obtain
\begin{equation*}\label{aa5}
\liminf_{k\to\infty}\langle A(y_{n_k}),x-y_{n_k}\rangle \geq 0\qquad\forall x\in\scrX.
\end{equation*}
Next, we show that $z\in\Omega.$ We choose a sequence $\{\epsilon_k\}$ of positive numbers decreasing and tending to
$0$. For each $k\geq 1$, we denote by $N_k$ the smallest positive integer such that
\begin{equation}\label{aa7}
\langle A(y_{n_j}),x-y_{n_j}\rangle +\epsilon_k \geq 0,\;\; \forall j\geq N_k.
\end{equation}
Since $\{ \epsilon_k\}$ is decreasing, it is easy to see
that the sequence $\{N_k\}$ is increasing. Furthermore, for each $k\geq 1$, since $\{y_{N_k}\}\subset \scrX$, we can suppose $A(y_{N_k})\ne 0$ (otherwise, $y_{N_k}$ is a solution) and, setting
\begin{equation*}
v_{N_k} =
\dfrac{A(y_{N_k})}{\|A(y_{N_k})\|^2 },
\end{equation*}
we have $\langle A(y_{N_k}), v_{N_k}\rangle = 1$ for each $k\geq 1$. Now, we can deduce from (\ref{aa7}) that, for each $k\geq 1$,
\begin{equation*}
\langle A(y_{N_k}), x+\epsilon_k v_{N_k}-y_{N_k}\rangle=\inner{A(y_{N_{k}}),x-y_{n_{k}}}+\epsilon_{k} \geq 0.
\end{equation*}
Since $A$ is pseudo-monotone (Assumption \ref{ass:monotone}) on $\scrH$, we get
\begin{equation*}
\langle A(x+\epsilon_k v_{N_k}), x+\epsilon_k v_{N_k}-y_{N_k}\rangle \geq 0.
\end{equation*}
This implies that
\begin{equation}\label{aa8}
\langle A(x), x-y_{N_k}\rangle \geq \langle A(x)-A(x+\epsilon_k v_{N_k}), x+\epsilon_k v_{N_k}-y_{N_k} \rangle-\epsilon_k \langle A(x), v_{N_k}\rangle.
\end{equation}

Now, we show that $\lim_{k\to\infty}\epsilon_k v_{N_k}=0$. Indeed, since $w_{n_k}\rightharpoonup z$ and $\lim_{k\to\infty}\|w_{n_k}-y_{n_k}\|=0,$ we obtain  $y_{N_k}\rightharpoonup z  \text{ as } k \to \infty$. Since $\{y_n\}\subset \scrX$, we clearly have $z\in \scrX$ as well. Since
$A$ is sequentially weakly continuous on $\scrX$, $\{ A(y_{n_k})\}$ converges weakly to $A(z)$. We can suppose $A(z) \ne 0$ (otherwise, $z$ is a solution). Since the norm mapping is
sequentially weakly lower semi-continuous, we have
\begin{equation*}
0 < \|A(z)\|\le \liminf_{k\to \infty}\|A(y_{n_k})\|.
\end{equation*}
Together with $\{y_{N_k}\}\subset \{y_{n_k}\}$ and $\epsilon_k \to 0$ as $k \to \infty$, we readily conclude
\begin{align*}
0 \le \limsup_{k\to\infty} \|\epsilon_k  v_{N_k} \|= \limsup_{k\to\infty} \left(\dfrac{\epsilon_k}{\|A(y_{n_k})\|}\right)\le \dfrac{\limsup_{k\to\infty}\epsilon_k } {\liminf_{k\to\infty}\|A(y_{n_k})\|}=0.
\end{align*}
Hence, $\lim_{k\to\infty} \epsilon_k v_{N_k} = 0$.

Now, letting $k\to \infty$, then the right hand side of (\ref{aa8}) tends to zero by $A$ is uniformly continuous, $\{w_{N_k}\}, \{v_{N_k}\}$ are bounded and  $\lim_{k\to\infty}\epsilon_k v_{N_k}=0$. Thus, we get
\begin{equation*}
\liminf_{k\to\infty}\langle A(x),x-y_{N_k}\rangle \geq 0.
\end{equation*}
Hence, for all $x\in \scrX$, we have
\begin{equation*}
\langle A(x), x-z\rangle=\lim_{k\to\infty} \langle A(x), x-y_{N_k}\rangle =\liminf_{k\to\infty} \langle A(x), x-y_{N_k}\rangle \geq 0.
\end{equation*}
By Lemma \ref{002}, $z\in\Omega$. This completes the proof.
\end{proof}

The next result established the boundedness of the sequence of path flows. 
\begin{lemma}
\label{Claim1}
Let Assumptions \ref{ass:1}-\ref{ass:monotone} hold. The sequence of path flows $\{h_n\}_{n=1}^\infty$ generated by Algorithm \ref{alg:IFBF} is bounded. In addition,
\begin{equation}\label{xx11}
\|h_{n+1}-h^*\|^2\le \|w_n-h^*\|^2-\lambda\Big(1-\mu\dfrac{\tau_n}{\tau_{n+1}}\Big)\Big(2-\lambda+\lambda\mu \dfrac{\tau_n}{\tau_{n+1}} \Big)\|y_n-w_n\|^2.
\end{equation}
\end{lemma}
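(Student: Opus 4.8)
The plan is to derive the inequality \eqref{xx11} first, and then conclude boundedness of $\{h_n\}$ from it together with the definition of $w_n$ and the assumptions \eqref{para} on the sequences $\{\epsilon_n\},\{\beta_n\}$. The starting point is the update $h_{n+1}=(1-\lambda)w_n+\lambda\bigl(y_n+\tau_n(A(w_n)-A(y_n))\bigr)$. Writing $z_{n+1}:=y_n+\tau_n(A(w_n)-A(y_n))$, I would estimate $\norm{z_{n+1}-h^\ast}^2$ for a fixed $h^\ast\in\Omega$. Expanding,
\begin{align*}
\norm{z_{n+1}-h^\ast}^2 &= \norm{y_n-h^\ast}^2 + 2\tau_n\inner{A(w_n)-A(y_n),y_n-h^\ast} + \tau_n^2\norm{A(w_n)-A(y_n)}^2.
\end{align*}
For the first term, use Lemma \ref{lem:projection}(ii) with $C=\scrX$, $x=w_n-\tau_nA(w_n)$, $y=h^\ast$: this gives $\norm{y_n-h^\ast}^2\le\norm{w_n-\tau_nA(w_n)-h^\ast}^2-\norm{w_n-\tau_nA(w_n)-y_n}^2$, which after expansion reads $\norm{y_n-h^\ast}^2\le\norm{w_n-h^\ast}^2-\norm{w_n-y_n}^2+2\tau_n\inner{A(w_n),h^\ast-y_n}$. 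Substituting and regrouping the inner-product terms, the $\inner{A(w_n),\cdot}$ contributions combine to $2\tau_n\inner{A(w_n),h^\ast-w_n}+2\tau_n\inner{A(w_n),w_n-y_n}$, and the $A(y_n)$ terms leave $-2\tau_n\inner{A(y_n),y_n-h^\ast}$; one then uses $h^\ast\in\Omega=\Sol(\scrX,A)$ — more precisely the weak formulation of Lemma \ref{002}, $\inner{A(y_n),y_n-h^\ast}\ge 0$ since $y_n\in\scrX$, after checking $\inner{A(h^\ast),y_n-h^\ast}\ge 0$ and invoking pseudo-monotonicity — to discard the sign-definite term. What remains is bounded via Cauchy–Schwarz and the step-size control from Lemma \ref{lem:lambda}, $\norm{A(w_n)-A(y_n)}\le\frac{\mu}{\tau_{n+1}}\norm{w_n-y_n}$, turning the $\tau_n^2\norm{A(w_n)-A(y_n)}^2$ term into $\le\mu^2\frac{\tau_n^2}{\tau_{n+1}^2}\norm{w_n-y_n}^2$ and the cross term $2\tau_n\inner{A(w_n)-A(y_n),y_n-w_n}$ into something bounded by $2\mu\frac{\tau_n}{\tau_{n+1}}\norm{w_n-y_n}^2$. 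Collecting,
\[
\norm{z_{n+1}-h^\ast}^2\le\norm{w_n-h^\ast}^2-\Bigl(1-\mu\tfrac{\tau_n}{\tau_{n+1}}\Bigr)^2\norm{w_n-y_n}^2,
\]
or a comparable expression. Then apply convexity of $\norm{\cdot}^2$ to $h_{n+1}=(1-\lambda)w_n+\lambda z_{n+1}$:
\[
\norm{h_{n+1}-h^\ast}^2\le(1-\lambda)\norm{w_n-h^\ast}^2+\lambda\norm{z_{n+1}-h^\ast}^2-\lambda(1-\lambda)\norm{z_{n+1}-w_n}^2,
\]
and since $z_{n+1}-w_n = (y_n-w_n)+\tau_n(A(w_n)-A(y_n))$ the last term contributes an extra multiple of $\norm{w_n-y_n}^2$ (again using Lemma \ref{lem:lambda} to bound it below); after combining the $\norm{w_n-y_n}^2$ coefficients one recovers exactly the factor $-\lambda(1-\mu\frac{\tau_n}{\tau_{n+1}})(2-\lambda+\lambda\mu\frac{\tau_n}{\tau_{n+1}})$ claimed in \eqref{xx11}.

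With \eqref{xx11} in hand, note that since $\tau_n/\tau_{n+1}\to 1$ and $\mu<1$, for $n$ large the bracket $\bigl(1-\mu\frac{\tau_n}{\tau_{n+1}}\bigr)$ is bounded away from $0$ and positive, while $\bigl(2-\lambda+\lambda\mu\frac{\tau_n}{\tau_{n+1}}\bigr)>0$ always since $\lambda\in(0,1)$; hence the subtracted term is $\ge 0$ eventually, giving $\norm{h_{n+1}-h^\ast}\le\norm{w_n-h^\ast}$ for $n$ large. It remains to control $\norm{w_n-h^\ast}$ in terms of $\norm{h_n-h^\ast}$ and $\norm{h_{n-1}-h^\ast}$. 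From $w_n=(1-\beta_n)[h_n+\alpha_n(h_n-h_{n-1})]$ we get $\norm{w_n-h^\ast}\le(1-\beta_n)\norm{h_n-h^\ast+\alpha_n(h_n-h_{n-1})}+\beta_n\norm{h^\ast}\le(1-\beta_n)\norm{h_n-h^\ast}+(1-\beta_n)\alpha_n\norm{h_n-h_{n-1}}+\beta_n\norm{h^\ast}$, and the crucial point is that the inertia rule \eqref{vvxvx1} enforces $\alpha_n\norm{h_n-h_{n-1}}\le\epsilon_n$, with $\epsilon_n/\beta_n\to 0$; thus $\alpha_n\norm{h_n-h_{n-1}}=\beta_n\cdot\frac{\epsilon_n}{\beta_n}\cdot\frac{\alpha_n\norm{h_n-h_{n-1}}}{\epsilon_n}\le\beta_n\cdot o(1)$. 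So $\norm{w_n-h^\ast}\le(1-\beta_n)\norm{h_n-h^\ast}+\beta_n M_n$ with $M_n$ bounded (it involves $\norm{h^\ast}$ plus a vanishing term). Chaining with $\norm{h_{n+1}-h^\ast}\le\norm{w_n-h^\ast}$ for large $n$ gives a recursion $\norm{h_{n+1}-h^\ast}\le(1-\beta_n)\norm{h_n-h^\ast}+\beta_n M_n$, and a standard induction (the finitely many small-$n$ terms only affect a bounded prefactor) shows $\norm{h_n-h^\ast}\le\max\{\norm{h_{n_0}-h^\ast},\sup_n M_n\}$ for all $n\ge n_0$, i.e. $\{h_n\}$ is bounded; boundedness of $\{h_n\}$ then propagates to $\{w_n\}$ and, via Lemma \ref{lem:lambda} and Lipschitz continuity, to $\{y_n\}$ and $\{z_n\}$.

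The main obstacle I anticipate is bookkeeping the inner-product cancellations cleanly so that the $\norm{w_n-y_n}^2$ coefficient ends up in precisely the form $-\lambda(1-\mu\frac{\tau_n}{\tau_{n+1}})(2-\lambda+\lambda\mu\frac{\tau_n}{\tau_{n+1}})$ rather than a weaker bound — this requires being careful to use the convexity inequality for $h_{n+1}$ together with the refined estimate on $\norm{z_{n+1}-w_n}^2$ (expanding $\norm{z_{n+1}-w_n}^2=\norm{w_n-y_n}^2-2\tau_n\inner{A(w_n)-A(y_n),w_n-y_n}+\tau_n^2\norm{A(w_n)-A(y_n)}^2$) instead of simply dropping it. A secondary subtlety is that \eqref{xx11} is only useful once $n$ is large enough that $\mu\tau_n/\tau_{n+1}<1$; for the initial segment one argues by boundedness of the finitely many iterates, which is harmless. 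The pseudo-monotonicity step — replacing $\inner{A(h^\ast),y_n-h^\ast}\ge 0$ (which holds because $h^\ast$ solves the VI and $y_n\in\scrX$) by $\inner{A(y_n),y_n-h^\ast}\ge 0$ — is exactly the one point where Assumption \ref{ass:monotone} enters, and must not be skipped.
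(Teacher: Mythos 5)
Your proposal is correct and follows essentially the same route as the paper: the projection inequality to bound $\|y_n-h^*\|^2$, pseudo-monotonicity applied at $h^*\in\Omega$ with $y_n\in\scrX$ to discard $\langle A(y_n),y_n-h^*\rangle$, the adaptive step-size bound from Lemma \ref{lem:lambda} with Cauchy--Schwarz, and the max-type recursion via $\alpha_n\|h_n-h_{n-1}\|\le\epsilon_n$ for boundedness; your use of the two-point convexity identity for $h_{n+1}=(1-\lambda)w_n+\lambda z_{n+1}$ is just a reorganization of the paper's direct expansion of the three-term square. One small correction: the intermediate bound should be $\|z_{n+1}-h^*\|^2\le\|w_n-h^*\|^2-\bigl(1-\mu^2\tau_n^2/\tau_{n+1}^2\bigr)\|w_n-y_n\|^2$ (the cross terms cancel exactly, leaving $1-t^2$ rather than $(1-t)^2$), which is what makes the final coefficient factor as claimed.
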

\begin{proof}
We have
\begin{align}
\|h_{n+1}-h^*\|^2&= \|(1-\lambda)w_n +\lambda (y_n-\tau_n(A(y_n)-A(w_n)))-h^*\|^2\notag\\
&= \|(1-\lambda)(w_n -h^*)+\lambda (y_n-h^*)+\lambda\tau_n(A(w_n)-A(y_n)))\|^2\notag\\
&=(1-\lambda)^2\|w_n-h^*\|^2+\lambda^2\|y_n-h^*\|^2+\lambda^2\tau_n^2\|A(w_n)-A(y_n)\|^2\notag\\
&+2(1-\lambda)\lambda\langle w_n-h^*,y_n-h^*\rangle +2(1-\lambda)\lambda\tau_n\langle w_n-h^*,A(w_n)-A(y_n)\rangle \notag \\
&+2\lambda^2\tau_n\langle y_n-h^*,A(w_n)-A(y_n)\rangle\label{t3324}.
\end{align}
Combining
\begin{equation}\label{aabc}
2\langle w_n-h^*,y_n-h^*\rangle=\|w_n-h^*\|^2+\|y_n-h^*\|^2-\|w_n-y_n\|^2,
\end{equation}
with the definition of $\{\tau_n\}$, it is easy to see that 
\begin{equation}\label{xxxv1}
\|A(w_n)-A(y_n)\|\le \dfrac{\mu}{\tau_{n+1}}\|w_n-y_n\|,\,\,\,  \forall n\geq 0.
\end{equation}
Substituting \eqref{aabc} and \eqref{xxxv1} into \eqref{t3324}, we get
\begin{align}
\|h_{n+1}-h^*\|^2&\le (1-\lambda)\|w_n-h^*\|^2+\lambda\|y_n-h^*\|^2+\lambda^2\dfrac{\tau_n^2}{\tau^2_{n+1}}\mu^2\|w_n-y_n\|^2-(1-\lambda)\lambda\| w_n-y_n\|^2 \notag\\
&  +2(1-\lambda)\lambda\tau_n\langle w_n-h^*,A(w_n)-A(y_n)\rangle+2\lambda^2\tau_n\langle y_n-h^*,A(w_n)-A(y_n)\rangle\label{abc45}.
\end{align}
Lemma \ref{lem:projection}(i) yields the estimate
\begin{align*}
\|y_n-h^*\|^2&=\inner{y_{n}-h^{\ast},y_{n}-h^{\ast}}\\
&=\inner{P_{\scrX}(w_{n}-\tau_{n}A(w_{n}))-P_{\scrX}(h^{\ast}),P_{\scrX}(w_{n}-\tau_{n}A(w_{n}))-P_{\scrX}(h^{\ast})}\\
&=\inner{y_{n}-h^{\ast},w_{n}-\tau_{n}A(w_{n})-h^{\ast}}+\inner{P_{\scrX}(w_{n}-\tau_{n}A(w_{n})-P_{\scrX}(h^{\ast}),P_{\scrX}(w_{n}-\tau_{n}A(w_{n}))-w_{n}+\tau_{n}A(w_{n})}\\
&\le \langle y_n-h^*,w_n-\tau_n A(w_n)-h^{\ast}\rangle\\
&=\dfrac{1}{2}\|y_n-h^*\|^2+\dfrac{1}{2}\|w_n-\tau_n A(w_n)-h^*\|^2-\dfrac{1}{2}\|(w_n-h^*)-(w_n-\tau_n A(w_n))\|^2\\
&=\dfrac{1}{2}\|y_n-h^*\|^2+\dfrac{1}{2}\|w_n-h^*\|^2-\dfrac{1}{2}\|y_n-w_n\|^2-\langle y_n-h^*,\tau_n A(w_n)\rangle, 
\end{align*}
or equivalently
\begin{align}\label{abc1}
\|y_n-h^*\|^2 \le \|w_n-h^*\|^2-\|y_n-w_n\|^2-2\langle y_n-h^*,\tau_n A(w_n)\rangle.
\end{align}
Since $h^*\in \Omega$, we have $\langle A(h^*),y_n - h^*\rangle \geq 0$. It follows from the pseudo-monotonicity of $A$ that
\begin{equation}\label{abc2}
2\langle \tau_n A(y_n),y_n - h^*\rangle \geq 0.
\end{equation}
Adding \eqref{abc1} and \eqref{abc2}, we obtain
\begin{align}\label{abc3}
\|y_n-h^*\|^2 \le \|w_n-h^*\|^2-\|y_n-w_n\|^2-2\tau_n\langle y_n-h^*, A(w_n)-A(y_n)\rangle.
\end{align}
Substituting \eqref{abc3} into \eqref{abc45}, we get
\begin{align}
\|h_{n+1}-h^*\|^2
&\le 
(1-\lambda)\|w_n-h^*\|^2+\lambda\|w_n-h^*\|^2-\lambda\|y_n-w_n\|^2-2\lambda\tau_n\langle y_n-h^*, A(w_n)-A(y_n)\rangle\notag\\
&\quad +\lambda^2\dfrac{\tau_n^2}{\tau^2_{n+1}}\mu^2\|w_n-y_n\|^2-(1-\lambda)\lambda\| w_n-y_n\|^2 \notag\\
&  \quad +2(1-\lambda)\lambda\tau_n\langle w_n-h^*,A(w_n)-A(y_n)\rangle+2\lambda^2\tau_n\langle y_n-h^*,A(w_n)-A(y_n)\rangle\notag\\
&= 
(1-\lambda)\|w_n-h^*\|^2+\lambda\|w_n-h^*\|^2-\lambda\|y_n-w_n\|^2-2\lambda\tau_n\langle y_n-h^*, A(w_n)-A(y_n)\rangle\notag\\
&\quad +\lambda^2\dfrac{\tau_n^2}{\tau^2_{n+1}}\mu^2\|w_n-y_n\|^2-(1-\lambda)\lambda\tau_n\| w_n-y_n\|^2 \notag\\
&\quad   +2(1-\lambda)\lambda\tau_n\langle y_n-h^*,A(w_n)-A(y_n)\rangle\notag\\
&\quad +2(1-\lambda)\lambda\tau_n\langle w_n-y_n,A(w_n)-A(y_n)\rangle+2\lambda^2\tau_n\langle y_n-h^*,A(w_n)-A(y_n)\rangle\notag\\
&= \|w_n-h^*\|^2\notag\\
&\quad +\lambda^2\dfrac{\tau_n^2}{\tau^2_{n+1}}\mu^2\|w_n-y_n\|^2-(2-\lambda)\lambda\| w_n-y_n\|^2 \notag\\
&\quad +2(1-\lambda)\lambda\tau_n\langle w_n-y_n,A(w_n)-A(y_n)\rangle\notag\\
&\le \|w_n-h^*\|^2\notag\\
&\quad+\lambda^2\dfrac{\tau_n^2}{\tau^2_{n+1}}\mu^2\|w_n-y_n\|^2-(2-\lambda)\lambda\| w_n-y_n\|^2 \notag\\
&\quad+2(1-\lambda)\lambda\dfrac{\tau_n}{\tau_{n+1}}\mu \|w_n-y_n\|^2\notag\\
&= \|w_n-h^*\|^2-\lambda\bigg[2-\lambda -\lambda \dfrac{\tau^2_n}{\tau^2_{n+1}}\mu^2-2(1-\lambda)\dfrac{\tau_n}{\tau_{n+1}}\mu\bigg]\|w_n-y_n\|^2\notag\\
 &=\|w_n-h^*\|^2-\lambda\Big(1-\mu\dfrac{\tau_n}{\tau_{n+1}}\Big)\Big(2-\lambda+\lambda\mu \dfrac{\tau_n}{\tau_{n+1}} \Big)\|y_n-w_n\|^2
\label{abc4}.
\end{align}
Since
$$
\lim_{n\to\infty}\Big(1-\mu\dfrac{\tau_n}{\tau_{n+1}}\Big)\Big(2-\lambda+\lambda\mu \dfrac{\tau_n}{\tau_{n+1}} \Big)=(1-\mu)(2-\lambda+\lambda \mu)>0
$$
there exists $n_0\in \mathbb{N}$ such that 
$$\Big(1-\mu\dfrac{\tau_n}{\tau_{n+1}}\Big)\Big(2-\lambda+\lambda\mu \dfrac{\tau_n}{\tau_{n+1}} \Big)>0 \ \ \forall n\geq n_0.
$$
Hence
\begin{equation}\label{v1v}
\|h_{n+1}-h^*\|\le \|w_n-h^*\|  \ \ \forall n\geq n_0.
\end{equation}
On the one hand, using the definition of $w_n$, we  obtain
\begin{align}\label{vvvx2}
\|w_n-h^*\|&=\|(1-\beta_n)(h_n+\alpha_n(h_n-h_{n-1}))-h^*\|\notag\\
&=\|(1-\beta_n)(h_n-h^*)+(1-\beta_n)\alpha_n(h_n-h_{n-1})-\beta_n h^*\|\notag\\
&\le (1-\beta_n)\|h_n-h^*\|+(1-\beta_n)\alpha_n\|h_n-h_{n-1}\|+\beta_n \|h^*\|\notag\\
&=(1-\beta_n)\|h_n-h^*\|+\beta_n [ (1-\beta_n)\dfrac{\alpha_n}{\beta_n}\|h_n-h_{n-1}\|+\|h^*\|].
\end{align}
From \eqref{vvxvx1} and \eqref{para}, we have
\begin{equation*}
\dfrac{\alpha_n}{\beta_n}\|h_n-h_{n-1}\|\le \dfrac{\epsilon_n}{\beta_n}\to 0.
\end{equation*}
Hence,
$$
\lim_{n\to\infty}\bigg[ (1-\beta_n)\dfrac{\alpha_n}{\beta_n}\|h_n-h_{n-1}\|+\|h^*\|\bigg]=\|h^*\|,
$$
Therefore, there exists $M>0$ such that
\begin{equation}\label{vvvx1}
(1-\beta_n)\dfrac{\alpha_n}{\beta_n}\|h_n-h_{n-1}\|+\|h^*\|\le M.
\end{equation}
Combining \eqref{vvvx2} and \eqref{vvvx1} we obtain
\begin{equation}\label{v2v}
\|w_n-h^*\|\le (1-\beta_n)\|h_n-h^*\|+\beta_n  M.
\end{equation}
Hence, from \eqref{v1v} and \eqref{v2v}, we have
\begin{align*}
\|h_{n+1}-h^*\|\le &(1-\beta_n)\|h_n-h^*\|+\beta_n  M\\
&=\max\{\|h_n-h^*\|, M \}\leq \ldots\le \max\{\|h_{n_0}-h^*\|,M\}.
\end{align*}
Therefore, the sequence $\{h_n\}_{n=1}^\infty$ is bounded.
\end{proof}
\begin{lemma}\label{Claim2}
It holds that
\begin{align*}
\lambda\Big(1-\mu\dfrac{\tau_n}{\tau_{n+1}}\Big)\Big(2-\lambda+\lambda\mu \dfrac{\tau_n}{\tau_{n+1}} \Big)\|w_n-y_n\|^2\le\|h_n-h^*\|^2-\|h_{n+1}-h^*\|^2+\beta_n M_1.
\end{align*}
\end{lemma}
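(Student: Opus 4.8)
The plan is to obtain this estimate as an almost immediate consequence of Lemma~\ref{Claim1}, so no new analytical idea is needed — only careful bookkeeping of constants. First I would start from inequality~\eqref{xx11},
\[
\|h_{n+1}-h^*\|^2\le \|w_n-h^*\|^2-\lambda\Big(1-\mu\dfrac{\tau_n}{\tau_{n+1}}\Big)\Big(2-\lambda+\lambda\mu \dfrac{\tau_n}{\tau_{n+1}} \Big)\|y_n-w_n\|^2,
\]
and simply transpose the last term to the left and $\|h_{n+1}-h^*\|^2$ to the right, which gives
\[
\lambda\Big(1-\mu\dfrac{\tau_n}{\tau_{n+1}}\Big)\Big(2-\lambda+\lambda\mu \dfrac{\tau_n}{\tau_{n+1}} \Big)\|w_n-y_n\|^2\le \|w_n-h^*\|^2-\|h_{n+1}-h^*\|^2 .
\]
It then remains only to bound $\|w_n-h^*\|^2$ from above by $\|h_n-h^*\|^2+\beta_n M_1$ for a suitable constant $M_1>0$.

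Second, I would reuse the bound \eqref{v2v} already derived inside the proof of Lemma~\ref{Claim1}, namely $\|w_n-h^*\|\le(1-\beta_n)\|h_n-h^*\|+\beta_n M$, with $M$ the constant from \eqref{vvvx1}. Squaring this, using $(1-\beta_n)^2\le 1$ and $\beta_n^2\le\beta_n$, and using that $\{h_n\}$ is bounded by Lemma~\ref{Claim1}, say $C:=\sup_n\|h_n-h^*\|<\infty$, one arrives at
\[
\|w_n-h^*\|^2\le \|h_n-h^*\|^2+\beta_n\big(2MC+M^2\big).
\]
Setting $M_1:=2MC+M^2$ and substituting into the previous display produces exactly the asserted inequality.

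The only place that calls for a moment's care is the passage from \eqref{v2v} to the squared estimate: one must retain the leading term via $(1-\beta_n)^2\|h_n-h^*\|^2\le\|h_n-h^*\|^2$ rather than discarding the square, and control the cross term $2(1-\beta_n)\beta_n M\|h_n-h^*\|$ by $2\beta_n M C$. There is no genuine obstacle here; all the substantive work — the descent computation culminating in \eqref{abc4}, the admissible step-size behaviour from Lemma~\ref{lem:lambda}, and the boundedness of the iterates — has already been carried out in Lemmas~\ref{lem:lambda} and~\ref{Claim1}, and Lemma~\ref{Claim2} is essentially a rearrangement packaged for later use in the Xu- and Saejung-type recursions (Lemmas~\ref{lem:summation}--\ref{Saejung1}) within the final strong-convergence argument.
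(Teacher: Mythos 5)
Your proof is correct and follows essentially the same route as the paper's: rearrange \eqref{xx11} and then dominate $\|w_n-h^*\|^2$ by $\|h_n-h^*\|^2+\beta_n M_1$ via squaring \eqref{v2v} and using the boundedness of $\{h_n\}$ from Lemma~\ref{Claim1}. The only (immaterial) difference is your explicit constant $M_1=2MC+M^2$ versus the paper's choice $M_1=\max\{2(1-\beta_n)M\|h_n-h^*\|+\beta_n M^2:\ n\in\mathbb{N}\}$, both of which are finite for the same reason.
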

\begin{proof}
Eq. \eqref{v2v} yields 
\begin{align}\label{vvvx3}
\|w_n-h^*\|^2&\le (1-\beta_n)^2\|h_n-h^*\|^2+2\beta_n(1-\beta_n)M \|h_n-h^*\|+\beta_n^2M^2\notag\\
&\le \|h_n-h^*\|^2+\beta_n [2(1-\beta_n)M \|h_n-h^*\|+\beta_n M^2]\notag\\
&\le \|h_n-h^*\|^2+\beta_n M_1,
\end{align}
where $M_1:=\max \{2(1-\beta_n)M \|h_n-h^*\|+\beta_n M^2:\  n\in \mathbb{N}\}$. Substituting \eqref{vvvx3} into \eqref{abc4} we get
\begin{align*}\label{vvvx4}
\|h_{n+1}-h^*\|^2&\le \|h_n-h^*\|^2+\beta_n M_1-\lambda\Big(1-\mu\dfrac{\tau_n}{\tau_{n+1}}\Big)\Big(2-\lambda+\lambda\mu \dfrac{\tau_n}{\tau_{n+1}} \Big)\|w_n-y_n\|^2,
\end{align*}
or equivalently
\begin{align*}
\lambda\Big(1-\mu\dfrac{\tau_n}{\tau_{n+1}}\Big)\Big(2-\lambda+\lambda\mu \dfrac{\tau_n}{\tau_{n+1}} \Big)\|w_n-y_n\|^2
\le\|h_n-h^*\|^2-\|h_{n+1}-h^*\|^2+\beta_n M_1.
\end{align*}

\end{proof}
\begin{lemma}\label{Claim3}
It holds that 
\begin{align*}
\|h_{n+1}-h^*\|^2\le &(1-\beta_n)\|h_n-h^*\|^2+\beta_n\bigg[2(1-\beta_n)\|h_n-h^*\|\dfrac{\alpha_n}{\beta_n} \|h_n-h_{n-1}\|\\
&+\dfrac{\alpha^{2}_n}{\beta_n}\|h_n-h_{n-1}\|^{2}+2 \|h^*\|\cdot \|w_n-h_{n+1}\|  +2 \langle -h^*,h_{n+1}-h^* \rangle  \bigg].
\end{align*}
\end{lemma}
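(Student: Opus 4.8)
The plan is to build directly on the contraction estimate already secured in Lemma~\ref{Claim1}. Indeed, \eqref{v1v} gives $\|h_{n+1}-h^*\|\le\|w_n-h^*\|$ for $n\ge n_0$ (the threshold $n_0$ coming from the positivity of the coefficient in \eqref{xx11}), so it suffices to bound $\|w_n-h^*\|^2$ from above by the claimed right-hand side. Everything else is a matter of unwinding the definition $w_n=(1-\beta_n)u_n$ with $u_n:=h_n+\alpha_n(h_n-h_{n-1})$ and splitting inner products so as to reintroduce the iterate $h_{n+1}$.

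Concretely, I would write $w_n-h^*=(1-\beta_n)(u_n-h^*)-\beta_n h^*$ and apply the elementary inequality \eqref{xx22} with $x=(1-\beta_n)(u_n-h^*)$ and $y=-\beta_n h^*$ (so $x+y=w_n-h^*$), obtaining $\|w_n-h^*\|^2\le(1-\beta_n)^2\|u_n-h^*\|^2+2\beta_n\langle -h^*,\,w_n-h^*\rangle$, and then $(1-\beta_n)^2\le 1-\beta_n$. Next I would reinsert $h_{n+1}$ via $\langle -h^*,w_n-h^*\rangle=\langle -h^*,w_n-h_{n+1}\rangle+\langle -h^*,h_{n+1}-h^*\rangle$, estimating the first summand by $\|h^*\|\,\|w_n-h_{n+1}\|$ with Cauchy--Schwarz; this produces exactly the last two bracketed terms. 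It then remains to expand $(1-\beta_n)\|u_n-h^*\|^2$: the triangle inequality $\|u_n-h^*\|\le\|h_n-h^*\|+\alpha_n\|h_n-h_{n-1}\|$ and squaring give $\|u_n-h^*\|^2\le\|h_n-h^*\|^2+2\alpha_n\|h_n-h^*\|\,\|h_n-h_{n-1}\|+\alpha_n^2\|h_n-h_{n-1}\|^2$, and multiplying by $(1-\beta_n)$ (using $1-\beta_n\le1$ on the last term only) yields $(1-\beta_n)\|h_n-h^*\|^2+2(1-\beta_n)\alpha_n\|h_n-h^*\|\,\|h_n-h_{n-1}\|+\alpha_n^2\|h_n-h_{n-1}\|^2$. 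Factoring $\beta_n$ out of the two inertial terms, writing $2(1-\beta_n)\alpha_n\|h_n-h^*\|\,\|h_n-h_{n-1}\|=\beta_n\cdot 2(1-\beta_n)\|h_n-h^*\|\tfrac{\alpha_n}{\beta_n}\|h_n-h_{n-1}\|$ and $\alpha_n^2\|h_n-h_{n-1}\|^2=\beta_n\tfrac{\alpha_n^2}{\beta_n}\|h_n-h_{n-1}\|^2$ (legitimate since $\beta_n\in(0,1)$), and collecting the four terms inside a single bracket gives precisely the asserted bound.

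The argument is essentially bookkeeping, so there is no deep obstacle; the only points requiring care are tracking which terms keep the factor $(1-\beta_n)$ and which are relaxed to coefficient $1$ — relaxation on $\alpha_n^2\|h_n-h_{n-1}\|^2$ is exactly what makes that term appear without a $(1-\beta_n)$ prefactor — and noting that the starting estimate $\|h_{n+1}-h^*\|\le\|w_n-h^*\|$ is available only for $n\ge n_0$, so the conclusion is to be read for $n\ge n_0$, which is all that the subsequent analysis needs.
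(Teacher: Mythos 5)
Your proposal is correct and follows essentially the same route as the paper's proof: start from $\|h_{n+1}-h^*\|\le\|w_n-h^*\|$ (inequality \eqref{v1v}), apply \eqref{xx22} with $y=-\beta_n h^*$, split $\langle -h^*,w_n-h^*\rangle$ through $h_{n+1}$, and bound the inertial terms by Cauchy--Schwarz before factoring out $\beta_n$ (your triangle-inequality-then-square step is the same estimate as the paper's expand-then-Cauchy--Schwarz). Your remark that the bound is only guaranteed for $n\ge n_0$, where \eqref{v1v} becomes available, is a legitimate precision that the paper's statement leaves implicit.
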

\begin{proof}
Using the inequalities \eqref{v1v} and then \eqref{xx22} as well as $\beta_{n}\in(0,1)$, we get
\begin{align*}
\|h_{n+1}-h^*\|^2\le& \|w_n-h^*\|^2\\
=&\|(1-\beta_n) (h_n-h^*)+(1-\beta_n) \alpha_n (h_n-h_{n-1})-\beta_n h^*\|^2\\
\le& \|(1-\beta_n)(h_n-h^*)+(1-\beta_n)\alpha_n (h_n-h_{n-1})\|^2+2\beta_n \langle -h^*,w_n-h^*\rangle \\
&=(1-\beta_{n}^{2}\norm{h_{n}-h^{\ast}}^{2}+\alpha^{2}_{n}(1-\beta_{n})^{2}\norm{h_{n}-h_{n-1}}^{2}+2\alpha_{n}(1-\beta_{n})^{2}\inner{h_{n}-h^{\ast},h_{n}-h_{n-1}}\\
&+2\beta_{n}\inner{-h^{\ast},w_{n}-h^{\ast}}\\
\leq& (1-\beta_n)\|h_n-h^*\|^2+2(1-\beta_n)\alpha_n\|h_n-h^*\| \|h_n-h_{n-1}\|+\alpha_n^2\|h_n-h_{n-1}\|^2\\
&+2\beta_{n} \langle -h^*,w_n-h_{n+1} \rangle+2\beta_{n} \langle -h^*,h_{n+1}-h^* \rangle  \\
\le &(1-\beta_n)\|h_n-h^*\|^2+\beta_n\bigg[2(1-\beta_n)\|h_n-h^*\|\dfrac{\alpha_n}{\beta_n} \|h_n-h_{n-1}\|\\
&+\dfrac{\alpha^{2}_n}{\beta_n}\|h_n-h_{n-1}\|^{2}+2 \| h^*\| \cdot \|w_n-h_{n+1}\| +2 \langle -h^*,h_{n+1}-h^* \rangle  \bigg].
\end{align*}

\end{proof}
Equipped with these preliminary result, we are now ready to proof the main result of this paper, Theorem \ref{th:main}. We restate the theorem below again, for the readers' convenience.  

\begin{theorem}\label{theorem1}
Let Assumptions \ref{ass:1}-\ref{ass:monotone} hold. The sequence of path flows $\{h_n\}_{n=1}^\infty$ generated by Algorithm \ref{alg:IFBF} converges strongly to an element $h^{\ast}\in\Omega$, where $h^{\ast}=\argmin\{\norm{z}:z\in\Omega\}.$
\end{theorem}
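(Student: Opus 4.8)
I would run a Halpern-type argument. Set $h^{\ast}=P_{\Omega}(0)$; this is well defined because $\Omega$ is closed and convex (by Lemma~\ref{002} it is the intersection of $\scrX$ with a family of closed half-spaces), and it is exactly the minimum-norm element of $\Omega$, hence the claimed limit. Writing $a_{n}:=\norm{h_{n}-h^{\ast}}^{2}$, Lemma~\ref{Claim3} already puts the iteration, for $n$ large, in the form $a_{n+1}\le(1-\beta_{n})a_{n}+\beta_{n}b_{n}$ with
\begin{align*}
b_{n}&=2(1-\beta_{n})\norm{h_{n}-h^{\ast}}\tfrac{\alpha_{n}}{\beta_{n}}\norm{h_{n}-h_{n-1}}+\tfrac{\alpha_{n}^{2}}{\beta_{n}}\norm{h_{n}-h_{n-1}}^{2}\\
&\quad+2\norm{h^{\ast}}\norm{w_{n}-h_{n+1}}+2\inner{-h^{\ast},h_{n+1}-h^{\ast}}.
\end{align*}
Since $\sum_{n}\beta_{n}=\infty$ by \eqref{para}, the plan is to apply the subsequential principle of Lemma~\ref{Saejung1}, which reduces strong convergence to showing $\limsup_{k\to\infty}b_{n_{k}}\le 0$ for every subsequence $\{a_{n_{k}}\}$ with $\liminf_{k\to\infty}(a_{n_{k}+1}-a_{n_{k}})\ge 0$. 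The plain summation lemma (Lemma~\ref{lem:summation}) is too weak here, since $b_{n}$ need not have nonpositive $\limsup$ along the whole sequence.

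\textbf{Step 1: the perturbation terms vanish.} Fix such a subsequence. First I would note $\norm{w_{n}-h_{n}}\to 0$: indeed $w_{n}-h_{n}=-\beta_{n}h_{n}+(1-\beta_{n})\alpha_{n}(h_{n}-h_{n-1})$, while $\beta_{n}\to 0$, $\{h_{n}\}$ is bounded (Lemma~\ref{Claim1}), and $\alpha_{n}\norm{h_{n}-h_{n-1}}\le\epsilon_{n}\to 0$ by \eqref{vvxvx1} and \eqref{para}. The same bound $\tfrac{\alpha_{n}}{\beta_{n}}\norm{h_{n}-h_{n-1}}\le\tfrac{\epsilon_{n}}{\beta_{n}}\to 0$ makes the first two summands of $b_{n}$ vanish. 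Next, Lemma~\ref{Claim2} gives
\begin{equation*}
\lambda\Big(1-\mu\tfrac{\tau_{n_{k}}}{\tau_{n_{k}+1}}\Big)\Big(2-\lambda+\lambda\mu\tfrac{\tau_{n_{k}}}{\tau_{n_{k}+1}}\Big)\norm{w_{n_{k}}-y_{n_{k}}}^{2}\le a_{n_{k}}-a_{n_{k}+1}+\beta_{n_{k}}M_{1};
\end{equation*}
since the coefficient on the left tends to $\lambda(1-\mu)(2-\lambda+\lambda\mu)>0$ (Lemma~\ref{lem:lambda}), $\beta_{n_{k}}\to 0$, and $\limsup_{k}(a_{n_{k}}-a_{n_{k}+1})\le 0$, this forces $\norm{w_{n_{k}}-y_{n_{k}}}\to 0$. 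Finally, from $h_{n+1}-w_{n}=\lambda(y_{n}-w_{n})+\lambda\tau_{n}(A(w_{n})-A(y_{n}))$ together with the step-size bound \eqref{xxxv1}, I get $\norm{h_{n_{k}+1}-w_{n_{k}}}\to 0$, hence $\norm{h_{n_{k}+1}-h_{n_{k}}}\to 0$, so the third summand of $b_{n}$ vanishes along $\{n_{k}\}$ as well.

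\textbf{Step 2: the anchor term.} It remains to bound $\limsup_{k}\inner{-h^{\ast},h_{n_{k}+1}-h^{\ast}}$. I would pass to a further subsequence $\{n_{k_{j}}\}$ realizing this $\limsup$, extract (by boundedness) a weak limit $w_{n_{k_{j}}}\rightharpoonup z$, note that $h_{n_{k_{j}}}\rightharpoonup z$ and $h_{n_{k_{j}}+1}\rightharpoonup z$ by Step~1, and invoke Lemma~\ref{003} together with $\norm{w_{n_{k_{j}}}-y_{n_{k_{j}}}}\to 0$ to conclude $z\in\Omega$. Then $\limsup_{k}\inner{-h^{\ast},h_{n_{k}+1}-h^{\ast}}=\inner{-h^{\ast},z-h^{\ast}}\le 0$, the inequality being Lemma~\ref{lem:projection}(i) with $C=\Omega$ and $x=0$ (i.e.\ $\inner{h^{\ast},z-h^{\ast}}\ge 0$ for all $z\in\Omega$). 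Combining the two steps yields $\limsup_{k}b_{n_{k}}\le 0$, and Lemma~\ref{Saejung1} gives $a_{n}\to 0$, i.e.\ $h_{n}\to h^{\ast}$ strongly.

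\textbf{Expected main obstacle.} The crux is coupling the inertial update to the Halpern anchor: the momentum $\alpha_{n}(h_{n}-h_{n-1})$ must be negligible \emph{relative to} $\beta_{n}$, which is exactly what the adaptive cap $\alpha_{n}\le\epsilon_{n}/\norm{h_{n}-h_{n-1}}$ plus $\epsilon_{n}/\beta_{n}\to 0$ buys; this is used both to get $\norm{w_{n}-h_{n}}\to 0$ and to kill the first bracket of $b_{n}$, and without it the momentum would overwhelm the anchoring. Two further technicalities deserve care: the descent-type estimates in Lemmas~\ref{Claim1} and \ref{Claim2} only carry a useful sign once $\mu\tau_{n}/\tau_{n+1}<1$, which holds only asymptotically by Lemma~\ref{lem:lambda}, so the recursion is valid only for $n\ge n_{0}$; and one genuinely needs the subsequential version Lemma~\ref{Saejung1} rather than Lemma~\ref{lem:summation}, because $\norm{w_{n_{k}}-y_{n_{k}}}\to 0$ can only be extracted along subsequences with $\liminf(a_{n_{k}+1}-a_{n_{k}})\ge 0$.
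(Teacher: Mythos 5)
Your proposal is correct and follows essentially the same route as the paper's proof: the same recursion $a_{n+1}\le(1-\beta_n)a_n+\beta_n b_n$ extracted from Lemma~\ref{Claim3}, the same use of Lemma~\ref{Claim2} to force $\norm{w_{n_k}-y_{n_k}}\to 0$ along the distinguished subsequences, Lemma~\ref{003} for identifying the weak cluster point as an element of $\Omega$, and Lemma~\ref{Saejung1} combined with $h^{\ast}=P_{\Omega}(0)$ to close the argument. The only cosmetic difference is that the paper additionally spells out a separate ``Case 1'' handled by Lemma~\ref{lem:summation}, whereas you observe (correctly) that the subsequential lemma alone suffices.
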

\begin{proof} 
We split the proof in two cases. 
Let us call $a_{n}:=\norm{h_{n}-h^{\ast}}^{2}$ and 
\begin{align*}
b_{n}&:=2(1-\beta_n)\|h_n-h^*\|\dfrac{\alpha_n}{\beta_n} \|h_n-h_{n-1}\|\\
&+ \dfrac{\alpha^{2}_n}{\beta_n}\|h_n-h_{n-1}\|^{2}+2 \| h^*\| \cdot \|w_n-h_{n+1}\| +2 \langle -h^*,h_{n+1}-h^* \rangle ,
\end{align*}
so that Lemma \ref{Claim3} boils down to the recursion
\[
a_{n+1}\leq(1-\beta_{n})a_{n}+\beta_{n}b_{n}.
\]
We split the proof in two case.\\
{\bf Case 1:} There exists $N_{0}\in\N$ such that $a_{n+1}\leq a_{n}$ for all $n\geq N_{0}$. Then, $\lim\inf_{n\to\infty}(a_{n+1}-a_{n})=0$, and it follows $\lim\sup_{n}b_{n}\leq 0$. The conclusion follows from Lemma \ref{lem:summation}. \\
{\bf Case 2:} By Lemma \ref{Saejung1}, it suffices to show that 
\[
\limsup_{k\to\infty}\langle -h^*, h_{n_k+1}-h^*\rangle\le 0
\]
for every subsequence $\{\|h_{n_k}-h^*\|\}_{k=1}^\infty$ of $\{\|h_n-h^*\|\}_{n=1}^\infty$ satisfying
\begin{equation*}
\liminf_{k\to\infty}(\|h_{n_k+1}-h^*\|-\|h_{n_k}-h^*\|)\geq 0.
\end{equation*}

For this, suppose that $\{\|h_{n_k}-h^*\|\}_{k=1}^\infty$ is a subsequence of $\{\|h_n-h^*\|\}_{n=1}^\infty$ such that
$\liminf_{k\to\infty}(\|h_{n_k+1}-h^*\|-\|h_{n_k}-h^*\|)\geq 0.$ Then
\begin{equation*}
\liminf_{k\to\infty}(\|h_{n_k+1}-h^*\|^2-\|h_{n_k}-h^*\|^2)=\liminf_{k\to\infty}[(\|h_{n_k+1}-h^*\|-\|h_{n_k}-h^*\|)(\|h_{n_k+1}-h^*\|+\|h_{n_k}-h^*\|)]\geq 0.
\end{equation*}
By Lemma \ref{Claim2} we obtain
\begin{align*}
\limsup_{k\to\infty}&\bigg[\lambda\Big(1-\mu\dfrac{\tau_{n_k}}{\tau_{n_k+1}}\Big)\Big(2-\lambda+\lambda\mu \dfrac{\tau_{n_k}}{\tau_{n_k+1}} \Big)\|w_{n_k}-y_{n_k}\|^2\bigg]\\
&\le\limsup_{k\to\infty}\bigg[ \|h_{n_k}-h^*\|^2-\|h_{{n_k}+1}-h^*\|^2+\beta_{n_k} M_1\bigg]\\
&\le \limsup_{k\to\infty}\bigg[ \|h_{n_k}-h^*\|^2-\|h_{n_k+1}-h^*\|^2\bigg]+\limsup_{k\to\infty}\beta_{n_k} M_1\\
&=-  \liminf_{k\to\infty}\bigg[ \|h_{n_k+1}-h^*\|^2-\|h_{_{n_k}}-h^*\|^2\bigg]
\le 0.
\end{align*}
This implies that \begin{equation}\label{h1v}
\lim_{k\to\infty} \|y_{n_k}-w_{n_k}\|=0.
\end{equation}
On the other hand, we have
\begin{align}
\|h_{n+1}-y_n\|&=\|(1-\lambda)(w_n-y_n)+\lambda \tau_n (Ay_n-Aw_n)\|\notag\\ 
&\le (1-\lambda)\|w_n-y_n\|+\lambda\tau_n\|Ay_n-Aw_n\|\notag\\
&\le (1-\lambda)\|w_n-y_n\|+\lambda \dfrac{\tau_n}{\tau_{n+1}}\|w_n-y_n\|\notag\\
&=(1-\lambda+\lambda\mu\dfrac{\tau_n}{\tau_{n+1}})\|w_n-y_n\|. \label{h2v}
\end{align}
Combining \eqref{h1v} and \eqref{h2v} we get
\begin{equation}\label{h3v}
\lim_{k\to\infty} \|h_{n_k+1}-y_{n_k}\|=0.
\end{equation}
Also from \eqref{h1v} and \eqref{h3v}, it holds
\begin{equation}\label{ttt5}
\lim_{k\to\infty} \|h_{n_k+1}-w_{n_k}\|=0.
\end{equation}
Now, we show that
\begin{equation}\label{ttt8}
\|h_{n_k+1}-h_{n_k}\| \to 0\,\, \text{  as }\,\, n\to\infty.
\end{equation}
Indeed, we have
\begin{equation}\label{tttti1}
\|h_{n_k}-w_{n_k}\|=\alpha_{n_k} \|h_{n_k}-h_{n_k-1}\|=\beta_{n_k}\cdot\dfrac{\alpha_{n_k}}{\beta_{n_k}} \|h_{n_k}-h_{n_k-1}\|\to 0.
\end{equation}
From (\ref{ttt5}) and (\ref{tttti1}), we get
\begin{equation*}
\|h_{n_k+1}-h_{n_k}\|\le \|h_{n_k+1}-w_{n_k}\|+\|w_{n_k}-h_{n_k}\|\to 0.
\end{equation*}
 Since the sequence $\{h_{n_k}\}_{k=1}^\infty$ is bounded, it follows that there exists a subsequence
   $\{h_{n_{k_j}}\}_{j=1}^\infty$ of $\{h_{n_k}\}_{k=1}^\infty$, which converges weakly to some $z^*\in H$, such that
\begin{equation}\label{444t}
\limsup_{k\to \infty}\langle -h^*,h_{n_k}-h^*\rangle =\lim_{j\to \infty}\langle -h^*,h_{n_{k_j}}-h^*\rangle=\langle -h^*,z^*-h^*\rangle.
\end{equation}
From (\ref{tttti1}), we obtain
\begin{equation*}
w_{n_k} \rightharpoonup z^* \text{ as } k\to \infty.
\end{equation*}
Using Lemma \ref{003}, we conclude 
$
z^*\in \Omega. 
$
Next, since (\ref{444t}) and the definition of $h^*=P_{\Omega}(0)$, we have
\begin{equation}\label{555t}
\limsup_{k\to \infty}\langle -h^*,h_{n_k}-h^*\rangle =\langle -h^*,z^*-h^*\rangle\le 0.
\end{equation}
Combining  (\ref{ttt8}) and (\ref{555t}), we have
\begin{align}\label{5555t}
\limsup_{k\to \infty}\langle -h^*,h_{n_k+1}-h^*\rangle &
\le \limsup_{k\to \infty}\langle -h^*,h_{n_k}-h^*\rangle\notag\\
&=\langle -h^*,z^*-h^*\rangle\le 0.
\end{align}
Hence, by (\ref{5555t}), $\lim_{n\to\infty}\dfrac{\alpha_n}{\beta_n}\|h_n-h_{n-1}\|=0$, $\lim_{n\to\infty}\|h_{n+1}-w_n\|=0$, Lemma \ref{Claim3} and Lemma \ref{Saejung1}, we obtain the desired result, $\lim_{n\to\infty}\|h_n-h^*\|=0$.
\end{proof}

\section{Conclusion}
\label{sec:conclusion}
This paper builds on recent advances in the computational theory of dynamic user equilibrium. Building on the network extension of the LWR model and its formulation in terms of a system of differential algebraic equations. Our aim is to advocate the use of strongly convergent fixed point iterations for computing dynamic user equilibrium which are provably convergent under mild a-priori monotonicity assumptions on the path delay operator, and which are adaptive in the sense that no global bound on the Lipschitz constant needs to be known. We focussed on the construction of new strongly convergent forward-backward-forward algorithms, augmented by relaxation and inertial modifications. We tested the performance of our algorithms in the Nguyen and the Sioux falls network, and provide thereby evidence that our methods improve upon pre-implemented solvers. In future research we aim to improve the fixed point iteration by reducing its complexity in terms of calls of the DNL subroutine. Indeed, the price to pay for provable convergence under weaker assumptions is that under FBF splitting we have to evaluate the delay operator twice per iterations, which is computationally costly. The FB iteration needs only a single call of the DNL, but converges only under strong monotonicity assumptions. In a future publication we will describe a single-call variant of the FBF, which still guarantees strong convergence under the same monotonicity assumptions as used in this paper. 

Another very interesting direction of research we plan to pursue is to replace the costly DNL procedure by alternative approximation schemes motivated by machine learning approaches as commenced in \cite{Song2017}. 


%
%




\end{document}